\def\id{{\rm I}}
\newcommand{\de}[1]{\left( #1 \right)}
\newcommand{\De}[1]{\left[#1\right]}
\newcommand{\DE}[1]{\left\{#1\right\}} 
\newcommand{\ket}[1]{\left| #1 \right\rangle}
\newcommand{\ketbra}[2]{\left|#1\middle\rangle\middle\langle#2\right|}
\newcommand{\beq}{\begin{equation}}
\newcommand{\eeq}{\end{equation}}
\newtheorem{definition}{Definition}
\newtheorem{lemma}{Lemma}
\newtheorem{assumptions}{Assumptions}
\newtheorem{prop}{Properties}
\newtheorem*{thm6}{Theorem \ref{thm:H2}}
\newcommand{\ie}{\textit{i.e.}}
\newcommand{\ep}{\epsilon}
\newcommand{\gm}[1]{\textcolor{black}{#1}}
\newtheorem{theorem}{Theorem}
\begin{document}

\title{Towards a realization of device-independent quantum key distribution}

\author{G. Murta$^1$, S.B. van Dam$^{1,2}$,
J. Ribeiro$^1$, R. Hanson$^{1,2}$, and S. Wehner$^1$}
\address{$^1$ QuTech, Delft University of Technology, Lorentzweg 1, 2628 CJ Delft, The Netherlands}
\address{$^2$ Kavli Institute of Nanoscience, Delft University of Technology, Lorentzweg 1, 2628 CJ Delft, The Netherlands}
\ead{glauciamg.fis@gmail.com}

\begin{abstract}
In the implementation of device-independent quantum key distribution we are interested in maximizing the key rate, i.e. the number of key bits that can be obtained per signal, for a fixed security parameter. In the finite size regime, we furthermore also care about the minimum number of signals required before key can be obtained at all. Here, we perform a fully finite size analysis of device independent protocols using the CHSH inequality both for collective and coherent attacks. For coherent attacks, we sharpen the results recently derived in Arnon-Friedman \textit{et al.}, Nat. Commun. 9, 459 (2018) \cite{EATpublish}, to reduce the minimum number of signals before key can be obtained. In the regime of collective attacks, where the devices are restricted to have no memory, we employ two different techniques that exploit this restriction to further reduce the number of signals. 
We then discuss experimental platforms in which DIQKD may be implemented. We analyse Bell violations and expected QBER achieved in previous Bell tests with distant setups and situate these parameters
in the security analysis. Moreover, focusing on one of the experimental platforms, namely nitrogen-vacancy based systems, we describe experimental improvements that can lead to a device-independent quantum key distribution implementation in the near future.	
\end{abstract}


\maketitle

\section{Introduction}

\subsection{Quantum key distribution}

Quantum key distribution (QKD) \cite{BB84, E91} is a remarkable example of the advantages that quantum systems bring to accomplishing  classical tasks.
All the classical crypto-systems used for key exchange are based on computational assumptions and, therefore, are susceptible to retroactive attacks.  
 Indeed, if an adversary keeps track of the public information exchanged during the communication of an encrypted message and, in a later future, a more efficient algorithm or faster machines become available, then the messages exchanged in the past can be decrypted.
The novelties brought by quantum systems allow two parties to establish a common key that is  information-theoretically secure and, therefore, can be used to achieve perfect secure communication with a one-time pad encryption. 

Quantum key distribution schemes explore intrinsic properties of quantum systems, such as no-cloning \cite{Dieks82,WZ82} and monogamy of entanglement \cite{CKW00}, in order to achieve security even against an all powerful adversary who has unlimited computational power.  
The well known quantum key distribution scheme BB84 \cite{BB84}
can tolerate a reasonable amount of noise and decent rates\footnote{Due to finite size effects a minimal number of rounds is required in order to guarantee security. For the BB84 protocol this minimal number of rounds required is $\sim 10^4$. Moreover, a quantum bit error rate (QBER) of up to $20\%$ can be tolerated \cite{GL03,Chau02} for large enough number of rounds.} can be achieved with current technology, see for example the analyses of \cite{TLGR12,HT12,RigorousQKD}. BB84-based QKD has been successfully implemented over long distances, see for example \cite{Hiskett06,Korzh15},
and even satellite-based secure quantum communication was established \cite{Liao18}.

A successful implementation of the BB84 protocol is, however, highly dependent on a good characterisation of the underlying quantum system and the measurement devices. For example, the protocol can easily be broken if the devices are performing measurements in four dimensional systems instead of qubits, see discussions in \cite{MayersYao98, PAB09}. Furthermore, hacking of existent implementations that exploit experimental imperfections were presented (see e.g. \cite{Zhao08,Lydersen,Weier,Gerhardt11}).

A good characterization of the experimental setup is a strong assumption. What is more, when quantum technologies become commercially  available, we might often  buy devices from a provider which is not entirely trustworthy. 
Fortunately, quantum properties allow us to overcome this problem: By exploring the strong correlations that arise in quantum systems, one can prove security of quantum key distribution even in the very adversarial scenario where Alice and Bob  \gm{do not have complete knowledge of the internal working} 
of their measurement devices  or the underlying quantum system that they are measuring \cite{PAB09,EATpublish,AGM06,AMP06,SGB+06,Mas09,HRW10,PMLA13,PAB07,MPA11,HR10,MRC+14,BCK12b,RUV12,VV14,MS14}. This is the \emph{device-independent} (DI) model. 

\subsection{The device-independent scenario}

The device-independent scenario models the underlying system and measurement devices as black boxes where the only relevant information is the statistics of inputs and outputs. Therefore, no assumptions on the dimension of the quantum systems or the particular measurements performed by the devices are required. 
This represents a significant relaxation of the assumptions present in an implementation of the BB84 protocol. However, it is important to remark which assumptions remain present in any implementation of a DI protocol.

\begin{assumptions}[Device-independent model]\label{assump:DI}
In the device-independent model we assume:
\begin{enumerate}
\item\label{assump:labs} Isolated labs: no information is leaked from or enters Alice's and Bob's labs, apart from the state distribution before the measurements and the public classical information dictated by the protocol.
\item\label{assump:source} Isolated source: the preparation of states is independent of the measurements.
\item\label{assump:authentication} Trusted classical post-processing: all the public classical communication is performed using an authenticated channel and the local classical computations are trusted.
\item\label{assump:RNG} Trusted Random Number Generators: Alice and Bob possess independent and trusted random number generators. 
\end{enumerate}
\end{assumptions}

A bit of thought can make one conclude that completely removing any of these assumptions leads to a strategy where the key is leaked to the adversary. However, we remark that partial relaxation of these assumptions can still be considered. In Ref.~\cite{Unruh13}, QKD is proved to achieve everlasting security by relaxing Assumption~\ref{assump:DI}\textit{(\ref{assump:authentication})}
 to a computationally secure authenticated channel, but assuming the eavesdropper to be computationally bounded during the execution of the protocol.
 In many device independent protocols, instead of Assumption~\ref{assump:DI}\textit{(\ref{assump:source})}, it is assumed that all the $n$ systems are prepared before the measurement phase starts, so that no information other than the classical public communication is exchanged during the protocol. However, this would require quantum memory from Alice and Bob in order to store the quantum states along the protocol. In an implementation where the quantum states are generated round by round, and therefore in which no long term quantum  memory is required,
  Assumption~\ref{assump:DI}\textit{(\ref{assump:source})} is necessary to avoid that the state prepared by the source leaks the raw bits generated by Alice's device in the previous round. Indeed, if the source is arbitrarily  correlated with the measurement devices the state prepared can contain an additional degree of freedom that encodes the string of bits generated in the previous rounds (this strategy is detailed in \cite[Appendix C]{2partyJeremy}). 
We remark that, in experimental platforms, 
the preparation of states and the measurements are either performed 
within the same systems or optically connected ones, and therefore one needs to assume that the process of generating a quantum state is not correlated with the previously performed measurements. This assumption is, however, often well justified based on a  description of the setup.
\gm{Ref.~\cite{BCK13} addresses the problem of hidden memory in the devices. The authors show that a malicious eavesdropper can program the measurement devices in such a way that information about a previously generated key  may be leaked through the public communication of a subsequent run of the key generation protocol, if the devices are re-used. 
Ref.~\cite{CL19} proposes an alternative to overcome memory attacks and covert channels in general, as well as the need to assume that all the classical post-processing is trusted. By introducing protocols based on secure multi-party computation distributed among more devices, ref.~\cite{CL19} relaxes the black-box model to reliability of only one of the quantum devices. Moreover, the classical post-processing can tolerate up to a third of malicious classical devices. }
  \vspace{1em}

Another assumption that is often used in security proofs is that the rounds of the experiment are \textit{independent and identically distributed} (IID). This, in particular, implies that the measurement devices are memoryless and the state shared by Alice and Bob is the same for every round on the protocol. The IID assumption can be justified, for example, in experimental setups where Alice and Bob control to some extent the source and measurement devices, but do not have a full characterization of their working.

\begin{assumptions}[IID assumption]\label{assump:iid}
An IID implementation assumes:
\begin{itemize}
\item IID devices: the devices behave independently and in the same way in every round of the protocol. 
\item IID states: The state distributed is the same for every round of the protocol. In summary, the state of the $n$ rounds can be written as $\rho_{A_1^nB_1^nE}=\rho_{ABE}^{\otimes n}$. 
\end{itemize}
\end{assumptions}

 The eavesdropper attacks in QKD are classified in three types: \emph{Individual attacks}, where the eavesdropper has no memory and therefore is restricted to attack individually each round of the protocol; \emph{Collective attacks:} where in every round the systems of Alice and Bob, as well as the measurement devices, are prepared identically but the eavesdropper is allowed to make arbitrary global operations on her quantum side information; and  \emph{Coherent attacks:} additionally to the global operations the eavesdropper can perform in her quantum side information,  the states shared by Alice and Bob in each round can be arbitrarily correlated, as well as the measurement devices in the DI scenario
 can have memory and operate according to the results of previous rounds, {\textit{i.e.}, do not satisfy  the IID assumption.}
The IID assumption, stated in Assumptions~\ref{assump:iid}, corresponds to the scenario where the eavesdropper is restricted to collective attacks.
In what follows we focus on two types of adversarial attacks: collective attacks and coherent attacks.

\subsection{Device-independent quantum key distribution protocols}

The first ideas of device-independent QKD arose in the E91 protocol \cite{E91}, which uses a test of the CHSH inequality \cite{CHSH}  in order to certify that Alice and Bob share a maximally entangled state. This idea of self-testing quantum devices was further explored in \cite{MayersYao98}. Indeed, 
device-independent quantum key distribution relies on the violation of a Bell inequality in order to certify the security of the generated key. The simplest DIQKD protocol uses the CHSH inequality  for the security test:
	\begin{equation}
	\beta=\left<A_0B_0\right>+\left<A_0B_1\right>+\left<A_1B_0\right>-\left<A_1B_1\right>\leq 2, 
	\end{equation}
	where $\left<A_xB_y\right>=p(a=b|xy)-p(a\neq b|xy)$ represents the correlation of the outputs $a,b$ of Alice and Bob when they perform the measurement labeled by $x,y$ respectively. The CHSH inequality can be phrased  as a game \cite{CHTW04} in which Alice and Bob receive $x$ and $y$, respectively, as inputs and the winning condition is that their outputs satisfy  $a+b=x\cdot y$, with the operations $+,\cdot$ taken  modulo 2. The winning probability $\omega$ of the CHSH game relates to the violation $\beta$ by 
	\begin{equation}
	\omega=\frac{4+\beta}{8}.
	\end{equation}

For DIQKD based on the CHSH inequality, we consider protocols where Alice possesses a device with two possible inputs $X\in\DE{0,1}$ and Bob has a device with three possible inputs $Y \in \DE{0,1,2}$. The inputs $X\in\DE{0,1}$ and $Y \in \DE{0,1}$
	are used to test for the CHSH inequality, and the inputs $X=0$ and $Y=2$ are used for the other rounds, often called key generation rounds, where maximal correlation of the outputs is expected.
	The parameters of interest are the Bell violation $\beta$, or winning probability $\omega$, achieved in the test rounds and the quantum bit error rate (QBER) $Q$ of the key generation rounds.
	 We consider that an implementation of the protocol is expected to have $n$ rounds and a portion $\gamma n$ of these rounds is used for testing of the CHSH condition.
	
	A DIQKD protocol can be divided in three phases:
	\begin{itemize}
		\item An initial phase where Alice and Bob use their respective devices to measure the quantum systems and, according to the obtained outputs, generate the $n$-bit strings $A_1^n$ and $B_1^n$.
		\item  A second phase where Alice and Bob publicly exchange classical information in order to perform \emph{error correction}, to correct their respective strings generating the raw keys; and \emph{parameter estimation}, to estimate the parameters of interest (Bell violation, $\beta$, and QBER, $Q$). At the end of this phase Alice and Bob are supposed to share equal $n$-bit strings and have an estimate of how much knowledge an eavesdropper might have about their raw key.
		\item In the final phase, Alice and Bob perform \emph{privacy amplification}, where the not fully secure $n$-bit strings are mapped into smaller strings $K_A$ and $K_B$, which represents the final keys of Alice and Bob respectively.
\end{itemize}	
	
	The specific protocols we consider for our analyses are detailed in Section~\ref{sec:results}, (see Protocol \ref{prot:diqkd} and Protocol \ref{prot:diqkdIID}).
	
	In order to define security of a DIQKD protocol, we follow Refs.~\cite{DIEAT,EATpublish} and adopt the security definition that is universally composable for standard QKD protocols \cite{PR14}. Universal composability is the statement that a protocol remains secure even if it is used arbitrarily in composition with other protocols.
It is important to remark that, for the device-independent case,  attacks proposed in Ref.~\cite{BCK13} show that composability is not achieved if the same devices are re-used for generation of a subsequent key. Indeed, in \cite{BCK13},  the authors have shown that a malicious eavesdropper can program the measurement devices in such a way that information about a previously generated key  may be leaked through the public communication of a subsequent run of the key generation protocol, if the devices are re-used. It is still an open problem what is the minimum set of assumptions that can lead to universal composability of DIQKD (e.g. the attacks of Ref.~\cite{BCK13} can be avoided if \gm{we assume that Alice and Bob have sufficient control over the existing internal memory of their devices, so that they can re-set it} 
after an execution of the protocol).

Let $K_A$ and $K_B$ denote the final key held by Alice and Bob, respectively, after they perform a DIQKD protocol.	
	A DIQKD protocol is secure if it is \emph{correct} and \emph{secret}. Correctness is the statement that Alice and Bob share the same key at the end of the protocol, \textit{i.e.}, $K_A=K_B$. Secrecy is the statement that the eavesdropper is totally ignorant about the final key.

\begin{definition}[Correctness]\label{def:correct}
A DIQKD protocol is $\epsilon_{corr}$-correct if the probability that the final key of Alice, $K_A$, differs from the final key of Bob, $K_B$, is smaller than $\epsilon_{corr}$, \textit{i.e.}
\begin{eqnarray}
P(K_A\neq K_B)\leq \epsilon_{corr}.
\end{eqnarray}
\end{definition}

\begin{definition}[Secrecy]\label{def:secret}
Let $\Omega$ denote the event of not aborting in a DIQKD protocol and $p(\Omega)$ be the probability of the event $\Omega$. The protocol is $\epsilon_{sec}$-secret if, for every initial state $\rho_{ABE}$ it holds that
\begin{eqnarray}\label{eq:secret}
p(\Omega)\cdot \frac{1}{2}{\|{\rho_{K_AE}}_{|\Omega}-{\tau_{K_A}\otimes \rho_E}\|}_1\leq \epsilon_{sec},
\end{eqnarray}
where $\tau_{K_A}=\frac{1}{|K_A|}\sum_{k}\ketbra{k}{k}_A$ is the maximally mixed state in the space of strings $K_A$, and ${\|\cdot \|}_1$ is the trace norm.
\end{definition}

If a protocol is  $\epsilon_{corr}$-correct and $\epsilon_{sec}$-secret, then it is $\epsilon^s_{DIQKD}$-correct-and-secret for any $\epsilon^s_{DIQKD}\geq \epsilon_{corr}+\epsilon_{sec}$. See Section \ref{sec:security} for a more detailed definition of security of a DIQKD protocol.

	Given an DIQKD protocol that has $n$ rounds and generates a final correct-and-secret key of $l$ bits, then the secret key rate is defined as
	\begin{equation}
	r=\frac{l}{n}.
	\end{equation}
Our goal is to derive the secret key rate as a function of the parameters of interest, $\beta$ and $Q$, that Alice and Bob can estimate during the execution of the protocol.

\subsection{Security proof of DIQKD}

Even though the BB84 quantum key distribution scheme dates back to 1984 \cite{BB84}, the formal security proof in the asymptotic regime only came out more than a decade later, see e.g.  \cite{May96,May00,LoChau99,SP00}. Security in the composable paradigm in the finite regime against general coherent attacks was only formalized in 2005 \cite{RK05,KR05, RennerThesis}. Moreover, a finite key analysis without the IID assumption over the state preparation and with parameters compatible with current technology only came in 2012 \cite{TLGR12,HT12}.

In the device-independent scenario, security against a quantum eavesdropper\footnote{A discussion on earlier security proofs that do not restrict the eavesdropper to the quantum formalism can be found in \cite{NLreview}.} restricted to collective attacks was first proved in \cite{PAB07, PAB09}. A proof against general attacks assuming memoryless devices was presented in \cite{MPA11,HR10}.  
The problem of extending the security proofs to coherent attacks in the device-independent scenario remained open for a long time. One of the main difficulties is that de Finetti techniques \cite{KR05, CKR09}, used to extend security proofs against collective attacks to general coherent attacks in standard QKD, are not applicable in the DI scenario.
A series of recent works \cite{BCK12b,RUV12,VV14,MS14} culminated in the \emph{Entropy Accumulation Theorem} (EAT) \cite{EATpublish} (see \cite{EAT,DIEAT} for extended versions). The EAT allows one to extend the analysis against collective attacks to the fully device-independent scenario, resulting in asymptotically tight security proofs and high rates in the finite size regime.

\subsection{Experimental DIQKD}\label{sec:introexp}
Protocols for DIQKD rely on a Bell test between two distant parties \cite{PAB09}. In order to certify security, this Bell test should be free of loopholes that could be exploited by an adversary.
While closing  the detection loophole is crucial for a DIQKD implementation, 
the spacelike separation required for loophole-free Bell tests can be relaxed.
In a DIQKD experiment, no-communication between the devices does not have to be guaranteed by spacelike separation, since the assumption of isolated labs, Assumption~\ref{assump:DI}\textit{(\ref{assump:labs})}, is already needed to ensure that the generated key is not leaked to the eavesdropper at any point in time. 
We are thus interested in considering Bell violations between distant - albeit not necessarily spacelike separated - setups in which the detection-loophole is closed \cite{BellDelft,BellVienna,BellNIST,Rosenfeld2017,Giustina2013,Christensen2013,Matsukevich2008,Pironio2010}. The recent performance of fully loophole-free Bell tests \cite{BellDelft,BellVienna,BellNIST,Rosenfeld2017} mark technological progress towards Bell tests without detection loophole over increasingly distant setups, as needed for practically useful DIQKD.

Despite the experimental progress, a device-independent quantum key distribution protocol has not yet been performed. The reason for this is that a Bell violation alone is not enough to guarantee security in a DIQKD protocol. One also needs to account for the amount of information leaked during the error correction, when Alice and Bob correct their string of bits in order to achieve a perfectly correlated raw key. The amount of information required for error correction is determined by the QBER. With a finite QBER, as in practical systems, a large Bell violation is needed to achieve a positive key rate. Moreover, a high minimal number of rounds is required for security due to finite-size effects. The large number of necessary rounds requires a significantly high entangling rate. Altogether, DIQKD demands a low QBER, high Bell violation and high entangling rates. Even though some systems satisfy parts of these requirements, e.g. a high Bell violation \cite{Matsukevich2008,BellDelft,Rosenfeld2017, Pironio2010} or high entangling rate \cite{Giustina2013,Christensen2013,BellVienna,BellNIST}, so far there are no systems that combine all requirements. In section \ref{sec:experiments} we describe the potential platforms for an experimental implementation of DIQKD in detail.

\section{Results}\label{sec:results}

We now present our results. In Section~\ref{sec:rates}, we establish the key rates for DIQKD protocols based on the CHSH inequality, both for coherent and collective attacks in the finite size regime. As a benchmark, in Section \ref{sec:depmodel}, we compare   the key rates that can be achieved in the finite regime for the two adversarial scenarios (collective and coherent attacks) using an implementation with depolarizing noise. In Section \ref{sec:experiments}, we discuss the state of the art of experimental implementations.
	We estimate the parameters of interest for previously performed Bell experiments and situate them in the security proofs.
Additionally, focusing on Nitrogen-vacancy based systems we indicate experimental improvements that can lead to an implementation of DIQKD in the near future.
Throughout this manuscript we use $\mbox{Log}_{10}$ to denote logarithm to base 10 and $\log$ to denote logarithm to base 2.

\subsection{Key Rates}\label{sec:rates}

In the following, we derive the key rates in the finite size regime for DIQKD protocols where the CHSH inequality is used for certifying security. 
For coherent attacks we sharpen the results recently derived in \cite{EATpublish}. For collective attacks we perform the analysis by employing two techniques: the finite version of the asymptotic equipartition property \cite{TCR09} and the additivity of the $2$-R\'enyi entropy.

\subsubsection{Key rates for coherent attacks.}

In order to analyze the key rates against general coherent attacks we use the recently developed entropy accumulation theorem (EAT) \cite{EAT,DIEAT,EATpublish} and consider the following protocol.

\begin{algorithm}[H]
	\floatname{algorithm}{Protocol}
	\caption{DIQKD Protocol for coherent attacks \cite{DIEAT}}\label{prot:diqkd}
	\begin{algorithmic}[1]
		 \For { For every block $j \in [m]$} 
		\State Set $i=0$ and $C_j=\bot$.
		\While {$i \leq s_{max}$} 
		\State Set $i=i+1$.
		\State Alice and Bob choose a random bit $T_i \in \DE{0,1}$ such that $P(T_i=1)=\gamma$.
		\If {$T_i=0$} Alice and Bob choose inputs $(X_i, Y_i)=(0,2)$. 
		\Else { they choose  $X_i ,Y_i \in \DE{0,1}$  (the observables for the CHSH test).}
		\EndIf
		\State Alice and Bob use their devices with the respective inputs and record their outputs, $A_i$ and $B_i$ respectively.
		\State If $T_i=1$ they  set $i=s_{max}+1$.
		\EndWhile
			\EndFor
	\State \textbf{Error Correction:} Alice and Bob apply the error correction protocol $EC$, communicating script $O_{EC}$ in the process. If $EC$ aborts they
		abort the protocol, else they obtain raw keys $\tilde{A}_1^n$ and $\tilde{B}_1^n$.
		\State \textbf{Parameter estimation:} Using $B_1^n$ and $\tilde{B}_1^n$, Bob sets 
		\begin{eqnarray}
		C_i= \left\{\begin{array}{@{}l@{\quad}l}
		1, & \mbox{if  $T_i=1$  and $A_i\oplus B_i=X_i\cdot Y_i$} \\[\jot]
		0, & \mbox{if $T_i=1$  and $A_i\oplus B_i\neq X_i\cdot Y_i$}\\[\jot]
		\bot, & \mbox{if $T_i=0$}
		\end{array}\right.
		\end{eqnarray}
		
		He aborts if 
		\begin{eqnarray*}
			\sum_j C_j<m\times \de{\omega_{exp}-\delta_{est}}(1-(1-\gamma)^{s_{\max}}),
		\end{eqnarray*}
		\ie, if they do not achieve the expected violation. 
		\State \textbf{Privacy Amplification:} Alice and Bob apply the privacy amplification protocol $PA$ and obtain the final keys $K_A$ and $K_B$ of length $l$.
	\end{algorithmic}
\end{algorithm}

In Protocol \ref{prot:diqkd}, the total number of rounds is not fixed in advance, however for a number of blocks $m$ large enough the number of rounds will correspond, with high probability, to the expected value $n$. This is a technicality introduced in Ref.~\cite{DIEAT,EATpublish} in order to obtain better rates in the finite regime. A more detailed explanation can be found in \cite[Appendix B]{DIEAT}. Improvements on the second order term of the entropy accumulation theorem, that do not rely on the introduction of blocks, were recently obtained in \cite{ImproveEAT}.
Following the techniques of \cite{DIEAT,EATpublish}, we derive Theorem \ref{thm:rateEAT}.

\begin{theorem}[Key rates for coherent attacks]\label{thm:rateEAT}
	Either Protocol \ref{prot:diqkd} aborts with probability higher than $1-(\ep_{EA}+\ep_{EC})$, or it generates a
	$(2\ep_{EC}+\ep_{PA}+\ep_s)$-correct-and-secret key  of length
	\begin{eqnarray}
	l\geq& \frac{{n}}{\bar{s}}\eta_{opt} -\frac{{n}}{\bar{s}}h(\omega_{exp}-\delta_{est}) -\sqrt{\frac{{n}}{\bar{s}}}\nu_1  -\mbox{leak}_{EC} \\
	& \quad -3\log\de{1-\sqrt{1-\de{\frac{\epsilon_s}{4(\epsilon_{EA} + \epsilon_{EC})}}^2}}+2\log\de{\frac{1}{2\epsilon_{PA}}},\nonumber
	\end{eqnarray}
	where $\mbox{leak}_{EC}$ is the leakage due to error correction step and the functions $\bar{s}$, $\eta_{opt}$, $\nu_1$ and $\nu_2$ are specified in Table \ref{tab:parametersThmEAT}.
\end{theorem}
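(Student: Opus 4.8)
The plan is to follow the entropy-accumulation security framework of Refs.~\cite{DIEAT,EATpublish}, splitting the claim into a correctness part, a secrecy part handled by privacy amplification, and a smooth-min-entropy lower bound supplied by the EAT. For correctness, I would first argue that, conditioned on the error-correction routine $EC$ not aborting, the strings $\tilde{A}_1^n$ and $\tilde{B}_1^n$ coincide except with probability at most $\epsilon_{EC}$; together with a hash-based verification of agreement this yields the stated $2\epsilon_{EC}$ contribution to the correctness error. The completeness side of the dichotomy, namely that an honest implementation passes Bob's test except with probability controlled by $\epsilon_{EA}$, then combines with the error-correction abort to produce the $1-(\epsilon_{EA}+\epsilon_{EC})$ threshold.

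For secrecy I would invoke the quantum Leftover Hash Lemma \cite{TCR09}: hashing the $n$-bit raw key with a two-universal family down to length $l$ produces a key that is $\epsilon_{PA}$-secret from Eve provided
\[
l \leq H_{\hmin}^{\epsilon_s}(A_1^n \mid E\, O_{EC})_{\rho_{|\Omega}} - 2\log\de{\frac{1}{2\epsilon_{PA}}} .
\]
A chain rule for the smooth min-entropy then strips off the error-correction transcript $O_{EC}$ at the price of its communicated length,
\[
H_{\hmin}^{\epsilon_s}(A_1^n \mid E\, O_{EC}) \geq H_{\hmin}^{\epsilon_s}(A_1^n \mid E) - \mbox{leak}_{EC},
\]
which accounts for the $-\mbox{leak}_{EC}$ term and, after bookkeeping of the smoothing parameters, for the $+2\log(1/2\epsilon_{PA})$ correction.

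The heart of the argument is to lower bound $H_{\hmin}^{\epsilon_s}(A_1^n \mid E)$, conditioned on the non-aborting event $\Omega$, by the EAT applied to the $m=n/\bar{s}$ blocks of Protocol~\ref{prot:diqkd}. This demands constructing an affine min-tradeoff function that lower bounds the per-block von Neumann entropy as a function of the block's CHSH score; evaluating it at the test threshold $\omega_{exp}-\delta_{est}$ furnishes the leading accumulated-entropy term $\frac{n}{\bar{s}}\eta_{opt}$, the binary-entropy penalty $\frac{n}{\bar{s}}h(\omega_{exp}-\delta_{est})$ absorbs the information revealed in the test rounds inside each block, and the EAT second-order (variance) term produces the $\sqrt{n/\bar{s}}\,\nu_1$ correction, with $\nu_2$ entering the variance estimate. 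The event $\Omega$ is enforced by Bob's abort condition $\sum_j C_j < m(\omega_{exp}-\delta_{est})(1-(1-\gamma)^{s_{\max}})$, and conditioning the smooth min-entropy on $\Omega$ is precisely what generates the term $-3\log\de{1-\sqrt{1-\de{\epsilon_s/4(\epsilon_{EA}+\epsilon_{EC})}^2}}$. Summing the individual failure events yields the $(2\epsilon_{EC}+\epsilon_{PA}+\epsilon_s)$ security.

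The main obstacle will be the construction and optimization of the min-tradeoff function for the block-structured protocol. One must lift the single-round CHSH entropy bound---essentially $1-h$ of a function of the violation---to the level of an entire block of up to $s_{\max}$ generation rounds plus one test round, relate the block statistics back to the per-round winning probability, and then tune the slope of the affine function to balance the leading rate $\eta_{opt}$ against the variance-driven $\sqrt{n}$ penalty encoded in $\nu_1$ and $\nu_2$. Keeping every smoothing and failure parameter mutually consistent, so that their sum matches the composable definition of a $(2\epsilon_{EC}+\epsilon_{PA}+\epsilon_s)$-correct-and-secret key, is the accompanying bookkeeping difficulty.
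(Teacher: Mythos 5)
Your high-level skeleton (correctness from the error-correction check, secrecy via the Leftover Hash Lemma, a chain rule to strip the transcript $O_{EC}$ at cost $\mbox{leak}_{EC}$, and the EAT with an affine min-tradeoff function) matches the paper, but you are missing the central structural step, and as a consequence you misattribute three of the terms in the key-length formula. The paper does not apply the EAT directly to $H_{\min}^{\epsilon}(\vec{A}_1^m|\vec{X}_1^m\vec{Y}_1^m\vec{T}_1^mE)$. Since the abort event is defined through the statistics $C_1^m$, and the EAT requires the tested statistics to be inferable from the variables appearing in the entropies, one must first use the smooth-min-entropy chain rule to introduce $C_1^m$:
\begin{eqnarray*}
H_{\min}^{\frac{\epsilon_s}{p(\Omega)}}(\vec{A}_1^m|\vec{X}_1^m\vec{Y}_1^m\vec{T}_1^mE)
&\geq H_{\min}^{\frac{\epsilon_s}{4p(\Omega)}}(\vec{A}_1^mC_1^m|\vec{X}_1^m\vec{Y}_1^m\vec{T}_1^mE)
- H_{\max}^{\frac{\epsilon_s}{4p(\Omega)}}(C_1^m|\vec{T}_1^mE)\\
&\quad -3\log\de{1-\sqrt{1-\de{\frac{\epsilon_s}{4p(\Omega)}}^2}},
\end{eqnarray*}
and then apply the EAT \emph{twice}: once with a min-tradeoff function for the pair $(\vec{A},C)$, which yields the leading term $\frac{n}{\bar{s}}\eta_{opt}$ (with the second-order correction $\nu_2$ already folded into the maximization defining $\eta_{opt}$ in Table \ref{tab:parametersThmEAT}), and once with the max-tradeoff function $f_{\max}=h(\omega)$ for $C$ alone, which yields the subtraction $m\,h(\omega_{exp}-\delta_{est})+\sqrt{m}\,\nu_1$. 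So the term $\frac{n}{\bar{s}}h(\omega_{exp}-\delta_{est})$ is \emph{not}, as you claim, the information revealed in the test rounds (that cost sits inside $\mbox{leak}_{EC}$ and in the conditioning on $\vec{X},\vec{Y},\vec{T}$); it is the EAT upper bound on $H_{\max}(C_1^m|\vec{T}_1^mE)$, and $\nu_1$ is the second-order term of that max-entropy estimate, not of the min-entropy accumulation. Likewise the $-3\log(\cdot)$ term does not come from ``conditioning on $\Omega$''; it is the chain-rule penalty above, with the bound $p(\Omega)\geq \epsilon_{EA}+\epsilon_{EC}$ used to replace $p(\Omega)$ inside the smoothing parameter, which is where the $\epsilon_{EA}+\epsilon_{EC}$ in the theorem statement enters.

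Two further points. First, $\mbox{leak}_{EC}$ is not simply the communicated length: because Protocol \ref{prot:diqkd} has a random number of rounds $N$, the paper bounds $P[N\geq n+t]\leq \epsilon_t$ via a concentration argument and applies the asymptotic equipartition property at the maximal length $n+t$ (adjusting the smoothing by $\sqrt{\epsilon_t}$) to get the explicit bound with $(1-\gamma)h(Q)+\gamma h(\omega_{exp})$; your sketch omits this variable-length subtlety entirely. Second, a minor slip: the quantum Leftover Hash Lemma is from Renner's thesis \cite{RennerThesis} (see Theorem \ref{thm:leftoverHmin}), not \cite{TCR09}, which is the AEP. Without the chain-rule decomposition and the second (max-tradeoff) application of the EAT, your proof as sketched cannot produce the stated bound, so this is a genuine gap rather than a bookkeeping difference.
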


Theorem \ref{thm:rateEAT} sharpens the original analysis \cite{DIEAT,EATpublish} and has slightly improved key rates in the finite regime. This results in a reduction of  the minimum  number of rounds (signals) required for positive rates by about a factor of two, as illustrated in Figure~\ref{fig:improveEAT}. A detailed derivation of Theorem~\ref{thm:rateEAT} can be found in \ref{Appendix:rEAT}. 

\begin{figure}[H]
	\caption{Key rate $r$ vs logarithm of the number of rounds $n$. Comparison of the improvements in the key rate, for an implementation where the maximally entangled state is subjected to depolarizing noise and therefore $\beta=2\sqrt{2}(1-2Q)$, for QBER $Q=\DE{0.5\%,2.5\%,5\%}$. The dashed curves correspond to the key rates derived in the original analysis \cite{DIEAT,EATpublish}, the solid lines represent the key rates derived in Theorem~\ref{thm:rateEAT}. Similarly to \cite{EATpublish}, we take $\epsilon^c_{DIQKD}=10 ^{-2}$ and $\epsilon^s_{DIQKD}=10^{-5}$.}\label{fig:improveEAT}
	\centering
	\includegraphics[scale=0.8]{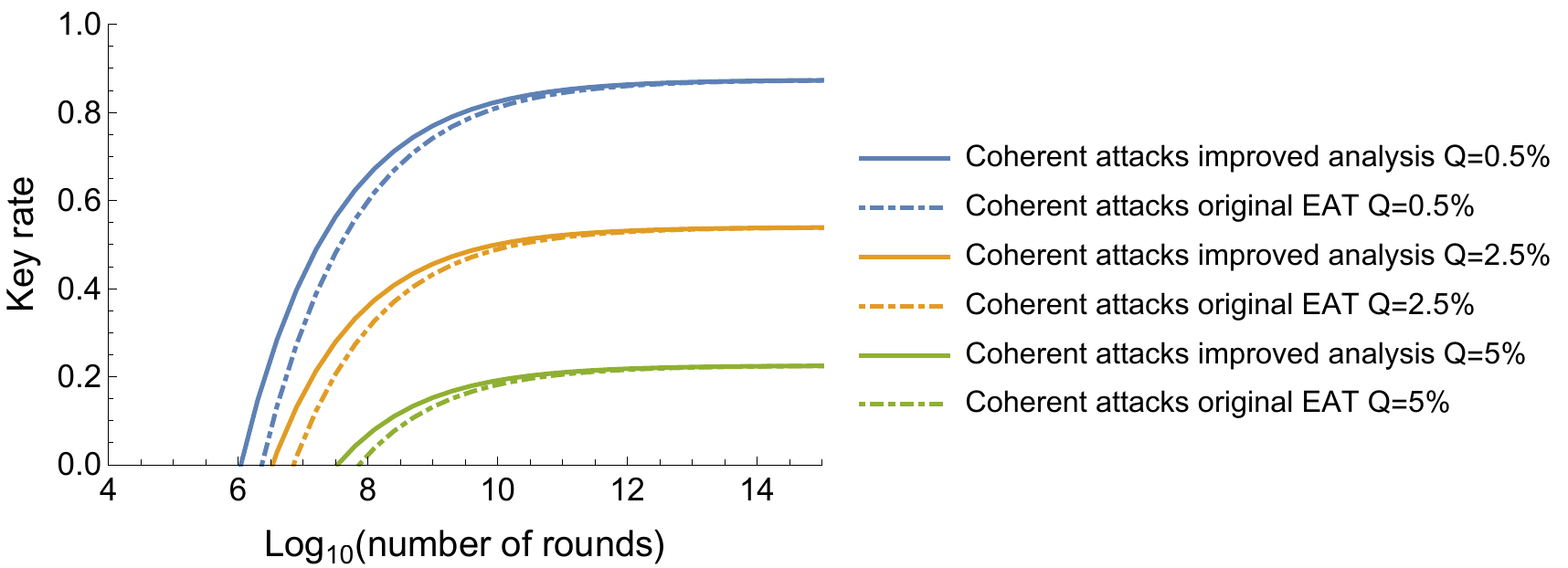}
\end{figure}

\begin{table}[H]
	\centering
	\begin{tabular}{|l|} 
		\hline
		$
s_{\max}=\left\lceil \frac{1}{\gamma} \right\rceil
$\\ \hline
$
\bar{s}=\frac{1-(1-\gamma)^{\left\lceil \frac{1}{\gamma} \right\rceil}}{\gamma}
$\\ \hline
		$\eta_{opt}=\max_{\frac{3}{4}<\frac{{p}_t(1)}{1-(1-\gamma)^{s_{max}}}<\frac{2+\sqrt{2}}{4}} \De{F_{\min}(\vec{p},\vec{p}_t)-\frac{1}{\sqrt{m}}\nu_2}$\\ \hline
			$F_{\min}(\vec{p},\vec{p}_t)=
		\frac{d}{d {p}(1)}g(\vec{p}) \Big|_{\vec{p}_t}\cdot {p}(1)+\de{ g(\vec{p}_t)- \frac{d}{d{p}(1)}g(\vec{p})\Big|_{\vec{p}_t}\cdot {p}_t(1) }$  \\  \hline	
			$ g({\vec{p}}) =  
		{s}\De{1-h\de{\frac{1}{2}+\frac{1}{2}\sqrt{16\frac{{p}(1)}{1-(1-\gamma)^{s_{max}}}\de{\frac{{p}(1)}{1-(1-\gamma)^{s_{max}}} -1}+3 } }}$\\\hline	
		$\nu_2 =2 \de{\log\de{1+2\cdot 2^{s_{\max}}3}+\left\lceil \frac{d}{d{p}(1)}g(\vec{p})\big|_{\vec{p}_t}\right\rceil}\sqrt{1-2\log \epsilon_s}$\\ \hline
$	\nu_1=2 \de{\log 7 +\left\lceil\frac{|h'(\omega_{exp}+\delta_{est})|}{1-(1-\gamma)^{s_{\max}}}\right\rceil}\sqrt{1-2\log\epsilon_s}
$\\ \hline
	\end{tabular}
	\caption{Explicit form of the terms that appear in Theorem~\ref{thm:rateEAT}. For a detailed derivation see \ref{Appendix:rEAT}.}
	\label{tab:parametersThmEAT}
\end{table}

\subsubsection{Key rates for collective attacks}

For collective attacks, we derive the finite key rates by employing two techniques: the finite version of the asymptotic equipartition property and the additivity property of the conditional $\alpha$-R\'enyi entropies.
To deal with collective attacks we can use a simplified version of Protocol~\ref{prot:diqkd}, where the number of rounds is fixed. 

\begin{algorithm}[H]
	\floatname{algorithm}{Protocol}
	\caption{DIQKD protocol for collective attacks}\label{prot:diqkdIID}
	\begin{algorithmic}[1]
		  \For {$i=1$ to $n$} 
		  \State Alice and Bob choose a random bit $T_i \in \DE{0,1}$ such that $P(T_i=1)=\gamma$.
		\If {$T_i=0$} Alice and Bob choose inputs $(X_i, Y_i)=(0,2)$. 
		\Else { they choose  $X_i ,Y_i \in \DE{0,1}$  (the observables for the CHSH test).}
		\EndIf
		\State Alice and Bob use their devices with the respective inputs and record the outputs, $A_i$ and $B_i$ respectively.
		   \EndFor
		\State \textbf{Error correction:} Alice and Bob apply the error correction protocol $EC$, communicating $O_{EC}$ in the process. 
			If $EC$ aborts they
		abort the protocol, else they obtain raw keys $\tilde{A}_1^n$ and $\tilde{B}_1^n$.
		\State \textbf{Parameter estimation:} Using $B_1^n$ and $\tilde{B}_1^n$, Bob sets for the first test rounds
		\begin{eqnarray}
		C_i= \left\{\begin{array}{@{}l@{\quad}l}
		1, & \mbox{if $A_i\oplus B_i=X_i\cdot Y_i$} \\[\jot]
		0, & \mbox{if $A_i\oplus B_i\neq X_i\cdot Y_i$}\\[\jot]
		\end{array}\right.
		\end{eqnarray}
		For the remaining rounds he sets $C_i=\bot$.
		
		He aborts if 
		\begin{eqnarray*}
			\sum_j C_j<\gamma n\times \de{\omega_{exp}-\delta_{est}},
		\end{eqnarray*}
		\ie, if they do not achieve the expected violation. 
		\State \textbf{Privacy Amplification:} Alice and Bob apply the privacy amplification protocol $PA$ and obtain the final keys $K_A$ and $K_B$ of length $l$.
	\end{algorithmic}
\end{algorithm}

In the following theorem we state the length of a secure key that can be derived using the asymptotic equipartition property, which is formally stated in Theorem \ref{thm:AEP}.

\begin{theorem}\label{thm:rateIID} 
	Either Protocol \ref{prot:diqkdIID} aborts with probability higher than $1-(\epsilon_{con}+\epsilon_{EC})$, or it generates a $(2\epsilon_{EC}+\epsilon_s+\epsilon_{PA})$-correct-and-secret  key of length:
	\begin{eqnarray}
	l\geq n& \big[ 1-h\de{\frac{1}{2}+\frac{1}{2}\sqrt{16 (\omega_{exp}-\delta_{est}-\delta_{con})((\omega_{exp}-\delta_{est}-\delta_{con})-1)+3}}\nonumber\\
	& -(1-\gamma)h(Q)-\gamma h(\omega_{exp})\big]\\
	&\;\;\; -\sqrt{n}\de{4\log\de{2\sqrt{2}+1} \de{\sqrt{\log\frac{2}{\epsilon_s^2}}+ \sqrt{\log \frac{8}{{\epsilon'}_{EC}^2}}}}\nonumber\\
	&\;\;\; -\log\de{\frac{8}{{\epsilon'}_{EC}^2}+\frac{2}{2-\epsilon'_{EC}} }-\log \de{\frac{1}{\epsilon_{EC}}}- 2\log\de{\frac{1}{2\epsilon_{PA}}}\nonumber
	\end{eqnarray}
\end{theorem}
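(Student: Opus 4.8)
The plan is to bound the extractable key length through the Quantum Leftover Hashing Lemma and then exploit the collective-attack (IID) structure $\rho_{A_1^nB_1^nE}=\rho_{ABE}^{\otimes n}$ to reduce the $n$-round smooth entropies to $n$ copies of single-round entropies via the finite asymptotic equipartition property of Theorem~\ref{thm:AEP} (see \cite{TCR09}). Concretely, I would first invoke leftover hashing against quantum side information \cite{RK05,PR14,RennerThesis}: privacy amplification by a two-universal family maps the reconciled string into an $l$-bit key that is $(\epsilon_s+\epsilon_{PA})$-secret provided $l \le H_{\hmin}^{\epsilon_s}(A_1^n\,|\,E\,O_{EC}) - 2\log\!\de{\frac{1}{2\epsilon_{PA}}}$, where $E$ is Eve's quantum register and $O_{EC}$ the public error-correction transcript. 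This already isolates the last term $-2\log(1/2\epsilon_{PA})$ of the statement, so the remaining work is to lower bound $H_{\hmin}^{\epsilon_s}(A_1^n|E\,O_{EC})$.

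Next I would peel off the error-correction leakage. Since $O_{EC}$ is classical of bit-length $\mbox{leak}_{EC}$, the chain rule gives $H_{\hmin}^{\epsilon_s}(A_1^n|E\,O_{EC}) \ge H_{\hmin}^{\epsilon_s}(A_1^n|E) - \mbox{leak}_{EC}$. I would use a one-way reconciliation scheme whose leakage obeys $\mbox{leak}_{EC} \le H_{\hmax}^{\epsilon'_{EC}}(A_1^n|B_1^n) + \log\!\de{\frac{8}{{\epsilon'}_{EC}^2}+\frac{2}{2-\epsilon'_{EC}}} + \log\!\de{\frac{1}{\epsilon_{EC}}}$, with a hash-based verification step ensuring $2\epsilon_{EC}$-correctness; this produces the two explicit $\log$ terms. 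Applying the dual (max-entropy) AEP to the IID state yields $H_{\hmax}^{\epsilon'_{EC}}(A_1^n|B_1^n) \le n\,H(A|B)_\rho + \sqrt{n}\,4\log(2\sqrt2+1)\sqrt{\log(8/{\epsilon'}_{EC}^2)}$, and evaluating the single-round $H(A|B)_\rho$ by averaging over the two round types — key rounds (fraction $1-\gamma$, bit-error $Q$) and test rounds (fraction $\gamma$, CHSH success $\omega_{exp}$) — gives $H(A|B)_\rho = (1-\gamma)h(Q)+\gamma h(\omega_{exp})$, \ie the $-(1-\gamma)h(Q)-\gamma h(\omega_{exp})$ contribution and the second piece of the $\sqrt n$ term.

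For the surviving term I would apply the min-entropy AEP, $H_{\hmin}^{\epsilon_s}(A_1^n|E)\ge n\,H(A|E)_\rho-\sqrt n\,4\log(2\sqrt2+1)\sqrt{\log(2/\epsilon_s^2)}$ (the constant $2\sqrt2+1=2\sqrt{d_A}+1$ with $d_A=2$), and bound the single-round $H(A|E)_\rho$ by the device-independent CHSH-versus-entropy relation $H(A|E)\ge 1-h\de{\frac12+\frac12\sqrt{16\,\omega(\omega-1)+3}}$ \cite{PAB07,PAB09}, where $\omega$ is the true winning probability of the single IID device and I have used $\beta=8\omega-4$. It then remains to connect $\omega$ to the observed score: in the IID setting the test outcomes are i.i.d.\ Bernoulli, so a concentration bound shows that, conditioned on passing the abort test at level $\omega_{exp}-\delta_{est}$, one has $\omega\ge\omega_{exp}-\delta_{est}-\delta_{con}$ except with probability $\epsilon_{con}$, where $\delta_{con}$ is the confidence width. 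Because this bound is monotone increasing in $\omega$ on $(3/4,(2+\sqrt2)/4)$, substituting the worst case $\omega=\omega_{exp}-\delta_{est}-\delta_{con}$ gives the leading $n[1-h(\cdots)]$ term; assembling all contributions and bookkeeping the smoothing and hashing parameters then yields the $(2\epsilon_{EC}+\epsilon_s+\epsilon_{PA})$ guarantee and the $1-(\epsilon_{con}+\epsilon_{EC})$ abort dichotomy.

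I expect the main obstacle to be twofold. Conceptually, the genuinely device-independent ingredient is the single-round bound $H(A|E)\ge 1-h(\cdots)$: it must certify Alice's unpredictability to Eve from the CHSH statistics alone, with no assumption on Hilbert-space dimension or measurements, and it is this step that makes the argument DI rather than standard QKD (although it can be imported as a known result). Technically, the delicate part is the parameter-estimation and error bookkeeping — relating the abort condition, which is phrased in terms of observed frequencies, to a statement about the true single-round $\omega$ while respecting the IID structure, fixing $\delta_{con}$ as a function of $\epsilon_{con}$, and realising a concrete reconciliation protocol with exactly the stated leakage constants so that the various $\epsilon$-parameters compose into the advertised totals.
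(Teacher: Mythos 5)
Your proposal is correct and follows essentially the same route as the paper's own proof: leftover hashing with the smooth min-entropy, the chain rule to peel off $O_{EC}$ together with the Renner--Wolf leakage bound $\mathrm{leak}_{EC}\le H_0^{\epsilon'_{EC}}(A_1^n|B_1^n X_1^n Y_1^n T_1^n)+\log\de{\frac{1}{\epsilon_{EC}}}$ converted into a smooth max-entropy, the finite AEP applied to both the min- and max-entropy terms with $\eta\le 2\sqrt{2}+1$, the tight one-round bound of \cite{PAB09} (Lemma~\ref{lem:Hchsh}), and a Hoeffding dichotomy relating the abort threshold to the true winning probability $\omega\ge\omega_{exp}-\delta_{est}-\delta_{con}$. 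The only step you leave implicit --- that the AEP requires the unconditioned IID state while secrecy concerns the state conditioned on not aborting --- is resolved in the paper by absorbing $p(\hat{\Omega})$ into the subnormalized state $\rho_{\land\hat{\Omega}}$ and invoking $H_{\min}^{\epsilon_s}(A_1^n|E)_{\rho_{\land\Omega}}\ge H_{\min}^{\epsilon_s}(A_1^n|E)_{\rho}$ from \cite[Lemma 10]{RigorousQKD}, after which your either-or structure and bookkeeping go through exactly as stated.
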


A detailed derivation of Theorem \ref{thm:rateIID} can be found in \ref{Appendix:rIID}.\vspace{1em}

Using a different technique, namely bounding the key rate by the conditional collision entropy, we derive the following result.

\begin{theorem}\label{thm:rateH2} 
	Either Protocol \ref{prot:diqkdIID} aborts with probability higher than $1-(\epsilon_{con}+\epsilon_{EC})$, or it generates a $(2\epsilon_{EC}+\epsilon_{PA})$-correct-and-secret  key of length:
	\begin{eqnarray}
	l\geq n& \Big[-\log\de{\frac{1}{2}+\frac{1}{2}\sqrt{16(\omega_{exp}-\delta_{est}-\delta_{con})(1-(\omega_{exp}-\delta_{est}-\delta_{con}))-2}}	\nonumber\\
	& -(1-\gamma)h(Q)-\gamma h(\omega_{exp})\Big]\nonumber\\
	&\;\;\; -\sqrt{n}\de{4\log\de{2\sqrt{2}+1}  \sqrt{\log \frac{8}{{\epsilon'}_{EC}^2}}}\\
	&\;\;\; -\log\de{\frac{8}{{\epsilon'}_{EC}^2}+\frac{2}{2-\epsilon'_{EC}} }\nonumber \\
&	-\log \de{\frac{1}{\epsilon_{EC}}}- 2\log\de{\frac{1}{2\epsilon_{PA}}}\gm{-2\log\de{\frac{1}{\epsilon_{con}+\epsilon_{EC}}}}.\nonumber
	\end{eqnarray}
\end{theorem}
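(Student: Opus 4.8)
The plan is to obtain the key length through privacy amplification governed by the conditional collision entropy $H_2$, exploiting that---unlike the smooth min-entropy---$H_2$ is \emph{exactly additive} on the IID state $\rho_{ABE}^{\otimes n}$ produced under a collective attack. This additivity is what lets us avoid the smoothing step, and hence the $\ep_s$-dependent $\sqrt{n}$ correction and the $\ep_s$ in the security parameter, that appear in the asymptotic-equipartition proof of Theorem~\ref{thm:rateIID}. Concretely, I would start from the quantum Leftover Hash Lemma phrased in terms of the collision entropy: applying a two-universal hash function to Alice's raw string $A_1^n$ produces a key that is $\ep_{PA}$-secret provided its length obeys $l\leq H_2(A_1^n|E^n O_{EC})_{\rho_{|\Omega}}-2\log\de{\frac{1}{2\ep_{PA}}}$, where $\rho_{|\Omega}$ is the state conditioned on not aborting and $O_{EC}$ is the public error-correction transcript.

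The next step is to peel off the two pieces of side information. For $O_{EC}$, a chain rule for the collision entropy gives $H_2(A_1^n|E^nO_{EC})\geq H_2(A_1^n|E^n)-\mbox{leak}_{EC}$, so I need both a bound on the bits leaked during error correction and a lower bound on $H_2(A_1^n|E^n)$. For the leakage I would invoke a one-way error-correction protocol whose communication is controlled by the smooth max-entropy, $\mbox{leak}_{EC}\leq H_{\max}^{{\ep'}_{EC}}(A_1^n|B_1^n)+\log\de{\frac{8}{{\ep'}_{EC}^2}+\frac{2}{2-\ep'_{EC}}}+\log\de{\frac{1}{\ep_{EC}}}$, the last term being the length of the hash used for the error-verification step that guarantees $2\ep_{EC}$-correctness. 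The finite-size AEP applied to the product state then yields $H_{\max}^{{\ep'}_{EC}}(A_1^n|B_1^n)\leq n\,\big[(1-\gamma)h(Q)+\gamma h(\omega_{exp})\big]+\sqrt{n}\,4\log\de{2\sqrt{2}+1}\sqrt{\log\frac{8}{{\ep'}_{EC}^2}}$, where the per-round conditional von Neumann entropy $H(A|B)$ splits into the key-round cost $h(Q)$ and the test-round cost $h(\omega_{exp})$ weighted by $1-\gamma$ and $\gamma$; this is the sole source of the $\sqrt{n}$ term in the final bound.

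For the secret part I would lower-bound the single-round collision entropy by the single-round min-entropy, $H_2(A|E)\geq H_{\min}(A|E)$, and insert the known device-independent lower bound on the single-round conditional min-entropy as a function of the CHSH winning probability, $H_{\min}(A|E)\geq -\log\de{\frac{1}{2}+\frac{1}{2}\sqrt{16\,\omega(1-\omega)-2}}$. Additivity of $H_2$ on $\rho^{\otimes n}$ promotes this to $H_2(A_1^n|E^n)_{\rho^{\otimes n}}\geq n\,f(\omega)$ with $f$ the above function. To control $\omega$ I would use a concentration (Chernoff--Hoeffding) argument on the IID test rounds: if the true single-round winning probability is below $\omega_{exp}-\delta_{est}-\delta_{con}$, then the protocol passes parameter estimation with probability at most $\ep_{con}$, so in the branch where it does not abort with probability at least $\ep_{con}+\ep_{EC}$ one may certify $\omega\geq\omega_{exp}-\delta_{est}-\delta_{con}$ and hence $H_2(A_1^n|E^n)_{\rho^{\otimes n}}\geq n\,f(\omega_{exp}-\delta_{est}-\delta_{con})$.

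The main obstacle, and the step I would treat most carefully, is that additivity holds for the \emph{unconditioned} product state $\rho^{\otimes n}$, whereas privacy amplification needs the entropy of the \emph{conditioned} state $\rho_{|\Omega}$, and conditioning on the global non-abort event destroys the product structure. I would bridge this with a conditioning lemma for the collision entropy, $H_2(A_1^n|E^n)_{\rho_{|\Omega}}\geq H_2(A_1^n|E^n)_{\rho^{\otimes n}}-2\log\de{\frac{1}{p(\Omega)}}$, valid whenever $\Omega$ is not too unlikely; the factor $2$ reflects the square in the definition of $H_2$, and restricting to the non-abort branch $p(\Omega)\geq\ep_{con}+\ep_{EC}$ produces exactly the extra term $-2\log\de{\frac{1}{\ep_{con}+\ep_{EC}}}$ that distinguishes this theorem from the AEP version. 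Collecting the collision-entropy lower bound, the error-correction leakage, the conditioning penalty and the privacy-amplification cost, and accounting for the $2\ep_{EC}$-correctness of error correction and the $\ep_{PA}$-secrecy of the hashing step, yields the claimed length together with the $\de{2\ep_{EC}+\ep_{PA}}$-correct-and-secret guarantee.
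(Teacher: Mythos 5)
Your proposal follows the paper's own proof essentially step for step: the Leftover Hashing Lemma phrased with $H_2^{\uparrow}$, the chain rule plus smooth-max-entropy/AEP bound on $\mbox{leak}_{EC}$ giving $n\left[(1-\gamma)h(Q)+\gamma h(\omega_{exp})\right]$ and the sole $\sqrt{n}$ term, additivity of $H_2^{\uparrow}$ on the IID state combined with the single-round bound $H_2(A|E)\geq H_{\min}(A|E)\geq -\log\de{\frac{1}{2}+\frac{1}{2}\sqrt{16\,\omega(1-\omega)-2}}$, Hoeffding-based certification of $\omega\geq\omega_{exp}-\delta_{est}-\delta_{con}$, and the conditioning lemma costing $2\log\de{\frac{1}{p(\hat{\Omega})}}$ with $p(\hat{\Omega})\geq\epsilon_{con}+\epsilon_{EC}$ in the non-abort branch, which is exactly the paper's use of Property~\ref{PropH}\textit{(\ref{prop:condState})}. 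The only cosmetic differences are bookkeeping (the paper routes the leakage through $H_0^{\epsilon'_{EC}}$ before the smooth max-entropy, and conditions on the event $\hat{\Omega}$ that additionally includes error-correction success, which is where the $2\epsilon_{EC}$ in the security parameter comes from), so your argument is correct and coincides with the paper's.
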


An important step in the proof of Theorem~\ref{thm:rateH2} is to derive a lower bound on the  collision entropy as a function of the CHSH violation $\beta$. A tight lower bound is proved in Theorem~\ref{thm:H2}. The detailed proof of Theorem~\ref{thm:rateH2}  is presented in \ref{Appendix:rH2}.

The rates presented in Theorem~\ref{thm:rateIID} are asymptotically tight, while Theorem~\ref{thm:rateH2} achieves strictly smaller asymptotic rates.
However, one can note that in Theorem~\ref{thm:rateH2} the term proportional to $\sqrt{n}$ has a smaller pre-factor. This can potentially lead to an advantage for the minimum number of rounds required for security. For Protocol~\ref{prot:diqkdIID}, an advantage can only be observed for very low noise regime, as illustrated in Figure \ref{fig:rateHxH2}. We remark, however, that for protocols based on other Bell inequalities the techniques used for deriving Theorem~\ref{thm:rateH2} can present significant advantage for the collective attack analysis.  This is further discussed in Section \ref{sec:Hevaluation}.

 \begin{figure}[H]
 	\caption{Key rates vs logarithm of the number of rounds $n$ for Protocol~\ref{prot:diqkdIID} (collective attacks). The blue curve represent the key rate using Theorem~\ref{thm:rateIID} and the yellow curve shows the key rate using Theorem~\ref{thm:rateH2}. It is considered an implementation with depolarizing noise and QBER $Q=0.01\%$. The inset graph shows a zoom in the region of low number of rounds. Similarly to \cite{EATpublish}, we take $\epsilon^c_{DIQKD}=10 ^{-2}$ and $\epsilon^s_{DIQKD}=10^{-5}$.}\label{fig:rateHxH2}
 	\centering
 	\includegraphics[scale=0.8]{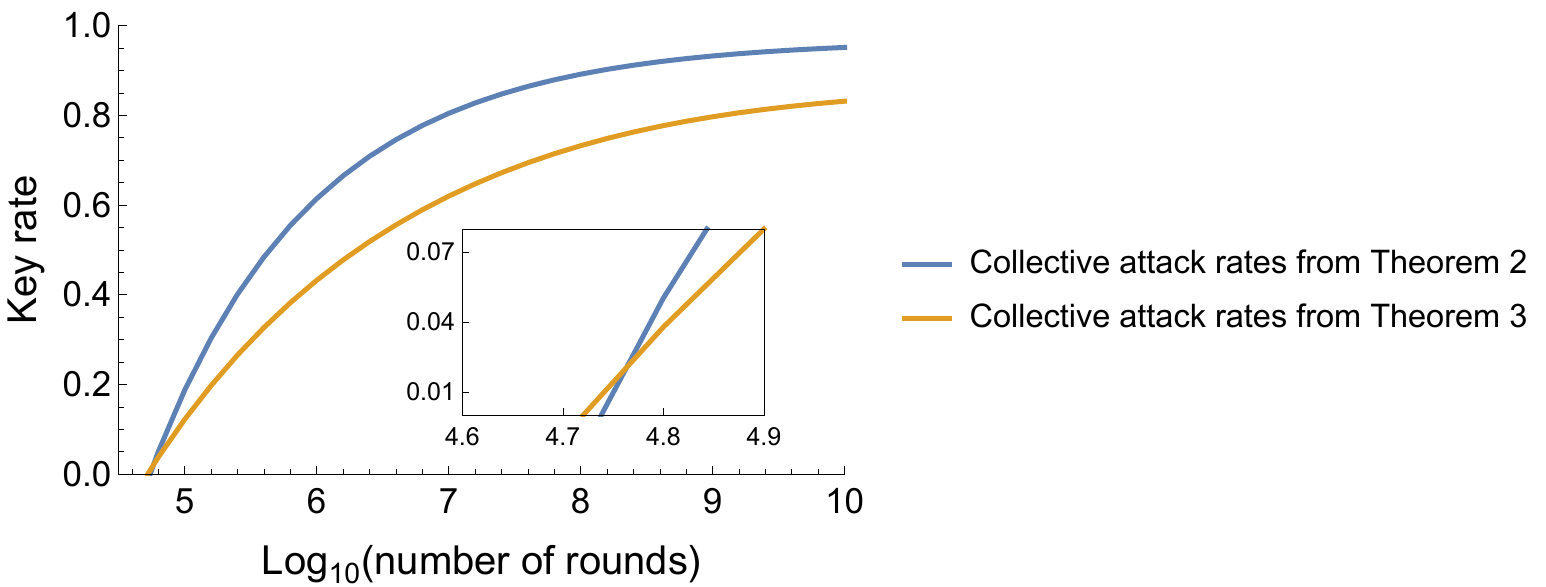}
 \end{figure}

The following table lists the parameters of the DIQKD protocols in consideration.
 \begin{table}[H]
	\centering
	\begin{tabular}{|c|l|} 
		\hline
		$n$& expected number of rounds \\ \hline
		$l$& final key length \\\hline
		$\gamma$& fraction of test rounds \\ \hline
		$Q$& quantum bit error rate\\\hline
		$\beta$& CHSH violation \\\hline
		$\omega_{exp}$ & expected winning probability on the CHSH game in an honest implementation \\ \hline
		$\delta_{est}$& width of the statistical interval for the Bell test\\\hline
		$\delta_{con}$&confidence interval for the Bell test in Protocol \ref{prot:diqkdIID} \\ \hline
$\epsilon_s$& smoothing parameter \\ \hline
$\epsilon_{EC},\epsilon'_{EC}$& error probabilities of the error correction protocol \\ \hline
$\epsilon_{EA}$& error probability of Bell violation estimation in Protocol \ref{prot:diqkd} \\\hline
$\epsilon_{con}$& error probability of Bell violation estimation in Protocol \ref{prot:diqkdIID} \\\hline
$\epsilon_{PA}$& error probability of the privacy amplification protocol \\ \hline
$\mbox{leak}_{EC}$&leakage in the error correction protocol \\\hline
	\end{tabular}
	\caption{Parameters of the considered DIQKD protocols, Protocol \ref{prot:diqkd} and Protocol \ref{prot:diqkdIID}.}
	\label{tab:parametersKey}
\end{table}

\subsection{Comparison of key rates for depolarizing noise model}\label{sec:depmodel}

We now compare the key rates achieved in the finite regime under the assumption of collective attacks (IID scenario) and against general coherent attacks (fully DI scenario). As a benchmark, we focus on an honest implementation where the maximally entangled state is prepared and subjected to depolarizing noise\footnote{This noise model can also be seen as the case where each individual qubit suffers a depolarization with parameter $\nu'$, where $\nu=2\nu'-{\nu'}^2$.}: 
\begin{equation}\label{eq:depstate}
\rho=(1-\nu)\ketbra{\Phi^+}{\Phi^+}+\nu\frac{I}{4}.
\end{equation}
In this case, the parameters of interest -- the value of the CHSH inequality $\beta$ and the QBER $Q$ -- relate to the noise parameter 
$\nu$ by 
\begin{eqnarray}\label{eq:dep}
Q=\frac{\nu}{2}\; \mbox{ and }\; \beta=2\sqrt{2}(1-\nu) \; \rightarrow \; \beta=2\sqrt{2}(1-2Q).
\end{eqnarray}

In Figure~\ref{fig:CompareCohxIID} we compare the key rates achievable under the IID assumption, given by Theorem~\ref{thm:rateEAT}, and in the fully DI scenario, Theorem~\ref{thm:rateIID}, for an honest implementation with depolarizing noise. 

\begin{figure}[H]
  \caption{Key rates vs logarithm of the number of rounds for collective attacks (dashed lines) and coherent attacks (solid lines). The different curves represent different values of QBER $Q=\de{0.5\%,2.5\%.5\%}$  considering an implementation where the maximally entangled state is subjected to depolarizing noise (see relation (\ref{eq:dep})).  The security parameters are taken as $\epsilon^c_{DIQKD}=10 ^{-2}$ and $\epsilon^s_{DIQKD}=10^{-5}$.}\label{fig:CompareCohxIID}
  \centering
    \includegraphics[scale=0.8]{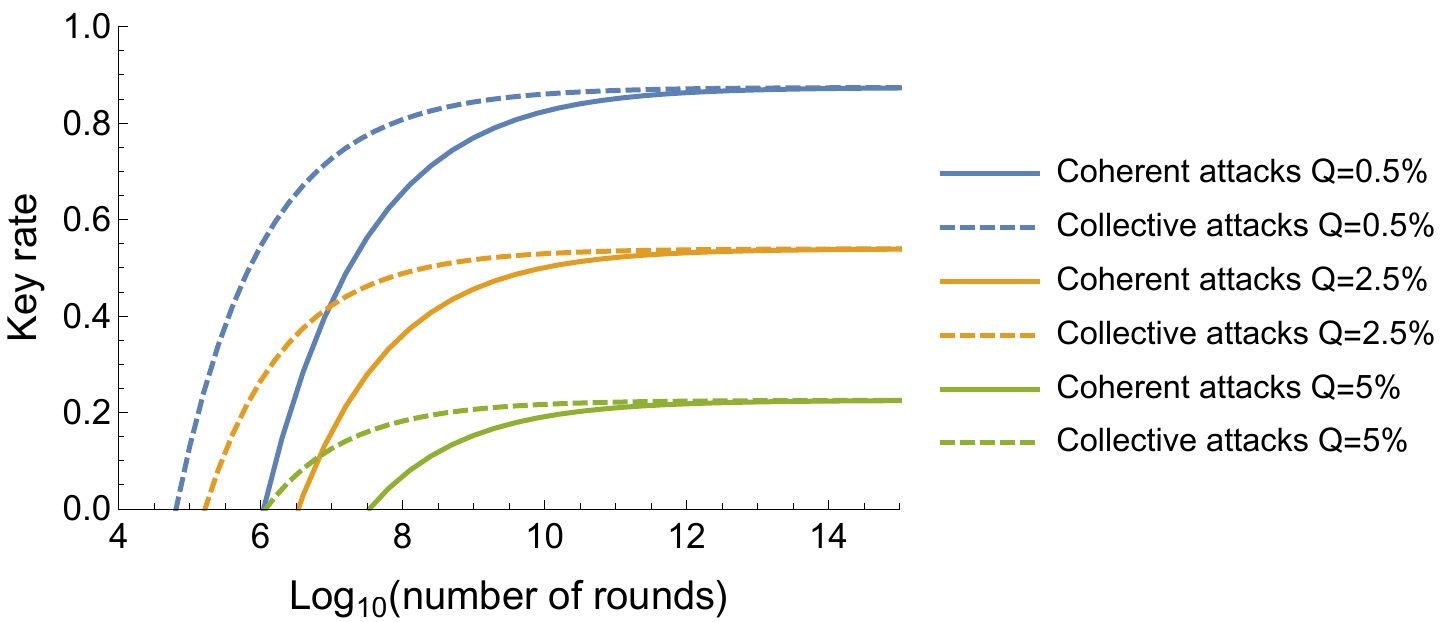}
\end{figure}

Figure~\ref{fig:CompareCohxIID} shows that the key rates approach the same asymptotic values, however the minimum number of rounds required to guarantee security is significantly higher for general coherent attacks. 
Indeed, by adding the assumption that the eavesdropper is restricted to collective attacks, the minimum number of signals required to have a positive key rate drops by about two orders of magnitude. 
However, even for collective attacks, this minimum number of required rounds is considerably large given the current entanglement generation rates. This is one of the big challenges to be overcome for a DIQKD implementation. 
In the next Section we are going to discuss the state of the art of experiments, and situate the current achievable parameters (Bell violation, QBER and entanglement generation rate) in the security proofs.

\subsection{The state-of-the-art experimental DIQKD}\label{sec:experiments}

In the following, we discuss experimental platforms in which DIQKD may be implemented. We analyse Bell violations and expected QBER achieved in previous Bell tests with distant setups and situate these parameters in the context of the key rates derived in Theorems \ref{thm:rateEAT} and \ref{thm:rateIID}. A summary of the findings is presented in Table \ref{tab:Bellparameter} and Figures \ref{fig:Ncoh} and \ref{fig:Niid}.

In experimental setups, distant entanglement is typically generated using photons to establish the connection. We distinguish two approaches based on the role of the photonic qubits:  \textit{(i) All-photonic schemes:} Approaches in which the entangled state is encoded in the photonic state directly. In this case, measurements of the photonic states on two remote setups enable to infer  their entanglement. 
\textit{(ii) Heralded schemes:} In this case, the entangled state is typically created in a long-lived system and the photons are used as a means of establishing the entanglement between two distant systems.

In this section we provide a discussion of the parameters in each of these schemes and the related challenges towards an implementation of DIQKD. We provide a more detailed discussion of one of the systems, namely nitrogen-vacancy (NV) centres in diamonds, and describe improvements in experimental parameters that can lead to a DIQKD implementation in the near future.

\subsubsection{DIQKD with all-photonic entanglement.}
Since in all-photonic schemes the entangled state is directly encoded on the photonic state, photon losses limit the entangled state detection efficiency. Closing the detection loophole in a Bell test thus requires very efficient entangled-photon sources and photon detectors. Recent technological advances enabled all-photonic Bell tests that close the detection-loophole \cite{Christensen2013,Giustina2013}, later combined with spacelike separation in loophole-free Bell tests \cite{BellVienna,BellNIST}. 

In photonic systems the detection efficiency also impacts the entangled state fidelity. We thus may expect that Bell violations are low in photonic systems. To avoid having to deal with undetected events, photonic Bell tests typically employ the CH-Eberhard inequality \cite{Clauser1974,Eberhard1993}. 
 The CHSH and CH-Eberhard inequalities are equivalent\footnote{One can see that by replacing non-detected events by the deterministic classical strategy ``output 1" in a test of the CHSH inequality.}, such that we can estimate the CHSH violation achieved in photonic experiments.
 Table \ref{tab:Bellparameter} presents the corresponding value for the CHSH inequality achieved in the experiments of Refs. \cite{Christensen2013,Giustina2013,BellVienna,BellNIST}. One can note that the violations achieved are indeed low, ranging from $2.00004$ to $2.02$. Combined with a finite QBER ($>2\%$), this poses a significant challenge for the implementation of a DIQKD protocol in photonic systems.

However, if these systems would enter the regime of positive key rates, the entanglement generation rate can be very high ($\sim 10^5$ Hz), such that they could easily reach the asymptotic key rate values. 

\gm{In order to overcome photon losses, several proposals for implementing heralding schemes in all-photonic systems were presented. In this case, the entangled state is created between photons and, also,  this entanglement is heralded by the interference of other photons. In particular, in Ref.~\cite{Gisin10} the authors propose a scheme based on a qubit amplifier that combines single photon sources and linear optics. This proposal was further explored in Ref.~\cite{PMWLL11}. Schemes based on entanglement swapping by quantum relay were also considered \cite{CT11,Meyer13,STS16}. Ref.~\cite{CT11} makes a comparison of the performance of the two types of schemes. Analyses in Refs. \cite{Gisin10,CT11,Meyer13,STS16} make assumptions on the possible attacks performed by the eavesdropper.
	New protocols based on single photon sources were recently proposed in Ref.~\cite{Alejandro}. The proposed schemes uses a combination of spontaneous parametric down conversion sources and single-photon sources in order to achieve a setup where a heralding process could overcome transmission photon losses. The security analysis presented in Ref.~\cite{Alejandro} does not restrict the eavesdropper attacks. These setups are a promising proposal to bring the parameters of all-photonic systems to the region of positive asymptotic key rates (see Figure~\ref{fig:Ncoh} and \ref{fig:Niid}). However single-photon sources still lack the required performance for an implementation of these schemes.	
}


\subsubsection{DIQKD with heralded entanglement.}
Due to the nature of heralded entangling schemes, photon losses do not influence the entangled state detection efficiency or fidelity. 
Heralded schemes have been used to entangle distant atomic ensembles \cite{Chou2005, Usmani2012}, trapped ions \cite{Moehring2007}, atoms \cite{Hofmann2012}, NV centres \cite{Bernien2013}, quantum dots \cite{Delteil2015}, and mechanical oscillators \cite{Riedinger2017}. So far, entangled state fidelities sufficiently high to violate Bell's inequalities have only been reached with trapped ions \cite{Matsukevich2008,Pironio2010}, atoms \cite{Hofmann2012,Rosenfeld2017}, and with NV centres \cite{BellDelft,Bellsecondrun}. The Bell violations observed in Refs. \cite{Matsukevich2008,Pironio2010,Rosenfeld2017,BellDelft,Bellsecondrun}  are in the range $\beta = 2.22$ to $\beta=2.41$, with a lower bound on the QBER, estimated from detection efficiencies alone, around $0.04$ (see Table \ref{tab:Bellparameter} for a full overview). Apart from the results reported in \cite{Pironio2010}, these parameters are not in the region of positive key rate (see Figures \ref{fig:Ncoh} and \ref{fig:Niid}). However, all of them 
are in the proximity of this region, such that setup improvements may enable to reach it. 

The challenge for these implementations is however their low entangling rate, induced by photon losses. Current rates range from (minutes)$^{-1}$ \cite{Matsukevich2008,Hofmann2012,Pironio2010,Rosenfeld2017} to (hours)$^{-1}$ \cite{BellDelft,Bellsecondrun}. 
A significant speed-up in the entanglement generation rate is thus needed in order to achieve the minimum number of rounds required for DIQKD. Higher entangling rates in heralded schemes were recently achieved with trapped ions \cite{Hucul15} and NV centres \cite{Kalb2017,Humphreys2018}, although with lower state fidelities, and no Bell violations are reported. Even though in Ref. \cite{Humphreys2018} the state fidelity is just high enough to be able to violate Bell inequalities, the expected Bell violation would be low. Enhancement in entangling rates, e.g. with optical cavities to improve light-matter coupling efficiency \cite{Reiserer2014} is therefore crucial to achieving an implementation of DIQKD with heralded schemes.

\subsubsection{Nitrogen-vacancy centre-based networks.}\label{sec:expNV}
In this section, we focus on heralded entanglement generation between nitrogen-vacancy centres in diamond for DIQKD.
Nitrogen-vacancy (NV) centres are defect centres in the diamond lattice. They contain an electronic spin with good coherence properties and spin-selective optical transitions that can be used for intialization, readout and entanglement generation \cite{Bernien2013,Awschalom2018}. Next to the electronic spin, nearby weakly coupled nuclear spins can serve as long-lived memories \cite{Cramer2015,Kalb2018}. These properties make the NV centre a promising quantum network node. 
 
Entanglement between distant NV centres can be generated using an heralded scheme. Typically, local entanglement is first generated between the NV electronic spin and a photon mode. And subsequently, entanglement between distant NV centres is achieved through entanglement swapping by interfering the two photon modes from distant setups \cite{Barrett2005}. As discussed above for heralded protocols, photon attenuation does not influence the fidelity of the generated entangled state or the detection efficiency. The detection of the spin states has near-unit efficiency \cite{Robledo2011}.

\subsubsection*{DIQKD parameters.}
In a loophole-free Bell test with NV centres \cite{BellDelft,Bellsecondrun}, a CHSH violation $\beta=2.38\pm0.14$ was observed between systems separated by 1.3 kilometers. Taking into account the entangled state fidelity and detection efficiency, we estimate that the corresponding QBER would be $Q= 0.06\pm0.03$. The Bell violation achieved in \cite{BellDelft,Bellsecondrun} is considerably high, especially if compared to loophole-free Bell test experiments in photonic systems \cite{BellVienna,BellNIST}. However, these parameters are not good enough to generate a secure key. Indeed, using  Theorems \ref{thm:rateEAT} and \ref{thm:rateIID}, one concludes that it is not possible to achieve positive key rate with these parameters (see Figures \ref{fig:Ncoh} and \ref{fig:Niid}).

In the following, we suggest two near-term experimental improvements to enhance these parameters.

Firstly, the frequency stability of the laser used to excite NV centres during the entanglement protocols can be increased using an external cavity. The instability of the laser can influence the indistinguishability of photons emitted by the distant NV centres. The indistinguishability is crucial for photon interference, which can be quantified by the visibility of the two-photon quantum interference (TPQI). We expect that compared to previous implementation \cite{BellDelft}, the improved laser frequency stability
can lead to an improvement in TPQI visibility from 0.88 to 0.90. 

Secondly, both the CHSH violation $\beta$ and the QBER $Q$ are impacted by the NV electronic spin state readout. The readout can be performed using resonant excitation of a spin-selective optical transition \cite{Robledo2011}. 
Improvements to the detection efficiency can be obtained by storing the spin state in the nearby nitrogen spin state, and performing repeated readout \cite{Jiang2009}. We estimate that the repeated readout can lead to an average readout fidelity of $\approx0.985$, compared to an initial 0.97 \cite{Pfaff2014} \footnote{We note that this readout method increases the readout duration, which compromises spacelike setup-separation. However, security in a DIQKD implementation does not require spacelike separation since it is superfluous with the assumption of isolated labs in place (see Assumptions \ref{assump:DI}). Therefore, an increased readout time does not present a problem for security.}.

Other improvements can be envisioned, such as enhancement of the detection efficiency by improving the photon collection efficiency through the use of parabolic reflectors \cite{Wan2018} or optical cavities \cite{Faraon2011}. In the following discussion we limit ourselves to the two advances listed above and summarized in Table \ref{tab:standard_params}.

\begin{table}[h]
\centering
\begin{tabular}{l||cc|cc}
 \bf{DIQKD parameters} &  \multicolumn{2}{c}{Ref. \cite{BellDelft, Bellsecondrun}}\vline & \multicolumn{2}{c}{Expected}\\ 
setup  &  A  &  B  &  A  &  B \\ \hline
average readout fidelity & 0.974 & 0.969   & 0.985 & 0.985 \\ 
TPQI visibility & \multicolumn{2}{c}{0.88} \vline &  \multicolumn{2}{c}{0.90}\\  \hline
$\beta$ & \multicolumn{2}{c}{2.38 $\pm$ 0.14}\vline & \multicolumn{2}{c}{2.47}\\ 
Q &  \multicolumn{2}{c}{0.06 $\pm$ 0.03} \vline & \multicolumn{2}{c}{0.051}\\
\end{tabular}
\caption{The CHSH violation $\beta$ and QBER $Q$ in NV centre-based implementations are strongly dependent on the TPQI visibility and the readout fidelity. The resulting values are shown for parameters achieved in a loophole-free Bell test, and for expected values from several readily-implementable improvements.}
\label{tab:standard_params}
\end{table}

Taking into account these improvements, the expected DIQKD parameters are $\beta \approx 2.47$ and $Q \approx 0.051$. 
In Figure~\ref{fig:NV} we illustrate the rates achievable for these parameters against general coherent attacks and under the assumption that the eavesdropper is restricted to collective attacks. We see that the required minimum number of rounds is of order $10^8$ for general attacks, and about $5\times 10^6$ for collective attacks.

\begin{figure}[H]
\begin{center}
	\includegraphics[scale=0.8]{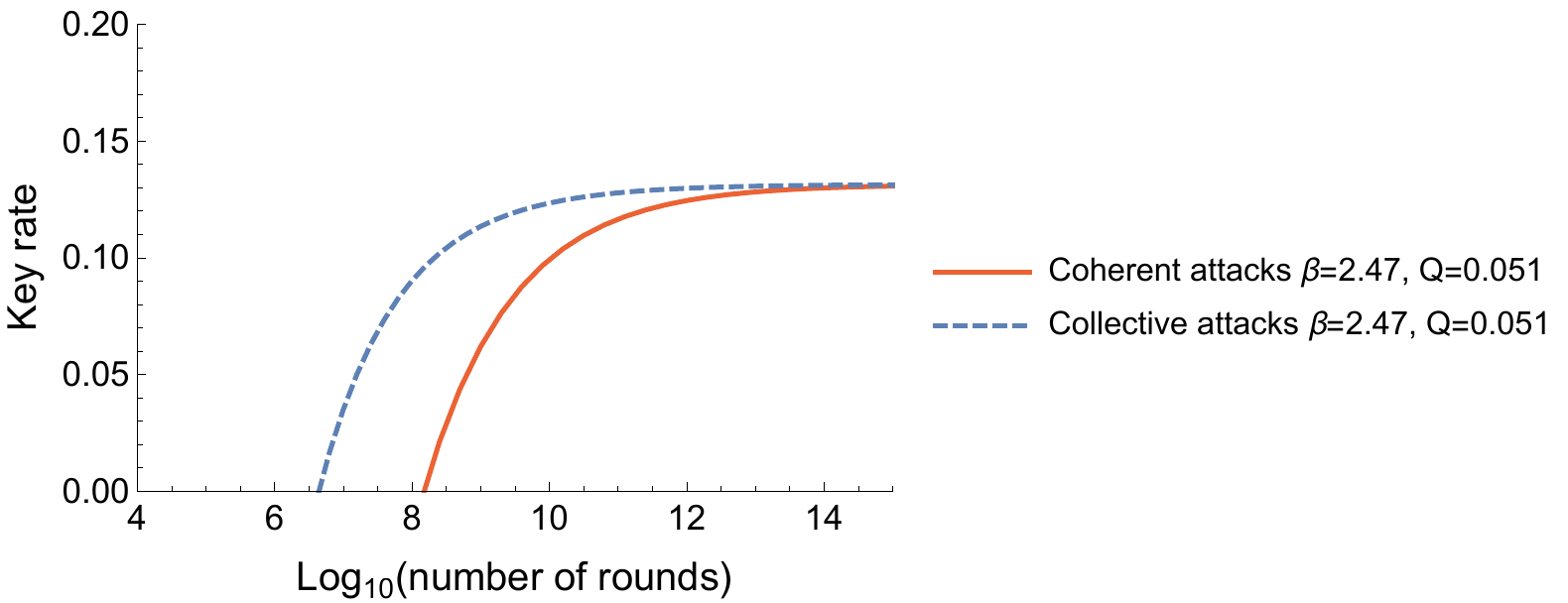}
	\caption{Key rates vs logarithm of the number of rounds $n$ for parameters that are readily-implementable in NV centres setups (CHSH violation $\beta=2.47$ and QBER $Q=0.051$). The red line shows the key rates obtained against general coherent attacks, and the blue dashed line shows the key rates under the assumption of collective attacks. The security parameters are chosen to be $\epsilon^c_{DIQKD}=10 ^{-2}$ and $\epsilon^s_{DIQKD}=10^{-5}$.}\label{fig:NV}
		\end{center}
\end{figure}

\subsubsection*{Entangling rate.}
Although the improved parameters lead to a positive key rate, this does not mean that DIQKD with NV centres is readily achievable. The system faces another challenge: the probabilistic nature of the heralded entanglement scheme limits the entanglement generation rate.

In the heralded entanglement generation protocol used in \cite{Bernien2013,BellDelft} the photonic qubit is time-bin encoded and entanglement is heralded with the detection of a photon in each of two time-bins \cite{Barrett2005}. Since two photons have to be detected, the rate of the protocol is proportional to the square of the photon losses. For the spacelike separated setups in \cite{BellDelft} the total emission and detection efficiency per photon is $\approx 10^{-4}$, leading to a total success probability of $\approx 10^{-8}$. Since the repetition rate, limited by the spin-state reset time, is of the order of $\approx \mu$s, generating a raw key of length $10^6$ bits would take $\approx 10^3$ days. It is clear that a speed-up of entanglement generation rate is required to use NV centres in a DIQKD protocol. We describe two approaches toward this.

Firstly, this could be achieved by adapting the entanglement generation protocol. A linear dependency of the rate on photon losses can be achieved by employing an extreme-photon-loss (EPL) protocol \cite{Campbell2008} or single-photon (SP) protocol \cite{Cabrillo1999}. Demonstrated implementations of these protocols with NV centres indeed provide a speed-up in entanglement rate of three orders of magnitude \cite{Kalb2017, Humphreys2018}. However, these implementations do not yet provide the entangled state fidelities leading to Bell violations that allow for DIQKD (the entangled state fidelities are $F_{EPL} = 0.65\pm 0.03$ and $F_{SP} = 0.81\pm0.02$, leading to no Bell violation for the EPL protocol and a small violation $\beta_{SP} = 2.1$ for the single photon protocol). Better parameters may be achieved with improvements of the robustness of the nuclear-spin memories \cite{Kalb2018} and with an improved photon detection versus dark-count rate \cite{Cabrillo1999}.

Secondly, an increase in the entanglement rate can be achieved by a reduction of the photon losses per round. These losses consist of three parts: a low coherent-photon emission probability, a non-unit collection efficiency and fiber attenuation.
The photon attenuation during transmission over fibers is $\approx 8$ dB for the NV emission wavelength (637 nm).
To maintain high entangling rates for distant setups, this should be reduced. This can be achieved by frequency downconversion of the photons at a wavelength of 637 nm emitted by the NV centres to telecom frequencies \cite{Bock2018,Dreau2018}.
The emission probability of coherent photons, $\approx3\%$, and subsequent collection efficiency ($\approx 10\%$, \cite{Bernien2013}) together limit the best achievable entangling rates. They can be addressed simultaneously by embedding the NV centre in an optical cavity to enhance coherent-photon emission and the collection efficiency  \cite{Faraon2011}. A promising approach employs NV centres in diamond membranes in Fabry-Perot microcavities \cite{Janitz2015,Bogdanovic2017,Riedel2017}. In such a design NV centres remain far away from the optical interface, retaining bulk-like optical coherence properties. These cavities are expected to provide three orders of magnitude enhancement in entangling rate for a two-click protocol \cite{Bogdanovic2017}. Together with the improved DIQKD parameters described above, this makes a demonstration of DIQKD with NV centres experimentally feasible.

\begin{center}
\begin{table}[H]
	\begin{tabular}{ |l|c|c|c| } 
		\hline
		 & $\beta$ & $Q$ \\
		\hline
		(1) Matsukevich et al., PRL 100, 150404 (2008) \cite{Matsukevich2008} & $2.22\pm 0.07$ & $0.041\pm0.003$ \\ \hline
		(2) Pironio et al., Nature 464, 1021-1024 (2010) \cite{Pironio2010}& $2.414\pm 0.058$ & $0.041\pm0.003$ \\ \hline
		(3) Giustina et al., Nature 497,  227-230 (2013) \cite{Giustina2013} &$2.02096\pm 0.00032$ & $0.0297\pm0.0003$\\ \hline
		(4)	Christensen et al., PRL 111, 130406 (2013) \cite{Christensen2013} & $2.00022\pm 0.00003$ & $0.0244 \pm 0.0009$\\ \hline
			
		(5) Giustina et al., PRL 115, 250401 (2015) \cite{BellVienna} & $2.000030\pm 0.000002$ & $0.0379\pm 0.0002$ \\  \hline
		(6) Shalm et al., PRL 115, 250402 (2015) \cite{BellNIST} & $2.00004\pm0.00001$ &  $0.0292\pm 0.0002$ \\ \hline
		
		(7) Hensen et al., Nature 526 682-686 (2015) \cite{BellDelft} &  $2.38\pm 0.14$ & $0.06\pm0.03$\\\hline
		(8) Rosenfeld et al., PRL 119, 010402 (2017) \cite{Rosenfeld2017} & $2.221\pm0.033$ & $0.035\pm0.003$ \\ \hline
		(9) Expected improvements in NV systems & 2.47 &0.051\\
		\hline
	\end{tabular}
	\caption{Summary of the estimated parameters of interest for DIQKD. (1,2) are Bell tests with trapped ions, (3-5) are all-photonic experiments, (7) uses NV centres and (8) trapped atoms. (9) reports on near-term achievable parameters with NV centers as described in Section~\ref{sec:expNV}. In all experiments the detection loophole is closed; (5-8) additionally close the locality loophole.
The CHSH violations for neutral atoms (8), trapped ions (1,2) and NV centres (7)  are as reported in the corresponding experiments. For (3), (4) and (5), in which the value of the CH-Eberhard inequality $J$ is reported, we make use of the relation $\beta=4J + 2$ between the CHSH value and the CH-Eberhard value. This relation is found if one attributes ``output 1" to undetected events in a CHSH inequality test. For (6) the CHSH violation was estimated directly from the reported data.
For the estimation of the QBER ($Q$), in (1),(2) and (8) we assume perfect classical correlation in the generated state and find a lower bound for the QBER from reported detection efficiencies ($0.979\pm0.002$ \cite{Olmschenk07} for (1) and (2), and $0.982\pm0.002$ \cite{Henkel10} for (8)). 
For NV centres (7), we additionally account for imperfections in the entangled state based on the reported density matrix. 
 For all-photonic systems (3-6), the QBER is estimated by taking into account the detection efficiency and using the reported estimated state and the measurements performed by Alice, optimizing over measurements for Bob.
}\label{tab:Bellparameter}
	\end{table}
\end{center}

\begin{figure}[H]
  \caption{Region of positive key rates for \emph{coherent attacks}: The red area is the region of values of QBER ($Q$) and CHSH violation ($\beta$) for which a positive key rate cannot be reached with any number of rounds. In the green area, the dashed curves represents the minimum number of rounds required to get positive key rate. For parameters above each curve, a key rate can be extracted if the number of rounds is higher than specified in the curve. The points show the Bell violation and estimated QBER achieved by previous experiments (see Table \ref{tab:Bellparameter}). They, however, do not reflect the corresponding entanglement generation rates. Similarly to \cite{EATpublish}, we take $\epsilon^c_{DIQKD}=10 ^{-2}$ and $\epsilon^s_{DIQKD}=10^{-5}$.}\label{fig:Ncoh}
  \begin{center}
    \includegraphics[scale=0.65]{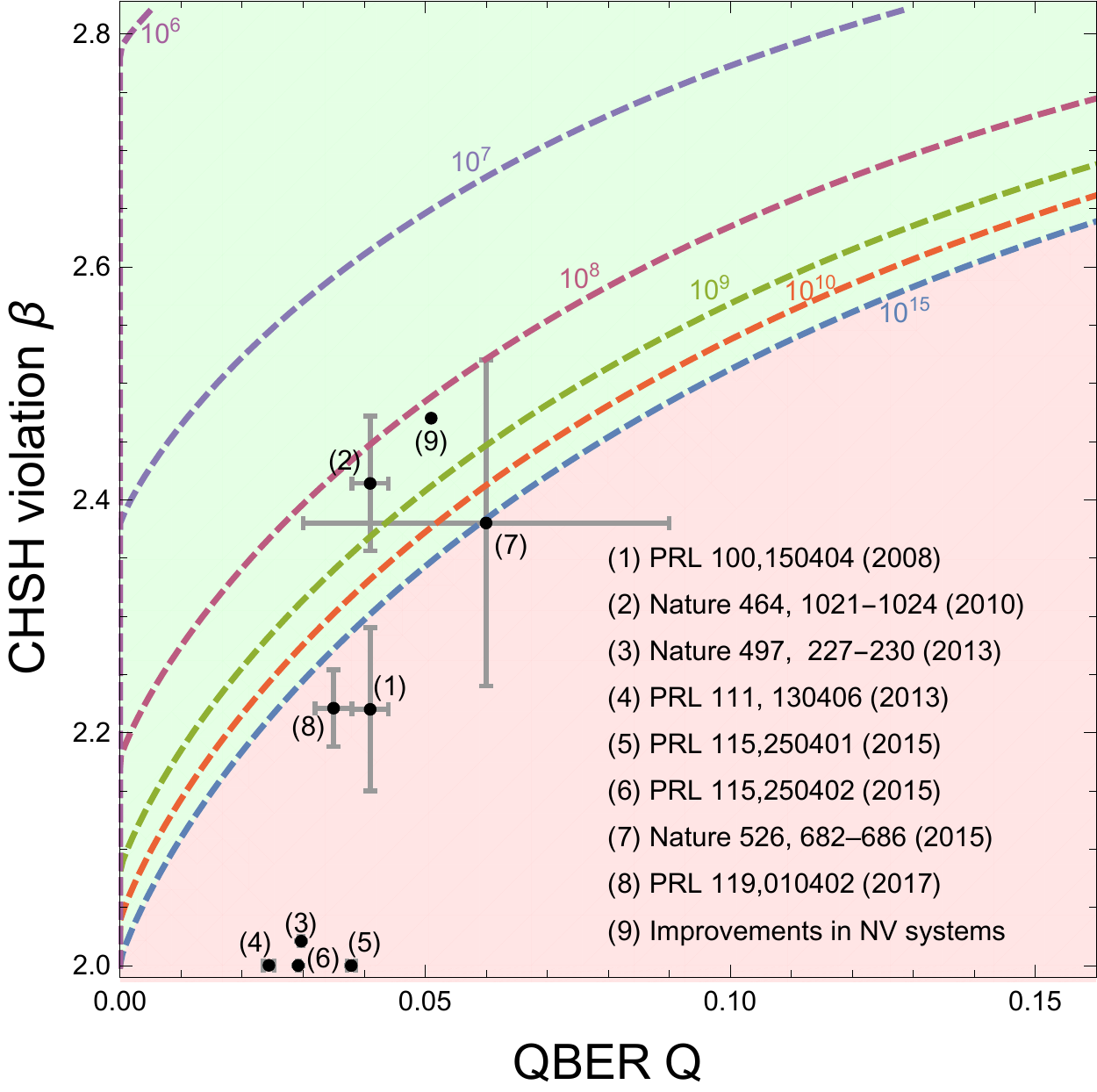}
      \end{center}
\end{figure}
\begin{figure}[H]
  \caption{Region of positive key rates for \emph{collective attacks}: The red area is the region of values of QBER ($Q$) and CHSH violation ($\beta$) for which a positive key rate cannot be reached with any number of rounds. In the green area, the dashed curves represents the minimum number of rounds required to get positive key rate. For parameters above each curve, a key rate can be extracted if the number of rounds is higher than specified in the curve. The points show the Bell violation and estimated QBER achieved by previous experiments (see Table \ref{tab:Bellparameter}). They, however, do not reflect the corresponding entanglement generation rates. Similarly to \cite{EATpublish}, we take $\epsilon^c_{DIQKD}=10 ^{-2}$ and $\epsilon^s_{DIQKD}=10^{-5}$.}\label{fig:Niid}
  \begin{center}
    \includegraphics[scale=0.65]{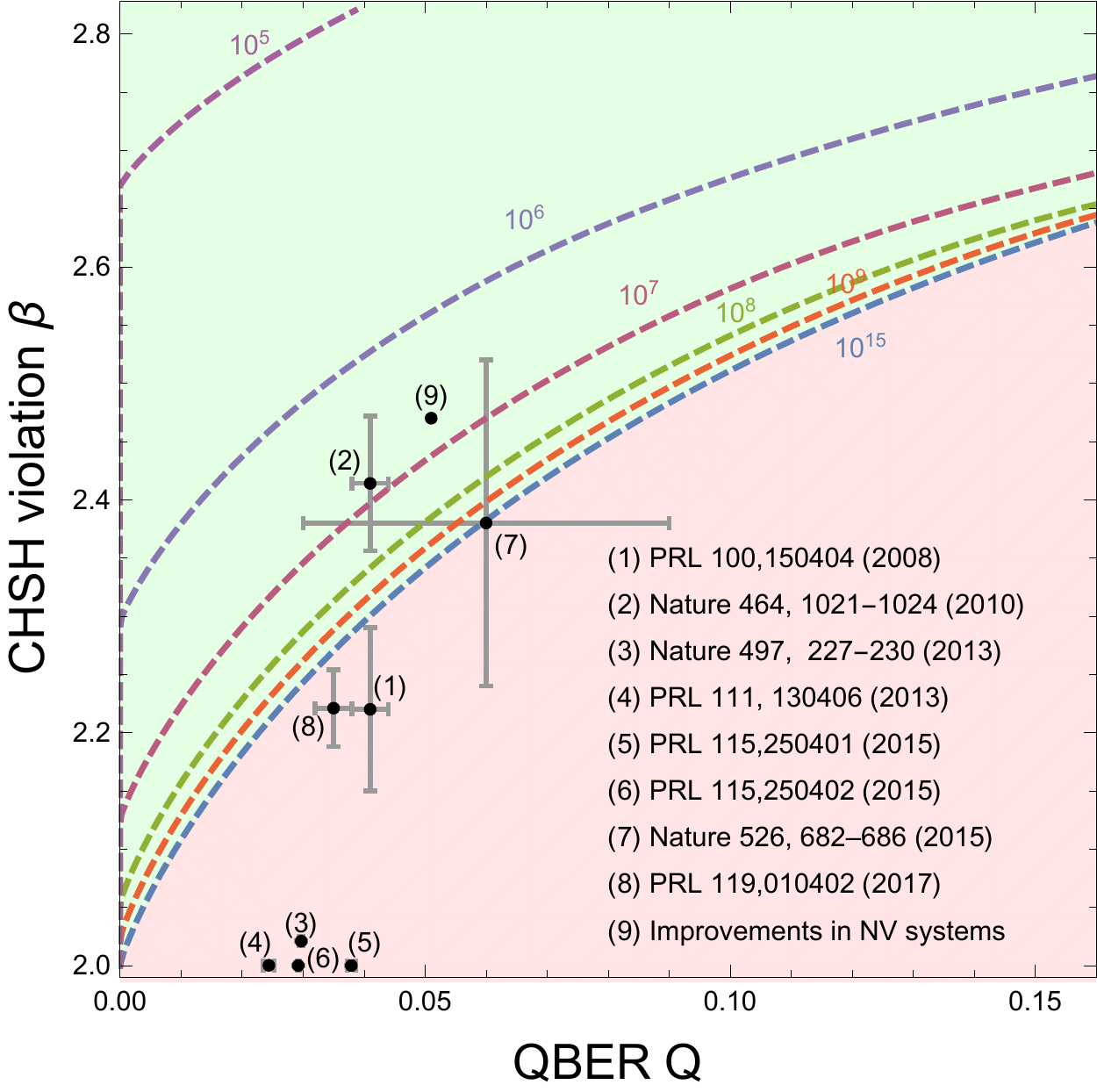}
    \end{center}
\end{figure}

\section{Discussion}\label{sec:discussion} 

Detection-loophole-free Bell tests between separated setups mark an important step towards the implementation of DIQKD. 
Progress towards extending Bell experiments to larger distances were also achieved, in particular by the Bell tests additionally closing the locality loophole.
However a DIQKD protocol has not yet been implemented.

In order to shed light on the experimental performance needed for DIQKD,
we have derived the 
key rates in the finite size regime as a function of the experimental parameters: CHSH violation $\beta$ and QBER $Q$.
For comparison of the key rates obtained in the finite regime for coherent and collective attacks, we have used 
as a benchmark an implementation where the maximally entangled state is subjected to depolarizing noise. Although the asymptotic key rates against collective attacks and general coherent attacks coincide, it is known that this is not the case in the finite regime. We find that, with the currently available tools, security against coherent attacks requires a minimum number of rounds about two orders of magnitude higher than what is necessary for security against collective attacks for realistic near-term parameters.

Here, we have focused on DIQKD protocols that use the CHSH inequality. So far the CHSH inequality is the one which leads to the best performance for a DIQKD protocol. 
The challenge in using other Bell inequalities is that, up to date, only non-tight lower bounds on the secure key rates can be derived.
Therefore, it is still an open question whether any other Bell inequality can outperform the CHSH, either in terms of maximum tolerable QBER, higher rates or lower minimum number of rounds required. 

Towards exploring the potential of different experimental platforms to implement DIQKD, we have analyzed the Bell violation and expected QBER of previously performed Bell tests and situated these parameters in the context of the derived key rates.
Figures \ref{fig:Ncoh} and \ref{fig:Niid} summarize this analysis.

For photonic systems, a DIQKD implementation is currently barred by the very low CHSH violation. To overcome this, a strong reduction of photon losses is required. 

Detection-loophole free Bell tests based on heralded entanglement schemes approach the allowed region, with the Bell test of Ref.~\cite{Pironio2010}, performed with trapped ions separated by 1 meter, even exhibiting parameters in the allowed region. These heralded schemes however suffer from low entangling rates resulting from photon losses. An increase in the entangling rates is expected to be achieved by improving collection efficiencies, e.g. by employing optical cavities. Moreover, with frequency downconversion these results can be extended to long ($\gg 1$~km) distances.
We illustrate that with near-term experimental improvements for NV centres, in combination with optical cavities for enhancing entangling rate, described in Section~\ref{sec:expNV}, a demonstration of DIQKD is achievable.

\section{Methods}\label{sec:methods} 

We now present the theoretical tools that allows us to derive the key rates for the device-independent quantum key distribution protocols, Protocol~\ref{prot:diqkd} and Protocol~\ref{prot:diqkdIID}.
We start by defining some quantities that are going to play an important role in the security proof and state in more details the security definition for device-independent quantum key distribution.

\subsection{Notation and definitions}

In cryptographic tasks, we are often interested in estimating what is the maximum probability with which an adversary can guess the value of a classical variable $A$\footnote{In QKD, for example, the classical variable is the string of bits that Alice holds after measuring her quantum systems.}. This is defined as the guessing probability $p_{\rm guess}$. In the general case where the adversary might have access to a quantum side information $E$, and therefore the state of interest is a $cq$-state (classical-quantum state) $\rho_{AE}$, the guessing probability is defined as:
\begin{eqnarray}\label{eq:pguess}
p_{\rm guess}(A|E)_{\rho}=\sup_{\DE{M_E^a}}\sum_{a}p(A=a)_{\rho} \mathrm{Tr} \de{M_E^a\rho_{E|A=a}},
\end{eqnarray}
where the supremum is taken over all POVMs $\DE{M_E^a}$ that can be performed on the system $E$.
The min-entropy of the classical variable $A$ conditioned on the quantum side information $E$ is then given by \cite{KRS09}
\begin{eqnarray}
H_{\min}(A|E)_{\rho}=- \log p_{guess}(A|E)_{\rho}.
\end{eqnarray}

A smoothed version of the min-entropy can also be defined.

\begin{definition}[Smooth min-entropy]\label{def:Hminsmooth}
	For a quantum state $\rho_{AE}$ and $\epsilon \in [0,1)$
	\begin{eqnarray}
	H_{\min}^{\epsilon}(A|E)_{\rho}=
	\sup_{\tilde{\rho}_{AE}\in \mathcal{B}^{\epsilon}(\rho_{AB})} H_{\min}(A|E)_{\tilde{\rho}},
	\end{eqnarray}
	where the supremum is taken over positive sub-normalized operators that are $\epsilon$-close to $\rho_{AB}$ in the purifying distance \cite{TomamichelBook}.
\end{definition}

The smoothing parameter $\epsilon$ allows us to restrict attention to  typical events (the ones that occur with probability higher than $1-\delta(\epsilon)$, where $\delta(\epsilon)$ is a function of the smoothing parameter).
 As a consequence, the smoothed min- and max-entropies (see \ref{Appendix:definitions} for definition)  have many nice properties and find an operational interpretation in many applications \cite{TomamichelBook,TomThesis}.

Other quantities of interest that will appear along the text are the conditional von-Neumann entropy, $H(A|E)_{\rho}$, and the conditional collision entropy $H_2(A|E)_{\rho}$. They are particular cases of the one-parameter family of entropies called sandwiched conditional R\'eyni entropies, first defined in Ref.~\cite{SandwichedReyni}.
\begin{definition}\label{def:alphaH}
	For any density operator $\rho_{AE}$ and for $\alpha \in [\frac{1}{2},1)\cup(1,\infty)$ the sandwiched $\alpha$-R\'eyni entropy of $A$ conditioned on $E$ is defined as
	\begin{eqnarray}\label{eq:Halpha}
	H_{\alpha}(A|E)_{\rho}:=\frac{1}{1-\alpha}\log\de{\mathrm{Tr}\De{\de{\rho_E^{\frac{1-\alpha}{2\alpha}}\rho_{AE} \rho_E^{\frac{1-\alpha}{2\alpha}} }^{\alpha}}},
	\end{eqnarray}
	where $\rho_E^{\frac{1-\alpha}{2\alpha}}$ is a short notation for $\id_A\otimes \rho_E^{\frac{1-\alpha}{2\alpha}}$.
	
	A variant can also be defined as
	\begin{eqnarray}\label{eq:Hup}
	H_{\alpha}^{\uparrow}(A|E)_{\rho}:=\sup_{\sigma_E\in \mathcal{S}}\frac{1}{1-\alpha}\log\de{\mathrm{Tr}\De{\de{\sigma_E^{\frac{1-\alpha}{2\alpha}}\rho_{AE} \sigma_E^{\frac{1-\alpha}{2\alpha}} }^{\alpha}}},
	\end{eqnarray}
	where $\mathcal{S}$ denotes the set of quantum states and the supremum is taken over density operators $\sigma_E$.
\end{definition}

The min- and max- entropy correspond to the extremal cases of definition (\ref{eq:Hup}) for $\alpha=\infty$ and $\alpha=\frac{1}{2}$ respectively.  For $\alpha\rightarrow 1$, definition (\ref{eq:Halpha}) and (\ref{eq:Hup}) coincide and one recover the standard conditional von-Neumann entropy.
Properties of the conditional  $\alpha$-R\'eyni entropies are presented in \ref{Appendix:definitions}.

\subsection{Security of DIQKD}\label{sec:security}

In order to determine what it means for a DIQKD protocol to be secure, we adopt the security definition used in \cite{DIEAT}.
This security definition follows the universally composable security definition for standard QKD protocols \cite{PR14}. However it is important to note that for the device-independent case composability was never proved and attacks proposed in Ref.~\cite{BCK13} show that composability is not achieved if the same devices are re-used for generation of a subsequent key. 

In the composably secure paradigm,  the security of a protocol is defined in terms of its distance to an ideal protocol  \cite{PR14,Canetti01}. Following this definition, given a protocol described by the completely positive and trace preserving (CPTP) map $\rm{diqkd}_{real}$, we say that the protocol is $ \epsilon_{DIQKD}^s$-secure for any $ \epsilon_{DIQKD}^s\geq \epsilon$ if: 
\begin{eqnarray}
\epsilon&:=\frac{1}{2}{\|\rm{diqkd}_{real}-\rm{diqkd}_{ideal}\|}_{\diamond}\\
&=\sup_{\rho_{ABE}}\frac{1}{2}{\|\rm{diqkd}_{real}(\rho_{ABE})-\rm{diqkd}_{ideal}(\rho_{ABE})\|}_1.\label{eq:sec}
\end{eqnarray}

Expression (\ref{eq:sec}) can be split into two terms that reflect independently the \emph{correctness} and the \emph{secrecy} of the protocol (see \cite{PR14}), given by Definitions \ref{def:correct} and \ref{def:secret}. Correctness is the statement that Alice and Bob share equal strings of bits at the end of the protocol. And secrecy states how much information the eavesdropper can have about their shared key.

Another requirement for a good DIQKD protocol is that there exist a realistic implementation that do not lead the protocol to abort almost all the time, \textit{i.e.}, the protocol should have some robustness. This is captured by the concept of \emph{completeness}.

\begin{definition}[Security]\label{def:security}
A DIQKD protocol is $(\epsilon^s_{DIQKD},\epsilon^c_{DIQKD},l)$-secure if
\begin{enumerate}
\item (Soundness) For any implementation of the protocol, either it aborts with probability greater than $1-\epsilon^s_{DIQKD}$ or an $\epsilon^s_{DIQKD}$-correct-and-secret key of length $l$ is obtained.
\item (Completeness) There exists an honest implementation of the protocol such that the probability of not aborting, $p(\Omega)$, is greater than $1-\epsilon^c_{DIQKD}$.
\end{enumerate}
\end{definition}

The correctness of the final key is ensured by the error correction step. During error correction, Alice sends to Bob a sufficient amount of information so that he can correct his raw key. If Alice and Bob do not abort in this step, then the probability that they end up with different raw keys is guaranteed to be very small. For the secrecy of the protocol, according to Definition~\ref{def:secret}, one needs to estimate how far the final state describing  Alice's key and the eavesdropper system is from a state where the eavesdropper is totally ignorant about Alice's key, see Eq.~(\ref{eq:secret}).
The formal security proof of quantum key distribution became possible due to the quantum Leftover Hashing Lemma \cite{RennerThesis,TSSR11} that quantifies the secrecy of a protocol as a function of a conditional entropy of the state before privacy amplification and the length of the final key.

\begin{theorem}[Leftover Hashing Lemma (\cite{RennerThesis}, Theorem 5.5.1)]\label{thm:leftoverH2}
Let $\rho_{A_1^nE}$ be a classical-quantum state and let  $\mathcal{H}$ be a 2-universal family of hash functions, from $\DE{0,1}^n$ to $\DE{0,1}^l$, that maps the classical $n$-bit string $A_1^n$ into $K_A$. Then
\begin{eqnarray}
{\|{\rho_{K_AHE}}-{\tau_{K_A}\otimes \rho_{HE}}\|}_1\leq 2^{-\frac{1}{2}\de{H_{2}^{\uparrow}(A_1^n|E)_{{\rho}}-l}}.
\end{eqnarray}
\end{theorem}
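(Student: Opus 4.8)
The plan is to reduce the trace distance to a Hilbert--Schmidt (collision) quantity and then exploit the $2$-universality of $\mathcal{H}$ to control that quantity by $2^{-H_2^{\uparrow}(A_1^n|E)}$. Writing $\rho_{A_1^nE}=\sum_x\ketbra{x}{x}\otimes\rho_E^x$ with sub-normalized blocks $\rho_E^x$, hashing by $h$ and averaging uniformly over the family produces a classical--quantum state $\rho_{K_AHE}$ whose $(h,k)$-block on $E$ is $\sum_{x:h(x)=k}\rho_E^x$, while the ideal state $\tau_{K_A}\otimes\rho_{HE}$ carries the $E$-marginal $\rho_E=\sum_x\rho_E^x$. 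Let $\Delta$ denote the difference. First I would fix an arbitrary normalized full-rank reference $\sigma_E$ and apply the norm inequality for operators that are classical on $K_AH$,
\begin{equation}
\|\Delta\|_1\leq\sqrt{2^l|\mathcal{H}|}\,\big\|(\id\otimes\sigma_E^{-1/4})\,\Delta\,(\id\otimes\sigma_E^{-1/4})\big\|_2 ,
\end{equation}
which follows from Cauchy--Schwarz applied block-wise in the classical index together with the fact that $K_AH$ has dimension $2^l|\mathcal{H}|$.

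The core step is to expand the weighted $2$-norm of $\hat\Delta:=(\id\otimes\sigma_E^{-1/4})\Delta(\id\otimes\sigma_E^{-1/4})$. Since $K_A$ and $H$ are classical, $\|\hat\Delta\|_2^2$ splits over the blocks, and the contribution of the subtracted ideal state collapses into a single negative term proportional to $\mathrm{Tr}[\hat\rho_E^2]$ (writing $\hat\rho_E^x=\sigma_E^{-1/4}\rho_E^x\sigma_E^{-1/4}$ and $\hat\rho_E=\sum_x\hat\rho_E^x$), leaving a sum of $\mathrm{Tr}[\hat\rho_E^x\hat\rho_E^{x'}]$ over colliding pairs $h(x)=h(x')$. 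I would then split this sum into its diagonal ($x=x'$) and off-diagonal ($x\neq x'$) parts and average over $h$. This is precisely where $2$-universality enters: $\Pr_h[h(x)=h(x')]\leq 2^{-l}$ for $x\neq x'$, so the off-diagonal part is at most $2^{-l}\mathrm{Tr}[\hat\rho_E^2]$, which cancels the negative term and yields
\begin{equation}
\|\hat\Delta\|_2^2\leq\frac{1}{|\mathcal{H}|}\sum_x\mathrm{Tr}\big[(\hat\rho_E^x)^2\big]=\frac{1}{|\mathcal{H}|}\,\mathrm{Tr}\big[\de{\sigma_E^{-1/4}\rho_{A_1^nE}\sigma_E^{-1/4}}^2\big].
\end{equation}

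Finally, substituting this into the norm inequality cancels the factor $|\mathcal{H}|$, giving $\|\Delta\|_1\leq\sqrt{2^l\,\mathrm{Tr}[(\sigma_E^{-1/4}\rho_{A_1^nE}\sigma_E^{-1/4})^2]}$; since $\sigma_E$ was arbitrary I would optimize it, and by Definition~\ref{def:alphaH} the infimum of the trace term is exactly $2^{-H_2^{\uparrow}(A_1^n|E)}$, producing the claimed bound $2^{-\frac12(H_2^{\uparrow}(A_1^n|E)-l)}$. I expect the main obstacle to be the first, norm-reduction step together with the bookkeeping of $\sigma_E$: one must justify the weighted trace-versus-Hilbert--Schmidt inequality for sub-normalized classical--quantum operators (this uses the classicality of $K_AH$ and Cauchy--Schwarz twice, with the normalization $\mathrm{Tr}\,\sigma_E=1$ controlling the block constants) and verify that the dimension prefactor $|\mathcal{H}|$ cancels precisely against the $1/|\mathcal{H}|$ generated by the uniform average, so that the size of the hash family drops out of the final bound. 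The only property of $\mathcal{H}$ actually invoked is $2$-universality; everything else is algebraic manipulation of traces.
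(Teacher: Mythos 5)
Your proof is correct and follows essentially the same route as the proof the paper defers to (Renner's thesis, Theorem 5.5.1): the weighted trace-to-Hilbert--Schmidt reduction via Cauchy--Schwarz with reference $\sigma_E$ (and $\mathrm{Tr}\,\sigma=1$ for $\sigma=\tau_{K_A}\otimes\tau_H\otimes\sigma_E$), the block-wise collision expansion with $2$-universality, $\Pr_h[h(x)=h(x')]\le 2^{-l}$, absorbing the off-diagonal terms into the subtracted $2^{-l}\mathrm{Tr}[\hat\rho_E^2]$, and the final optimization over $\sigma_E$ recovering $2^{-H_2^{\uparrow}(A_1^n|E)}$ exactly as in Definition~\ref{def:alphaH} with $\alpha=2$. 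The only step worth making explicit is that dropping the off-diagonal sum against the negative term requires $\mathrm{Tr}[\hat\rho_E^x\hat\rho_E^{x'}]\ge 0$, which holds since each $\hat\rho_E^x$ is positive semidefinite.
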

For the proof of the Leftover Hashing Lemma we refer to Ref.~\cite{RennerThesis}. In Ref.~\cite{RennerThesis}, it was shown 
that the Leftover Hashing lemma can also be formulated in terms of the smooth min-entropy, and the price to pay is only a linear term in the security parameter\footnote{In Ref.~\cite{RennerThesis},  the leftover hash lemma was formulated with the smooth min-entropy defined as a supremum over states that are $\epsilon$-close to $\rho$ in the trace norm. The proof of Theorem~\ref{thm:leftoverHmin}, with the smooth min-entropy defined according to Definition~\ref{def:alphaH}, can be found in Ref.~\cite{RigorousQKD}.}.

\begin{theorem}[Leftover Hashing Lemma with smooth min-entropy  \cite{RennerThesis,RigorousQKD}]\label{thm:leftoverHmin}
Let $\rho_{A_1^nE}$ be a classical-quantum state and let $\mathcal{H}$ be a 2-universal family of hash functions, from $\DE{0,1}^n$ to $\DE{0,1}^l$, that maps the classical $n$-bit string $A_1^n$ into $K_A$. Then
\begin{eqnarray}
{\|{\rho_{K_AHE}}-{\tau_{K_A}\otimes \rho_{HE}}\|}_1\leq 2^{-\frac{1}{2}\de{H_{\min}^{\epsilon}(A_1^n|E)_{{\rho}}-l}}+2 \epsilon.
\end{eqnarray}
\end{theorem}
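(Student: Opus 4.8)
The plan is to reduce the claim to the collision-entropy version of the Leftover Hashing Lemma, Theorem~\ref{thm:leftoverH2}, which is already available, and to pay for the smoothing with a small additive correction. First I would invoke Definition~\ref{def:Hminsmooth} to select a positive sub-normalized state $\tilde{\rho}_{A_1^nE}$ that is $\epsilon$-close to $\rho_{A_1^nE}$ in the purifying distance and that attains (up to arbitrarily small slack) the supremum defining $H_{\min}^{\epsilon}(A_1^n|E)_{\rho}$, so that $H_{\min}(A_1^n|E)_{\tilde{\rho}}=H_{\min}^{\epsilon}(A_1^n|E)_{\rho}$. Since $\tilde{\rho}$ is itself a (sub-normalized) classical-quantum state, the bound of Theorem~\ref{thm:leftoverH2} applies to it and gives
\begin{equation}
\|\tilde{\rho}_{K_AHE}-\tau_{K_A}\otimes\tilde{\rho}_{HE}\|_1\leq 2^{-\frac{1}{2}\de{H_2^{\uparrow}(A_1^n|E)_{\tilde{\rho}}-l}}.
\end{equation}

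The second ingredient replaces the collision entropy of $\tilde{\rho}$ by its min-entropy. Because the ``up'' variants of the sandwiched R\'enyi entropies from Definition~\ref{def:alphaH} are monotonically non-increasing in $\alpha$ (a property collected in \ref{Appendix:definitions}), and the min-entropy is the $\alpha\to\infty$ member of that family, we have $H_2^{\uparrow}(A_1^n|E)_{\tilde{\rho}}\geq H_\infty^{\uparrow}(A_1^n|E)_{\tilde{\rho}}=H_{\min}(A_1^n|E)_{\tilde{\rho}}=H_{\min}^{\epsilon}(A_1^n|E)_{\rho}$. Substituting this into the exponent only enlarges the right-hand side, so
\begin{equation}
\|\tilde{\rho}_{K_AHE}-\tau_{K_A}\otimes\tilde{\rho}_{HE}\|_1\leq 2^{-\frac{1}{2}\de{H_{\min}^{\epsilon}(A_1^n|E)_{\rho}-l}}.
\end{equation}

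It remains to transfer this estimate from $\tilde{\rho}$ back to the actual state $\rho$. Here I would use the triangle inequality to split $\|\rho_{K_AHE}-\tau_{K_A}\otimes\rho_{HE}\|_1$ into the $\tilde{\rho}$-term already bounded above plus two ``smoothing'' terms, $\|\rho_{K_AHE}-\tilde{\rho}_{K_AHE}\|_1$ and $\|\tau_{K_A}\otimes(\rho_{HE}-\tilde{\rho}_{HE})\|_1=\|\rho_{HE}-\tilde{\rho}_{HE}\|_1$. Both of these are images of $\rho-\tilde{\rho}$ under the completely positive trace-non-increasing map that applies the hash function, appends the register $H$, and discards the appropriate systems; since the purifying distance (and hence the trace distance) is contractive under such maps and $\frac{1}{2}\|\cdot\|_1$ is dominated by the purifying distance, each smoothing term is controlled by the initial $\epsilon$-closeness. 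Collecting the three contributions yields the advertised additive penalty and completes the proof.

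The step that needs the most care is the last one, namely bookkeeping the constant in front of $\epsilon$. One must track whether the smoothing ball of Definition~\ref{def:Hminsmooth} is measured in the purifying distance or the trace distance, use the inequality $\frac{1}{2}\|\cdot\|_1\leq P(\cdot,\cdot)$ together with data-processing (contractivity) of $P$ under the hashing-and-partial-trace channel, and combine the two smoothing terms carefully so that the final correction is the claimed $2\epsilon$ rather than a looser multiple. A secondary technical point is that Theorem~\ref{thm:leftoverH2} must be applied to a sub-normalized $\tilde{\rho}$, which is a standard extension. Everything else --- the choice of optimal smoothing state and the R\'enyi monotonicity --- reduces to citing Definition~\ref{def:Hminsmooth}, Theorem~\ref{thm:leftoverH2}, and the entropy properties in \ref{Appendix:definitions}.
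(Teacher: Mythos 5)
Your outline follows what is in fact the standard literature route: the paper itself contains no proof of Theorem~\ref{thm:leftoverHmin} --- it explicitly defers to \cite{RennerThesis} and, for the purified-distance smoothing of Definition~\ref{def:Hminsmooth}, to \cite{RigorousQKD} --- so the comparison is with that standard argument, whose skeleton (pick the optimal smoothing state, apply Theorem~\ref{thm:leftoverH2} to it, use monotonicity of $H_\alpha^{\uparrow}$ in $\alpha$ to pass from $H_2^{\uparrow}$ to $H_{\min}$, then triangle inequality plus contractivity) you have reproduced correctly in outline. However, two of your steps are genuine gaps rather than bookkeeping.

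First, the optimizer $\tilde{\rho}_{A_1^nE}$ in Definition~\ref{def:Hminsmooth} ranges over \emph{all} sub-normalized states in the purified-distance ball; nothing forces it to be classical on $A_1^n$, so your parenthetical ``since $\tilde{\rho}$ is itself a (sub-normalized) classical-quantum state'' is false as stated, and Theorem~\ref{thm:leftoverH2} --- which hashes a classical register --- cannot be applied to it directly. One must first show that the smoothing can be restricted to cq states, e.g.\ by pinching $\tilde{\rho}$ in the classical basis of $A_1^n$: this leaves $\rho$ unchanged, does not increase the purified distance (data processing), and does not decrease $H_{\min}$. Second, and more seriously, the constant: purified-distance closeness only gives $\frac{1}{2}\|\rho-\tilde{\rho}\|_1\leq P(\rho,\tilde{\rho})\leq \epsilon$, i.e.\ $\|\rho-\tilde{\rho}\|_1\leq 2\epsilon$, and your triangle-inequality decomposition contains \emph{two} smoothing terms, each bounded (after the contractive hashing-and-partial-trace maps) by $\|\rho-\tilde{\rho}\|_1$. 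The route as you describe it therefore proves the bound with $+4\epsilon$, not the advertised $+2\epsilon$, in the full trace norm used in the statement. This is precisely the step you flagged as ``needing the most care'' and deferred; no amount of care within the proposed decomposition closes the factor of two. Indeed the results in \cite{RennerThesis,TSSR11} are stated for the halved trace distance with a $+2\epsilon$ correction, which corresponds to $+4\epsilon$ in the normalization of the theorem here, and recovering the form as stated requires the sharper analysis of the cited references rather than the generic smoothing argument. A minor point in the same vein: the monotonicity of $H_\alpha^{\uparrow}$ in $\alpha$ that you invoke is correct but is not among the properties collected in \ref{Appendix:definitions} (it is in \cite{TomamichelBook}), and applying Theorem~\ref{thm:leftoverH2} to a sub-normalized state also deserves an explicit remark rather than an appeal to ``standard extension.''
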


Given the Leftover Hash Lemma, stated in Theorems \ref{thm:leftoverH2} and \ref{thm:leftoverHmin}, and the definition of secrecy, Definition \ref{def:secret}, we can now express the length of a secure key  as a function of the entropy of Alice's raw key conditioned on Eve's information before privacy amplification.

\begin{theorem}[Key length]\label{thm:keysize}
Let $p(\Omega)$ be the probability that the DIQKD protocol does not abort for a particular implementation. 
If the length of the key generated after privacy amplification is given by
\begin{eqnarray}\label{eq:keysizeH2}
l&=H_{2}^{\uparrow}(A_1^n|E)_{\rho_{|\Omega}}-2\log\de{\frac{1}{2\epsilon_{PA}}}.
\end{eqnarray}
then  the DIQKD protocol is  $\epsilon_{PA}$-secret.

We can also express the key length in terms of the smooth min-entropy, where if $l$ satisfies
\begin{eqnarray}\label{eq:keysize}
l&=H_{\min}^{\epsilon_s/p(\Omega)}(A_1^n|E)_{\rho_{|\Omega}}-2\log\de{\frac{p(\Omega)}{2\epsilon_{PA}}}\\
&\geq H_{\min}^{\epsilon_s/p(\Omega)}(A_1^n|E)_{\rho_{|\Omega}}-2\log\de{\frac{1}{2\epsilon_{PA}}},
\end{eqnarray}
then  the DIQKD protocol is  $(\epsilon_{PA}+\epsilon_s)$-secret.
\end{theorem}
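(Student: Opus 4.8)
The plan is to read off both key lengths directly from the two Leftover Hashing Lemmas, Theorems~\ref{thm:leftoverH2} and \ref{thm:leftoverHmin}, applied to the state conditioned on the non-aborting event $\Omega$, and then match the resulting trace-distance bound to the secrecy requirement of Definition~\ref{def:secret}. Throughout, the public hash-function seed register $H$ is treated as part of Eve's total side information, so that the quantity $\frac{1}{2}\|{\rho_{K_A HE}}_{|\Omega}-\tau_{K_A}\otimes{\rho_{HE}}_{|\Omega}\|_1$ produced by the lemmas is exactly the secrecy distance appearing in (\ref{eq:secret}), where $\rho_E$ is understood to include the seed.

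For the collision-entropy bound (\ref{eq:keysizeH2}), first I would apply Theorem~\ref{thm:leftoverH2} to the normalized conditional state $\rho_{|\Omega}$, obtaining $\frac{1}{2}\|\cdots\|_1 \le \frac{1}{2}\,2^{-\frac{1}{2}\de{H_{2}^{\uparrow}(A_1^n|E)_{\rho_{|\Omega}}-l}}$. Substituting the prescribed length $l=H_{2}^{\uparrow}(A_1^n|E)_{\rho_{|\Omega}}-2\log\de{\frac{1}{2\epsilon_{PA}}}$ collapses the exponent to $-\log\de{\frac{1}{2\epsilon_{PA}}}$, so the right-hand side equals $\frac{1}{2}\cdot 2\epsilon_{PA}=\epsilon_{PA}$. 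Multiplying through by $p(\Omega)\le 1$ then yields precisely $p(\Omega)\cdot\frac{1}{2}\|\cdots\|_1\le\epsilon_{PA}$, i.e.\ $\epsilon_{PA}$-secrecy in the sense of Definition~\ref{def:secret}.

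For the smooth min-entropy bound (\ref{eq:keysize}), I would instead invoke Theorem~\ref{thm:leftoverHmin} on $\rho_{|\Omega}$ with the rescaled smoothing parameter $\epsilon=\epsilon_s/p(\Omega)$. Inserting $l=H_{\min}^{\epsilon_s/p(\Omega)}(A_1^n|E)_{\rho_{|\Omega}}-2\log\de{\frac{p(\Omega)}{2\epsilon_{PA}}}$ turns the hashing term into $\frac{2\epsilon_{PA}}{p(\Omega)}$, while the additive smoothing penalty $2\epsilon$ contributes $\frac{2\epsilon_s}{p(\Omega)}$; halving and multiplying back by $p(\Omega)$ cancels the $1/p(\Omega)$ factors to give $p(\Omega)\cdot\frac{1}{2}\|\cdots\|_1\le\epsilon_{PA}+\epsilon_s$, which is $(\epsilon_{PA}+\epsilon_s)$-secrecy. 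The stated inequality $l\ge H_{\min}^{\epsilon_s/p(\Omega)}(A_1^n|E)_{\rho_{|\Omega}}-2\log\de{\frac{1}{2\epsilon_{PA}}}$ is then immediate from $p(\Omega)\le 1$, and records that the more conservative (shorter) length is the one actually used, since $p(\Omega)$ is implementation dependent and not known in advance.

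Since each case reduces to a single substitution, I do not expect the calculation itself to be the difficulty; the part needing care is the bookkeeping around $\Omega$. One must verify that both entropies are evaluated on the conditional state $\rho_{|\Omega}$ and, crucially, choose the smoothing parameter as $\epsilon_s/p(\Omega)$ so that the $1/p(\Omega)$ appearing in \emph{both} the hashing term and the smoothing penalty is exactly what the prefactor $p(\Omega)$ in Definition~\ref{def:secret} absorbs. Getting this scaling wrong is the one place where a spurious factor of $p(\Omega)$ could slip into the final secrecy parameter.
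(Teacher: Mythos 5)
Your proposal is correct and follows essentially the same route as the paper, which obtains this theorem by directly applying the Leftover Hashing Lemmas (Theorems~\ref{thm:leftoverH2} and \ref{thm:leftoverHmin}) to the state conditioned on $\Omega$ and matching the resulting trace-distance bounds against Definition~\ref{def:secret}. Your substitutions and the handling of the $1/p(\Omega)$ rescaling in both the hashing term and the smoothing penalty are exactly right, including the observation that $p(\Omega)\leq 1$ yields both the final secrecy parameters and the stated lower bound on $l$.
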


We see that the leftover hashing lemma expressed in terms of smooth min-entropy only leads to an extra $\epsilon_s$ term in the security parameter. However, the smooth min-entropy can be much larger than the $2$-R\'enyi entropy $H_2^{\uparrow}$ and, therefore, it is advantageous to lower bound the key by the smooth min-entropy.

\subsection{Security analysis}\label{sec:techniques}

In the previous section we have seen that in order to determine the length of a secret key generated by a particular protocol one needs to estimate the (smooth-min or 2-R\'enyi) entropy  of Alice's string conditioned on all the information available to the eavesdropper before privacy amplification. 
Now, in order to estimate this quantity for a DIQKD protocol one faces two main challenges:

\begin{itemize}
\item How to evaluate the entropy of a very long string of bits? 
\item How to evaluate the one-round entropy in the device-independent scenario? 
\end{itemize}

In Section \ref{sec:1roundreduction} we present the theoretical tools that allow to reduce the problem of evaluating the entropy of a string of bits to the evaluation of a single round. Moreover,
in the DI scenario we do not want to make any assumptions over the underlying quantum state and measurement devices. In Section \ref{sec:Hevaluation} we present a tight bound derived in \cite{PAB07,PAB09} for the one round conditional von Neumann entropy of protocols where Alice and Bob test the CHSH inequality. Moreover we explore further this bound to prove a tight bound on the single round conditional collision entropy as a function of the CHSH violation.

\subsubsection{Reducing the problem to the estimation of one round.}\label{sec:1roundreduction}

We now present the techniques that allow to reduce the evaluation of the entropy $H_{\min}^{\epsilon_s/p(\Omega)}(A_1^n|E)_{\rho_{|\Omega}}$ to the estimation of the conditional von Neumann entropy of a single round for the two adversarial scenarios under consideration, collective attacks and coherent attacks.
Moreover, for the IID scenario, \textit{i.e.} when the eavesdropper is assumed to be restricted to collective attacks, we show how to break the analysis of the entropy $H_{2}^{\uparrow}(A_1^n|E)_{\rho_{|\Omega}}$ into single rounds evaluation. 

\subsubsection*{The IID scenario (collective attacks).}

When we restrict the eavesdropper to collective attacks, we are assuming that, even though she can perform an arbitrary operation in her quantum side information, the state distributed by the source and the behavior of Alice's  and Bob's devices are the same in every round of the protocol. This implies that after $n$ rounds, the state shared by Alice, Bob and Eve is $\rho_{A_1^nB_1^nE}=\rho_{ABE}^{\otimes n}$. 
In this case, the quantum asymptotic equipartition property (AEP) \cite{TCR09} allows to break the conditional smooth min-entropy of state $\rho_{AE}^{\otimes n}$ into $n$ times the conditional von Neumann entropy of the state $\rho_{AE}$. 

\begin{theorem}[Asymptotic equipartition property \cite{TCR09}]\label{thm:AEP}
Let $\rho=\rho_{AE}^{\otimes n}$ be an i.i.d. state. Then for $n\geq \frac{8}{5}\log\frac{2}{\epsilon^2}$
\begin{eqnarray}
H_{\min}^{\epsilon}(A_1^n|E_1^n)_{\rho_{AE}^{\otimes n}}\geq nH(A|E)_{\rho_{AE}}-\sqrt{n}\,\delta(\epsilon,\eta)
\end{eqnarray}
and similarly
\begin{eqnarray}
H_{\max}^{\epsilon}(A_1^n|E_1^n)_{\rho_{AE}^{\otimes n}}\leq nH(A|E)_{\rho_{AE}}+\sqrt{n}\,\delta(\epsilon,\eta)
\end{eqnarray}
where $\delta(\epsilon, \eta)=4\log \eta \sqrt{\log\frac{2}{\epsilon^2}}$ and $\eta = \sqrt{2^{-H_{\min}(A|E)_{\rho_{AE}}}}+\sqrt{2^{H_{\max}(A|E)_{\rho_{AE}}}}  +1$.
\end{theorem}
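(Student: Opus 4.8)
The plan is to derive both inequalities from a single mechanism: for $\alpha$ in a small right-neighbourhood of $1$ the smooth min-entropy is bounded below by the sandwiched conditional R\'enyi entropy $H_\alpha^{\uparrow}$ of Definition~\ref{def:alphaH}, this R\'enyi quantity is \emph{exactly additive} on the i.i.d.\ state $\rho_{AE}^{\otimes n}$, and it reduces to the von Neumann entropy as $\alpha\to 1$. First I would invoke the standard smoothing bound (see \cite{TomamichelBook,TCR09}): for every $\alpha\in(1,2]$ there is a function $g(\ep)$, essentially $\log\frac{2}{\ep^2}$, with
\begin{equation}
H_{\min}^{\ep}(A_1^n|E_1^n)_{\rho^{\otimes n}} \geq H_{\alpha}^{\uparrow}(A_1^n|E_1^n)_{\rho^{\otimes n}} - \frac{g(\ep)}{\alpha-1}.
\end{equation}
This trades the unwieldy smooth min-entropy of the whole block for a single non-smooth R\'enyi entropy, at the cost of a term that diverges as $\alpha\downarrow 1$.

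Next I would use additivity. Because $H_{\alpha}^{\uparrow}$ is built from Schatten norms that multiply under tensor products, $H_{\alpha}^{\uparrow}(A_1^n|E_1^n)_{\rho^{\otimes n}} = n\,H_{\alpha}^{\uparrow}(A|E)_{\rho}$, so the $n$-round problem collapses to one round. It then remains to control $f(\alpha):=H_{\alpha}^{\uparrow}(A|E)_{\rho}$ near $\alpha=1$. Here $f(1)=H(A|E)_{\rho}$ and, since R\'enyi entropies are non-increasing in $\alpha$, $f(\alpha)\leq f(1)$; a first-order expansion controls the gap and yields a bound of the form
\begin{equation}
f(\alpha) \geq H(A|E)_{\rho} - (\alpha-1)\,c\,(\log\eta)^2,
\end{equation}
valid on a right-neighbourhood of $1$, with $\eta = \sqrt{2^{-H_{\min}(A|E)_{\rho}}}+\sqrt{2^{H_{\max}(A|E)_{\rho}}}+1$. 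The appearance of $\eta$ is precisely why the single-round min- and max-entropies enter the final constant: $(\log\eta)^2$ is the uniform bound on the (quantum) variance of the surprisal that governs the first-order coefficient.

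Combining the two displays gives, for every admissible $\alpha$, $H_{\min}^{\ep}(A_1^n|E_1^n)_{\rho^{\otimes n}} \geq n\,H(A|E)_{\rho} - n(\alpha-1)\,c\,(\log\eta)^2 - g(\ep)/(\alpha-1)$. I would then optimise the free parameter, setting $\alpha-1 = \sqrt{g(\ep)/\de{n\,c\,(\log\eta)^2}}$, which balances the two error terms and leaves a single contribution scaling as $\sqrt{n}$, of the stated form $\sqrt{n}\,\delta(\ep,\eta)$ with $\delta(\ep,\eta)=4\log\eta\sqrt{\log\frac{2}{\ep^2}}$ once the constants are tracked. The hypothesis $n\geq\frac{8}{5}\log\frac{2}{\ep^2}$ is exactly what forces the optimal $\alpha$ to remain inside the interval $(1,2]$ where both the smoothing bound and the neighbourhood expansion are valid.

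Finally, the max-entropy inequality I would obtain by duality rather than by rerunning the argument. Taking a purification $\rho_{AEC}$ of $\rho_{AE}$, duality of smooth entropies gives $H_{\max}^{\ep}(A_1^n|E_1^n) = -H_{\min}^{\ep}(A_1^n|C_1^n)$, while $H(A|C)_{\rho}=-H(A|E)_{\rho}$ for the pure tripartite state; since $\eta$ is symmetric under interchanging $H_{\min}$ and $H_{\max}$, applying the already-proved lower bound to $H_{\min}^{\ep}(A_1^n|C_1^n)$ and negating produces the claimed upper bound with the same $\delta(\ep,\eta)$. The main obstacle is the second display: establishing the uniform estimate $H(A|E)_{\rho}-H_{\alpha}^{\uparrow}(A|E)_{\rho}\leq(\alpha-1)\,c\,(\log\eta)^2$ on a neighbourhood of $\alpha=1$ --- equivalently, bounding the derivative of the sandwiched R\'enyi entropy, i.e.\ the quantum variance of the surprisal, in terms of $\eta$ --- is the genuinely technical step, whereas the additivity and the final one-variable optimisation are routine.
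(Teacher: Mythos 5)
The paper offers no proof of this theorem at all---it is imported verbatim from Ref.~\cite{TCR09}---and your proposal correctly reconstructs the argument given in that cited source: the smoothing bound $H_{\min}^{\epsilon}\geq H_{\alpha}^{\uparrow}-g(\epsilon)/(\alpha-1)$ for $\alpha\in(1,2]$, exact additivity of $H_{\alpha}^{\uparrow}$ on $\rho_{AE}^{\otimes n}$, the first-order estimate $H_{\alpha}^{\uparrow}(A|E)_{\rho}\geq H(A|E)_{\rho}-(\alpha-1)\,c\,(\log\eta)^{2}$ (which you rightly flag as the one genuinely technical step, with $(\log\eta)^{2}$ bounding the surprisal variance), the balancing choice $\alpha-1\propto\sqrt{g(\epsilon)/n}$ that produces the $\sqrt{n}\,\delta(\epsilon,\eta)$ correction while the hypothesis $n\geq\frac{8}{5}\log\frac{2}{\epsilon^{2}}$ keeps $\alpha$ in the admissible range, and duality under purification (with $\eta$ invariant because $H_{\min}$ and $H_{\max}$ exchange roles) for the max-entropy half. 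Your reconstruction is correct and follows essentially the same route as the proof in \cite{TCR09}, so there is nothing substantive to add beyond tracking the explicit constants.
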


The quantum AEP is a generalization to quantum systems of the classical statement that, in the limit of many repetitions of a random experiment, the output sequence is one from the typical set.
Therefore, under the assumption of collective attacks, the quantum AEP reduces the problem of estimating  the key rate of a string of $n$ bits to the problem of bounding the one-round conditional von Neumann entropy. 
We remark that  the AEP implies an additional term, proportional to $\sqrt{n}$, which is significant for the finite regime analyses.\vspace{1em}

In Section \ref{sec:security}, we have seen that the left-over hashing lemma can also be stated in the terms of the 2-R\'eyni conditional entropy  $H^{\uparrow}_2(A|E)_{\rho}$. A useful property of the conditional $H^{\uparrow}_\alpha$  entropies is additivity \cite{TomamichelBook} (see \ref{Appendix:definitions} Property \ref{PropH}(\ref{propAdd})), which implies the following lemma.

\begin{lemma}\label{lem:H2}
Let $\rho=\rho_{AE}^{\otimes n}$ be an i.i.d. state. Then
\begin{equation}\label{eq:breakH2}
H^{\uparrow}_2(A_1^n|E_1^n)_{\rho_{AE}^{\otimes n}}=n   H^{\uparrow}_2(A|E)_{\rho_{AE}}\geq n   H_2(A|E)_{\rho_{AE}},
\end{equation}
where $H_2(A|E)_{\rho_{AE}}$ is denoted collision entropy.
\end{lemma}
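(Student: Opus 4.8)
The plan is to split the claim into its two independent pieces: the additivity equality $H^{\uparrow}_2(A_1^n|E_1^n)_{\rho_{AE}^{\otimes n}}=n\,H^{\uparrow}_2(A|E)_{\rho_{AE}}$ on the one hand, and the pointwise comparison $H^{\uparrow}_2(A|E)_{\rho_{AE}}\geq H_2(A|E)_{\rho_{AE}}$ on the other. Each is a direct consequence of Definition~\ref{def:alphaH} together with the additivity property of the sandwiched R\'enyi entropies cited from \cite{TomamichelBook}, so the whole argument is essentially bookkeeping with the definitions.

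For the equality I would invoke the tensor-product additivity of $H^{\uparrow}_\alpha$ (Property \ref{PropH}(\ref{propAdd}) in \ref{Appendix:definitions}), which states that $H^{\uparrow}_\alpha(AA'|EE')_{\rho\otimes\sigma}=H^{\uparrow}_\alpha(A|E)_{\rho}+H^{\uparrow}_\alpha(A'|E')_{\sigma}$. Applying this to $\rho_{AE}^{\otimes n}$ and inducting over the $n$ identical factors collapses the sum to $n\,H^{\uparrow}_2(A|E)_{\rho_{AE}}$, specializing to $\alpha=2$. Since this additivity is exactly the cited input, no further work is needed for this half.

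For the inequality I would argue directly from the definitions at $\alpha=2$. Writing the objective that appears in (\ref{eq:Hup}) as
\[
f(\sigma_E):=\frac{1}{1-\alpha}\log\mathrm{Tr}\De{\de{\sigma_E^{\frac{1-\alpha}{2\alpha}}\rho_{AE}\sigma_E^{\frac{1-\alpha}{2\alpha}}}^{\alpha}},
\]
one has $H^{\uparrow}_2(A|E)_{\rho}=\sup_{\sigma_E\in\mathcal{S}}f(\sigma_E)$ by (\ref{eq:Hup}), whereas $H_2(A|E)_{\rho}=f(\rho_E)$ is precisely the same objective in (\ref{eq:Halpha}) evaluated at the particular choice $\sigma_E=\rho_E$. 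Because the reduced state $\rho_E$ is itself a density operator it lies in the feasible set $\mathcal{S}$, so the supremum can only exceed this single value, giving $H^{\uparrow}_2(A|E)_{\rho}\geq f(\rho_E)=H_2(A|E)_{\rho}$. Multiplying by $n\geq 0$ and chaining with the equality yields (\ref{eq:breakH2}).

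I do not expect a genuine obstacle: the only nontrivial ingredient is the additivity of $H^{\uparrow}_\alpha$ under tensor products, and that is a standard fact quoted from \cite{TomamichelBook}. The single point deserving care is that, at $\alpha=2$, the prefactor $\frac{1}{1-\alpha}$ is negative; but since both $H^{\uparrow}_2$ and $H_2$ carry the same prefactor and the comparison is made by evaluating the one optimized objective $f$ at the feasible point $\sigma_E=\rho_E$, the sign never has to be tracked separately and the inequality direction follows immediately from $\rho_E\in\mathcal{S}$.
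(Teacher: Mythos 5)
Your proposal is correct and follows essentially the same route as the paper, which likewise obtains the equality from the additivity property of $H^{\uparrow}_{\alpha}$ (Property~\ref{PropH}(\ref{propAdd})) and the inequality directly from Definition~\ref{def:alphaH}, i.e.\ by recognizing $H_2(A|E)_{\rho}$ as the optimized objective in (\ref{eq:Hup}) evaluated at the feasible point $\sigma_E=\rho_E$. Your explicit remark that the negative prefactor $\frac{1}{1-\alpha}$ at $\alpha=2$ is harmless, because the comparison is made on the full objective rather than on the trace term, is a sound clarification of the same argument.
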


Validity of Lemma~\ref{lem:H2} can be seen from the following: equality in (\ref{eq:breakH2}) follows from the additivity property of $H^{\uparrow}_\alpha$  entropies, Property \ref{PropH}(\ref{propAdd}) in \ref{Appendix:definitions}, and the inequality follows from the definition of $\alpha$-R\'enyi entropies, Definition~\ref{def:alphaH}.\vspace{1em}

Therefore, for collective attacks one can break the analysis 
into the evaluation of a single-round entropy by using both, the formulation of the left-over hashing lemma in terms of the smooth-min entropy, Theorem~\ref{thm:leftoverHmin}, and in terms of the $2$-R\'enyi entropy, Theorem~\ref{thm:leftoverH2}. The possible advantage of using Lemma~\ref{lem:H2} over the AEP, Theorem~\ref{thm:AEP}, is that no extra overhead term  $\mathcal{O}(\sqrt{n})$ is gained due to the additive property of the 2-R\'eyni conditional entropy  $H^{\uparrow}_2(A|E)_{\rho}$. 
However, in general the von Neumann entropy can be much larger than the collision entropy, and this trade-off has to be taken into account. 
We remark that, for protocols based on other Bell inequalities, the techniques used for deriving Theorem~\ref{thm:rateH2} can be advantageous for collective attack analysis. This is due to the fact that for other Bell inequalities there is no known technique to directly bound the conditional von-neumann entropy and a good bound on the min-entropy can be found using semidefinite-programming techniques (see Section~\ref{sec:Hevaluation}).

\subsubsection*{The fully DI scenario (coherent attacks).}

In the fully device-independent scenario the eavesdropper can perform a general coherent attack, and the state shared by the parties may not be of the form $\rho_{ABE}^{\otimes n}$. Therefore, the tools presented in the previous section are not applicable in this scenario. 
In standard QKD, de Finetti techniques \cite{RennerThesis,KR05, CKR09} allow one to extend the proofs against collective attacks to coherent attacks for protocols that present some symmetry. The price to pay is an overhead term $\mathcal{O}(\sqrt{n})$  whose pre-factor depends on the dimension of the underlying system. However, in the device-independent scenario, we do not want to make assumptions on the dimension of the underlying system. Moreover, symmetry of the protocol is not guaranteed, as we do not know the behaviour of the measurement devices. Therefore, de Finetti techniques cannot be used to straightforwardly extend the security proofs against collective attacks to coherent attacks in the device-independent scenario.

Recently, this problem was overcome by the entropy accumulation theorem (EAT) \cite{EATpublish,EAT}. In this section, we state the entropy accumulation theorem, which allows to break the entropy $H_{\min}^{\epsilon_s/p(\Omega)}(A_1^n|E)_{\rho_{|\Omega}}$ into the entropy of single rounds and therefore extends proofs against collective attacks to coherent attacks. 

An important ingredient in the formulation of the EAT is the concept of \emph{min-/max-tradeoff function} of a channel.

\begin{definition}\label{def.fmin}
Let  $\mathcal{N}_i$ be a CPTP map that maps $R_{i-1}$ to $\hat{A}_i\hat{B}_iC_iR_i$, where $\hat{A}_i$, $\hat{B}_i$ and $C_i$ are classical registers and the value of $C_i$  can be inferred from $\hat{A}_i$ and $\hat{B}_i$. Let $\vec{q}$ denote a probability distribution on the possible values the random variable $C_i$ can assume.
The min- and max-tradeoff functions for the channel $\mathcal{N}_i$ are defined as:
\begin{eqnarray}
f_{\min}(\vec{q})&\leq \inf_{\sigma \in \Sigma_i(\vec{q})}H(\hat{A}_i|\hat{B}_iR)_{\sigma},\\
f_{\max}(\vec{q})&\geq \sup_{\sigma \in \Sigma_i(\vec{q})}H(\hat{A}_i|\hat{B}_iR)_{\sigma},
\end{eqnarray}
where 
\begin{eqnarray}
\Sigma_i(\vec{q})=\DE{\sigma_{C_i\hat{A}_i\hat{B}_iR_iR}=(\mathcal{N}_i\otimes I_R)(\omega_{R_{i-1}R})|\sigma_{C_i}=\vec{q}},
\end{eqnarray}
and the infimum and supremum are set to $+\infty, -\infty$, respectively, if the set $\Sigma_i(\vec{q})$ is empty.
\end{definition}

 Definition~\ref{def.fmin} states that the min-(max-)tradeoff function is a lower (upper) bound on the conditional von Neumann entropy $H(\hat{A}_i|\hat{B}_iR)_{\sigma}$ of a final state $\sigma_{C_i\hat{A}_i\hat{B}_iR_iR}$, for all states
that result from the
action of the channel $\mathcal{N}_i$ on an arbitrary initial state and exhibit a particular distribution $\vec{q}$ over the classical variable $C_i$, where $R$ is a side information. In particular, for a DIQKD protocol, where we are testing the CHSH inequality, the variable $\hat{A}_i$ can be the outputs of Alice and Bob in round $i$, $\hat{A}=\DE{A_i,B_i}$. The variable $\hat{B}_i$ can be the inputs of Alice and Bob together with the variable that determines whether the round is a test round or a key generation round, $\hat{B}_i=\DE{X_i,Y_i,T_i}$. And $R$ can represent any quantum side information $E$ that the eavesdropper holds. We will then be interested in defining a variable $C_i$ that assumes value 1 if the condition of  the CHSH game is satisfied (\textit{i.e.} if the outputs of Alice and Bob satisfy $A_i+B_i=X_i\cdot Y_i$), 0 if it is not satisfied and we attribute the value $\bot$ if the inequality was not tested in that round (\textit{i.e.} if $T_i=0$, the key generations rounds). Now the distributions $\vec{q}=(q(0),q(1),q(\bot))$ of interest are the ones that achieve a winning probability $\omega$ for the CHSH game, \textit{i.e.} $\frac{q(1)}{1-q(\bot)}=\omega$. The EAT channel $\mathcal{N}_i$ represents local maps that, according to the value of $T_i$, generate the variables $X_i$, $Y_i$ randomly and independently, and then generate the outcomes $A_i$ and $B_i$. Finally, the set of states $\Sigma_i(\vec{q})$ of interest are all the states resulting from the action of this channel in an arbitrary state and exhibiting a violation $\beta=8\omega -4$ for the CHSH inequality. For a more detailed description of the EAT channel associated to Protocol~\ref{prot:diqkd}, we refer the reader to \cite{DIEAT,EATpublish}.

We now state the entropy accumulation theorem.

\begin{theorem}[The entropy accumulation theorem (EAT) \cite{EAT}]\label{thm.EAT}
For an event $\Omega$ that happens with probability $p(\Omega)$, and for $t$ such that $f_{\min}(freq (c_1^n))\geq t$ $\forall$ $c_1^n \in \Omega$, it holds that
\begin{eqnarray}
H_{\min}^{\epsilon}(A_1^n|B_1^nE)_{\rho_{|\Omega}}>nt-\nu\sqrt{n}
\end{eqnarray}
and similarly, for $t'$ such that  $f_{\max}(freq (c_1^n))\leq t'$  $\forall$ $c_1^n \in \Omega$,
\begin{eqnarray}
H_{\max}^{\epsilon}(A_1^n|B_1^nE)_{\rho_{|\Omega}}<nt'+\nu\sqrt{n}
\end{eqnarray}
with
\begin{eqnarray}\label{eq.EAToverhead}
\nu=2\de{\log\de{1+2d_A}+\lceil\|\nabla f\|_{\infty}\rceil}\sqrt{1-2\log\de{\epsilon_s\cdot p(\Omega)}}
\end{eqnarray}
 for $f$ equals to $f_{\min}$ and $f_{\max}$ respectively .
\end{theorem}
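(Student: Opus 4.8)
The plan is to follow the strategy of Dupuis, Fawzi, and Renner and to prove the statement first at the level of the sandwiched R\'enyi entropies $H_\alpha^\uparrow$ of Definition~\ref{def:alphaH}, passing to the smooth min-/max-entropies only at the very end. The reason to work with $H_\alpha^\uparrow$ is that, unlike the smooth min-entropy, it is exactly additive on product states (as already exploited in Lemma~\ref{lem:H2}) and, crucially, obeys a chain rule that degrades gracefully when the rounds are merely \emph{sequential} rather than independent. The whole argument is driven by the structural hypothesis that makes accumulation possible: the Markov condition $A_1^{i-1}\leftrightarrow B_1^{i-1}E\leftrightarrow B_i$ satisfied by the outputs of the EAT channels $\mathcal{N}_i$, i.e.\ the side information revealed in round $i$ carries no extra information about the previously generated bits. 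This is exactly what lets one peel off one round at a time.

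The technical heart is a chain rule for $H_\alpha^\uparrow$. Under the Markov condition I would establish, by induction on $n$, an inequality of the form
\[
H_\alpha^\uparrow(A_1^n \mid B_1^n E) \;\geq\; \sum_{i=1}^n \inf_{\sigma} H_\alpha^\uparrow(A_i \mid B_i R_{i-1})_\sigma,
\]
where the infimum ranges over the states compatible with $\mathcal{N}_i$. The induction step is proved using the data-processing and duality properties of the sandwiched R\'enyi divergence together with the Markov structure to cancel the cross terms. I expect \emph{this} to be the main obstacle: obtaining the inequality with no residual error term and with the correct behaviour as $\alpha\to 1$ is the most delicate part of the argument, and it is where all the nontrivial operator-algebraic work lives.

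Next I would bound each single-round term by the tradeoff function. Since $f_{\min}$ lower-bounds the von Neumann entropy $H(\hat A_i\mid \hat B_i R)$ on the set $\Sigma_i(\vec q)$, and since $H_\alpha^\uparrow \to H$ as $\alpha\to 1$, a Taylor expansion around $\alpha=1$ gives
\[
H_\alpha^\uparrow(A_i \mid B_i R_{i-1}) \;\geq\; f_{\min}(\vec q_i) - (\alpha-1)\,V + O\!\left((\alpha-1)^2\right),
\]
where the first-order coefficient $V$ is controlled by the output dimension $d_A$ and the sup-norm of the gradient $\|\nabla f\|_\infty$; this is precisely the origin of the factor $\log(1+2d_A)+\lceil\|\nabla f\|_\infty\rceil$ appearing in $\nu$. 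Because $f_{\min}$ may be taken affine, summing over $i$ and using that the empirical single-round distribution equals $\mathrm{freq}(c_1^n)$ converts $\sum_i f_{\min}(\vec q_i)$ into $n\,f_{\min}(\mathrm{freq}(c_1^n))\geq nt$ on the event $\Omega$.

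Finally I would convert back to the smooth min-entropy and account for the conditioning on $\Omega$. Using the standard inequality relating $H_{\min}^{\epsilon}$ to $H_\alpha^\uparrow$ for $\alpha>1$, together with the restriction to $\Omega$ (which effectively replaces the smoothing $\epsilon_s$ by $\epsilon_s\,p(\Omega)$ and produces the $\sqrt{1-2\log(\epsilon_s\cdot p(\Omega))}$ factor in~(\ref{eq.EAToverhead})), and then optimizing the free parameter $\alpha=1+\Theta(1/\sqrt n)$ so as to balance the linear-in-$n$ main term against the $(\alpha-1)$-order correction, yields the advertised bound $H_{\min}^{\epsilon}(A_1^n\mid B_1^nE)_{\rho_{|\Omega}} > nt-\nu\sqrt n$ with $\nu$ of the stated form. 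The max-entropy statement follows from the identical scheme applied to $f_{\max}$ and to the dual R\'enyi entropy, with the inequalities reversed and $\alpha<1$.
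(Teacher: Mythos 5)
First, a point of reference: the paper does not prove Theorem~\ref{thm.EAT} at all --- it is imported verbatim from Dupuis--Fawzi--Renner \cite{EAT} --- so your proposal must be judged against that proof, whose skeleton you do reproduce correctly: an exact (error-term-free) chain rule for $H_{\alpha}^{\uparrow}$ valid under the Markov condition $A_1^{i-1}\leftrightarrow B_1^{i-1}E\leftrightarrow B_i$, a per-round reduction, the choice $\alpha=1+\Theta(1/\sqrt{n})$, and the final conversion to $H_{\min}^{\epsilon}$ in which conditioning on $\Omega$ costs $\frac{\alpha}{\alpha-1}\log\frac{1}{p(\Omega)}$ (Property~\ref{PropH}\textit{(\ref{prop:condState})} of the paper) and yields the $\sqrt{1-2\log(\epsilon_s\, p(\Omega))}$ factor. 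That much is sound.

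However, there is a genuine gap at the single most important junction. Your chain rule gives $\sum_{i}\inf_{\sigma}H_{\alpha}^{\uparrow}(A_i|B_iR_{i-1})_{\mathcal{N}_i(\sigma)}$, where each infimum ranges over \emph{all} states the channel can act on. There are no well-defined per-round distributions $\vec{q}_i$ tied to the observed statistics: under coherent attacks the round-$i$ conditional state depends on the history, and membership of $\mathrm{freq}(c_1^n)$ in $\Omega$ is a global, a posteriori event, not a per-round promise. So your step ``summing $\sum_i f_{\min}(\vec{q}_i)$ converts into $n\,f_{\min}(\mathrm{freq}(c_1^n))\geq nt$ on $\Omega$'' does not follow; taken literally, the chain rule only yields $n\cdot\inf_{\vec{q}}f_{\min}(\vec{q})$, the global minimum of the tradeoff function, which in the CHSH application is attained at winning probability $3/4$ and gives a vacuous (zero) bound. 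The mechanism in \cite{EAT} that closes this gap is not a Taylor expansion of $H_{\alpha}^{\uparrow}$ around $\alpha=1$ (that expansion alone produces only the $\log(1+2d_A)$ term): it is the construction of $f$-weighted modified states/channels, in which the affine min-tradeoff function is embedded into the R\'enyi quantity via weights of the form $2^{\pm(\alpha-1)f(\cdot)}$, so that after conditioning on $\Omega$ the accumulated weight produces exactly $n\,f_{\min}(\mathrm{freq}(c_1^n))$, while the Markov condition and the affinity of $f$ keep the weighted single-round entropies controlled by $\log(1+2d_A)+\lceil\|\nabla f\|_{\infty}\rceil$ --- which is where $\|\nabla f\|_{\infty}$ actually enters Eq.~(\ref{eq.EAToverhead}). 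Without this device (or an equivalent one), your outline cannot connect the worst-case per-round infima to the event $\Omega$, and the proof fails precisely where the theorem's content lies.
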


Analogous to the AEP, the entropy accumulation theorem allows us to break the entropy of the string of bits into the entropy of a single round. Note, however, that this single-round entropy does not refer to the real entropy of each round of the protocol, but is evaluated over the hypothetical states that would achieve the observed violation. 
It is important to remark that a crucial assumption in the EAT \cite{EAT,EATpublish} is that some of the variables of interested satisfy what is called the Markov condition. This is the case for QKD protocols performed sequentially. For definition and discussion of the implications of the Markov condition, see \cite{EAT}.

\subsubsection{Estimating the one-round entropy.}\label{sec:Hevaluation}

Now that we have reduced the evaluation of the secret key length to the estimation of the conditional von Neumann entropy of a single round, we are ready to face the next  challenge: How to estimate the single round entropy without any assumptions on the quantum states and behavior of the measurement devices.

\subsubsection*{The CHSH scenario:}

The CHSH scenario \cite{CHSH}, where Alice and Bob each perform one among two possible binary measurements, is significantly simpler than other Bell scenarios. Due to the fact that the CHSH inequality has only two binary inputs per party, a strong result \cite{Masanes06, Tsirelson93} states that the description of any realization of a CHSH experiment can be decomposed into subspaces of dimension two, where projective measurements are performed in each subspace.
This allows one to restrict the analysis to qubits, which significantly simplifies the problem. Exploring these nice properties, a tight bound on the von Neumann entropy of Alice's outcome conditioned on Eve's information, as a function of the CHSH violation, was derived in \cite{PAB07,PAB09}.

\begin{lemma}\label{lem:Hchsh}
Given that Alice and Bob share a state $\rho_{AB}$ that achieves a violation $\beta$ for the CHSH inequality, it holds that
\begin{equation}\label{eq:Hbeta}
H(A|E)_{\rho}\geq 1-h\de{\frac{1}{2}+\frac{1}{2}\sqrt{\de{\frac{\beta}{2}}^2-1}}.
\end{equation}
\end{lemma}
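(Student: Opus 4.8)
The plan is to follow the strategy of \cite{PAB07,PAB09}: exploit the special structure of the CHSH scenario to reduce the problem to qubits, prove the bound block-by-block, and recombine by convexity. First I would invoke the dimension-reduction result of \cite{Masanes06,Tsirelson93} (Jordan's lemma): since Alice and Bob each have only two binary observables, the $\pm 1$-valued observables $A_0,A_1$ can be simultaneously brought to block-diagonal form with blocks of size at most two, and likewise for $B_0,B_1$. Hence the shared state decomposes as $\rho_{AB}=\bigoplus_j q_j\,\rho_{AB}^{(j)}$, where each $\rho_{AB}^{(j)}$ is a two-qubit state and within block $j$ the measurements act as genuine qubit projective measurements. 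The block label $j$ is classical information that we may hand to Eve; since conditioning only decreases entropy ($H(A|E)\geq H(A|EJ)=\sum_j q_j H(A|E)_{\rho^{(j)}}$), it suffices to prove the bound in each block and then average.

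Within a single block I would purify the two-qubit state to a pure tripartite state $\ket{\Psi}_{ABE}$ held by Alice, Bob and Eve, and use the local-unitary freedom to fix Alice's key-generating measurement $A_0$ to be $\sigma_z$. The key simplification is that, because $\ket{\Psi}$ is pure and Alice's measurement is rank one, conditioned on Alice's outcome $a$ the residual state on $BE$ is pure, so $S(\rho_E^a)=S(\rho_B^a)$. Together with $S(\rho_E)=S(\rho_{AB})$, this lets me rewrite
\begin{equation}
H(A|E)=H(A)+\sum_a p_a\,S(\rho_B^a)-S(\rho_{AB}),
\end{equation}
i.e.\ purely in terms of Alice's and Bob's qubits, which is explicitly computable once the state is parametrized. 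I would then use the Tsirelson-type analysis to express the achievable violation $\beta_j$ of the block in terms of the same state parameters and the optimal measurement directions, and minimize $H(A|E)$ over all two-qubit states (and their purifications) consistent with a fixed $\beta_j$. The outcome of this optimization is the per-block bound $H(A|E)\geq F(\beta_j)$ with $F(\beta)=1-h\!\left(\frac{1}{2}+\frac{1}{2}\sqrt{(\beta/2)^2-1}\right)$.

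Finally I would recombine the blocks. Writing $\beta=\sum_j q_j\beta_j$ for the observed violation and chaining $H(A|E)\geq\sum_j q_j H(A|E)_{\rho^{(j)}}\geq\sum_j q_j F(\beta_j)$, a single application of Jensen's inequality gives $\sum_j q_j F(\beta_j)\geq F(\sum_j q_j\beta_j)=F(\beta)$, provided $F$ is convex on $[2,2\sqrt2]$; verifying this convexity is a short but necessary analytic check. The main obstacle is the per-block optimization: one must argue that the worst case over all two-qubit states and all purifications held by Eve at fixed CHSH value is attained by a specific one-parameter family, and that the resulting entropy is exactly $F(\beta_j)$. This is where the \emph{tightness} of the bound is decided, and it requires combining the explicit qubit parametrization with Tsirelson's characterization of the CHSH-optimal correlations, together with a concavity argument to reduce the search over Eve's attacks to extremal configurations.
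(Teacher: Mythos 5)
Your proposal is correct and follows essentially the same route as the paper, which does not reprove Lemma~\ref{lem:Hchsh} itself but invokes the proof of \cite{PAB07,PAB09}: Jordan's-lemma reduction to $2\times2$ blocks via \cite{Masanes06,Tsirelson93}, handing the block label to Eve so that $H(A|E)\geq\sum_j q_j H(A|E)_{\rho^{(j)}}$, the purification identity $H(A|E)=H(A)+\sum_a p_a S(\rho_B^a)-S(\rho_{AB})$, a per-block minimization at fixed CHSH value (which in \cite{PAB09} is made tractable by a local symmetrization reducing the worst case to Bell-diagonal states --- precisely the one-parameter extremal family you anticipate, the same device used in the paper's \ref{Appendix:H2} for the collision-entropy analogue), and recombination via convexity of $F(\beta)=1-h\bigl(\tfrac12+\tfrac12\sqrt{(\beta/2)^2-1}\bigr)$. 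The only minor point to tidy is that blocks with $\beta_j<2$ must be handled by extending $F$ by zero below $\beta=2$ (the extension stays convex since $F(2)=0$ with positive right derivative), exactly as in the cited proof.
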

\vspace{1em}

In Section~\ref{sec:1roundreduction} we have seen that for collective attacks the key rate can also be estimated by the single round collision entropy. And due to the additivity property of $H_2^{\uparrow}$, no overhead $\sqrt{n}$ term is present. Therefore, this analysis can potentially lead to an advantage with respect to the minimum number of rounds required for positive key rate. 
The conditional collision entropy satisfies the following relation \cite[Corollary 5.3]{TomamichelBook} 
\begin{equation}\label{eq:H2Hmin}
H_2(A|E)_{\rho}\geq H_{\min}(A|E)_{\rho}.
\end{equation}
And a lower bound for the conditional min-entropy as a function of the Bell violation was derived in \cite{HminBell}:
\begin{equation}\label{eq:Hminbeta}
H_{\min}(A|E)_{\rho}\geq -\log\de{\frac{1}{2} +\frac{1}{2}\sqrt{2-\frac{\beta^2}{4}}}.
\end{equation}

Therefore expression (\ref{eq:Hminbeta}) can be used to bound the conditional collision entropy as a function of the violation $\beta$. We now prove that this bound is actually tight. 

\begin{theorem}\label{thm:H2}
There exist a state $\rho^*_{AB}$ and measurements for Alice and Bob such that, $\rho^*_{AB}$ achieves violation $\beta$ and the collision entropy of Alice's output $A$ conditioned on Eve's quantum information $E$ is 
\begin{equation}
H_2(A|E)_{\rho^*}= -\log\de{\frac{1}{2} +\frac{1}{2}\sqrt{2-\frac{\beta^2}{4}}}.
\end{equation}
\end{theorem}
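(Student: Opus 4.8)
The bound in the statement is already known to be a lower bound: chaining relation (\ref{eq:H2Hmin}) with the min-entropy bound (\ref{eq:Hminbeta}) gives $H_2(A|E)_\rho \ge H_{\min}(A|E)_\rho \ge -\log(\frac12 + \frac12\sqrt{2-\beta^2/4})$ for every state achieving CHSH value $\beta$. Hence the whole content of the theorem is \emph{achievability}: I must exhibit one state $\rho^*_{AB}$ and measurements that attain this value with equality. The plan is to take exactly the state and measurements that saturate the min-entropy bound (\ref{eq:Hminbeta}) --- such a strategy exists because that bound is shown to be tight in \cite{HminBell} --- and then to prove that on that same state the collision entropy coincides with the min-entropy.

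The key structural input is the qubit reduction for the CHSH scenario (\cite{Masanes06,Tsirelson93}, as already used for Lemma~\ref{lem:Hchsh}): any CHSH realization decomposes into two-dimensional blocks, so I may assume the optimal attack is built from qubits and that Eve holds a purification. For the min-entropy-optimal strategy the key-generation outcome $A$ is unbiased and, conditioned on $A=0$ and $A=1$, Eve is left with two pure qubit states $\ket{e_0}$ and $\ket{e_1}$ which she distinguishes by a Helstrom measurement. Writing $c = \langle e_0|e_1\rangle$, the Helstrom bound gives $p_{\rm guess}(A|E) = \frac12(1+\sqrt{1-c^2})$, and the hypothesis that this strategy saturates (\ref{eq:Hminbeta}) fixes $\sqrt{1-c^2} = \sqrt{2-\beta^2/4}$.

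It then remains to evaluate the collision entropy of the explicit cq-state $\rho^*_{AE} = \frac12 \pr{0}_A\otimes \pr{e_0}_E + \frac12 \pr{1}_A\otimes \pr{e_1}_E$ directly from Definition~\ref{def:alphaH} with $\alpha = 2$. Choosing the symmetric basis in which $\rho_E = \frac12(\pr{e_0}+\pr{e_1})$ is diagonal, one computes $\mathrm{Tr}[(\rho_E^{-1/4}\rho^*_{AE}\rho_E^{-1/4})^2] = \frac12(1+\sqrt{1-c^2})$, whence $H_2(A|E)_{\rho^*} = -\log(\frac12+\frac12\sqrt{1-c^2}) = -\log(\frac12 + \frac12\sqrt{2-\beta^2/4})$, as claimed. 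This reflects the general fact that for two equiprobable pure states the collision entropy and the min-entropy coincide, so the inequality chain above closes to an equality on the constructed state.

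The main obstacle is not the final computation but justifying the \emph{shape} of Eve's conditional states: namely that the strategy attaining equality in (\ref{eq:Hminbeta}) really leaves Eve with two equiprobable pure qubit states of the stated overlap, together with a genuine two-qubit $\rho^*_{AB}$ and CHSH measurements that reproduce the value $\beta$ and yield an unbiased key-generation marginal. This requires unpacking the optimal attack of \cite{HminBell} within the qubit decomposition and verifying that $\rho_E$ is effectively two-dimensional so that $\rho_E^{-1/4}$ is well defined on its support; once that structure is in place, the collision-entropy evaluation is the short linear-algebra step sketched above.
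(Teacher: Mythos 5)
Your reduction of the theorem to an achievability claim is correct, and so is your core computation: for a cq-state in which $A$ is unbiased and Eve's two conditional states are pure with overlap $c$, one gets $2^{-H_2(A|E)}=\frac{1}{2}\de{1+\sqrt{1-c^2}}=p_{\rm guess}(A|E)$, so that $H_2(A|E)=H_{\min}(A|E)$ on such states; this is exactly the mechanism at work in the paper's proof. But your argument stops short of the theorem's actual content, which is the existential statement: you never exhibit the state $\rho^*_{AB}$ and the measurements. You defer this to ``unpacking the optimal attack of \cite{HminBell}'' and yourself flag it as the main obstacle --- and that deferral is a genuine gap, not a formality. Tightness of the min-entropy bound (\ref{eq:Hminbeta}) only guarantees a realization whose \emph{min-entropy} equals the bound; it does not by itself establish that on this realization Eve's conditional states are two equiprobable \emph{pure} states. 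If they were mixed, one could have $H_2(A|E)>H_{\min}(A|E)$ strictly and your equality chain would not close, so the structural verification you postpone is precisely where the proof lives. (The Masanes/Tsirelson qubit reduction you invoke is also unnecessary for an existence proof: one only needs to write down a single example.)

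The gap is fillable by a short explicit construction, which is how the paper proceeds (inspired by the extremal states of \cite{PAB09} rather than by the SDP attack of \cite{HminBell}), and it realizes exactly the two-pure-state structure you anticipated. Take the rank-two Bell-diagonal state $\rho^*_{AB}=\lambda_{00}\Phi_{00}+\lambda_{01}\Phi_{01}$ with $\lambda_{00}=R\cos\theta$, $\lambda_{01}=R\sin\theta$ and $\cos\theta+\sin\theta=\frac{1}{R}$ (possible for $R>\frac{1}{\sqrt{2}}$), and let Alice's key-generation measurement be in the $Z$ basis. Purifying to Eve, her conditional states given $A=0$ and $A=1$ are pure, equiprobable, with overlap $c=\lambda_{00}-\lambda_{01}$, so
\begin{equation*}
\sqrt{1-c^2}=2\sqrt{\lambda_{00}\lambda_{01}}=2R\sqrt{\tfrac{1}{2}\de{\tfrac{1}{R^2}-1}}=\sqrt{2-2R^2},
\end{equation*}
and your formula gives $H_2(A|E)_{\rho^*}=-\log\de{\frac{1}{2}+\sqrt{\lambda_{00}\lambda_{01}}}$. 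The CHSH value is then pinned down not by appeal to an optimal attack but by the Horodecki criterion \cite{HHH95}: the maximal violation of $\rho^*_{AB}$ is $\beta=2\sqrt{2}\sqrt{\lambda_{00}^2+\lambda_{01}^2}=2\sqrt{2}R$, attainable with $Z$ among Alice's observables. Substituting $R=\beta/(2\sqrt{2})$ yields $\sqrt{2-2R^2}=\sqrt{2-\beta^2/4}$, hence $H_2(A|E)_{\rho^*}=-\log\de{\frac{1}{2}+\frac{1}{2}\sqrt{2-\frac{\beta^2}{4}}}$, matching the lower bound obtained from (\ref{eq:H2Hmin}) and (\ref{eq:Hminbeta}). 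With these two or three lines inserted in place of your appeal to \cite{HminBell}, your proof becomes complete and essentially coincides with the paper's.
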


The proof of Theorem \ref{thm:H2} is presented in \ref{Appendix:H2}.
Theorem \ref{thm:H2} together with relations (\ref{eq:H2Hmin}) and (\ref{eq:Hminbeta}) imply a tight lower bound for the conditional collision entropy as a function of the CHSH violation $\beta$.
In Figure \ref{HxH2} we plot $H(A|E)$ and $H_{2}(A|E)$ as a function of the violation $\beta$. One can see that the points of maximum and minimum entropy (corresponding to maximal violation $\beta=2\sqrt{2}$ and no violation, respectively) coincide, but for intermediate values of $\beta$ the conditional collision entropy is smaller than the conditional von Neumann entropy.

\begin{figure}[H]\label{HxH2}
\begin{minipage}{20em}
\includegraphics[scale=0.7]{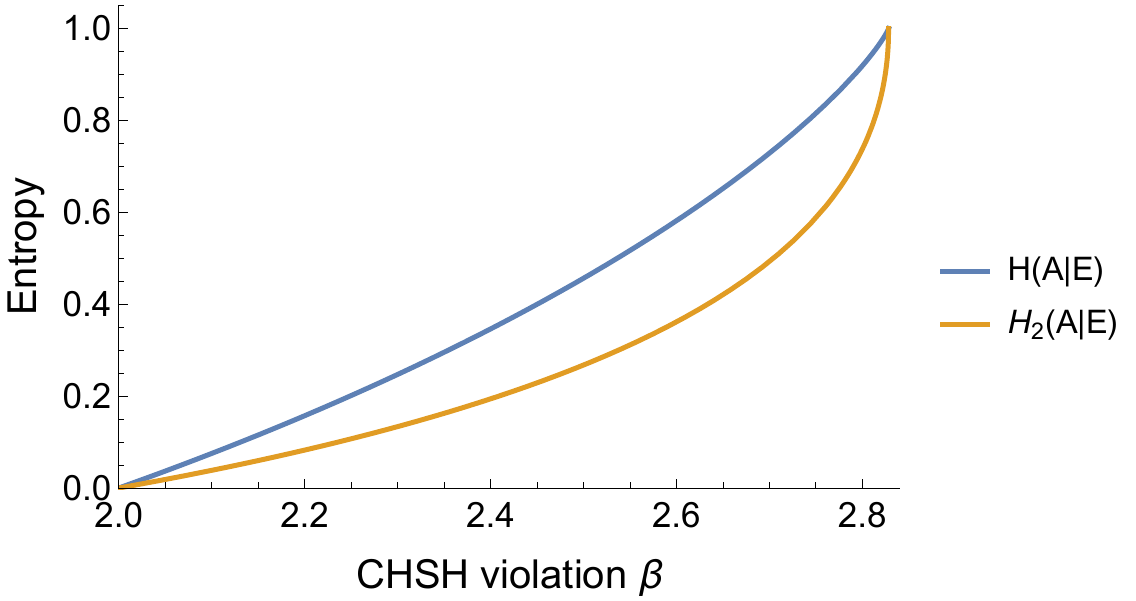}
\end{minipage}\hfill
\begin{minipage}{19em}
\caption{Graph illustrating the difference of the conditional von Neumann entropy $H(A|E)$ and the conditional collision entropy $H_{2}(A|E)$ as a function of the  CHSH violation $\beta$.}\label{Fig:HxHmin}
\end{minipage}
\end{figure}

\subsubsection*{Other Bell inequalities and the min-entropy estimation:}

The use of different Bell inequalities has proved to be advantageous in different taks. 
For example, a tilted CHSH inequality was used to certify maximal randomness in states arbitrarily close to separable \cite{AMP12}, and   inequalities with more inputs and outputs have shown to exhibit higher noise robustness \cite{CGLMP}.
Therefore it is natural to ask whether other Bell inequalities can also bring advantage to the task of device-independent quantum key distribution.

By considering an arbitrary Bell inequality, one faces the problem that the techniques used to bound the conditional von Neumann entropy as a function of the CHSH violation do not apply.
Indeed, the proof 
of Lemma~\ref{lem:Hchsh}  is highly based on the fact that one can reduce the analysis to qubits. 
In fact, very few results are known on tight bounds for the conditional von Neumann entropy as a function of the Bell violation for other inequalities. In \cite{Rotem17} a bound was derived for a family of inequalities denoted measurement-device-dependent inequalities \cite{PRB14}, which are very suitable for the task of randomness amplification.  In \cite{JeremyCKA} a tight bound was derived as a function of the violation of the multipartite MABK inequality \cite{Mer90,Ard92,BK93}. However in these two cases the proof is based on a reduction to the CHSH inequality.

In general, the conditional von Neumann entropy can be lower bounded by the conditional min-entropy
\begin{equation}
H(A|E)_{\rho}\geq H_{\min}(A|E)_{\rho}.
\end{equation}
The advantage of looking at the conditional min-entropy is that it can be computed as a function of the Bell violation by a semi-definite programming \cite{HminBell}.
The idea is that in order to estimate the min-entropy one can upper bound the guessing probability, $p_{\rm guess}$ (see Eq.~(\ref{eq:pguess})), of the eavesdropper.
This problem  can then be expressed as an optimization over  probability distributions, which is exactly the information available in the device-independent scenario.
As shown in Ref. \cite{HminBell}, for any Bell inequality, an upper bound on the $p_{\rm guess}$ can be obtained by a semidefinite programming making use of the NPA-hierarchy \cite{NPA, NPA2}.

Lower bounding the conditional von-Neumann entropy by the min-entropy might be far from optimal.
For example, for the CHSH inequality we have that the conditional von Neumann entropy as a function of the violation is much larger than the conditional min-entropy, as illustrated in Fig.~\ref{Fig:HxHmin} (recall that, in Theorem~\ref{thm:H2}, $H_{\min}(A|E)_{\rho}$ was shown to be a tight bound on $H_2(A|E)_{\rho}$ as a function of the CHSH violation).
By making use of the tight bound on the conditional von Neumann entropy, eq.~(\ref{eq:Hbeta}), 
 one can prove security for DIQKD up to $7.1\%$ of QBER \cite{PAB09}, whereas using the min-entropy, eq.~(\ref{eq:Hminbeta}), security can only be guaranteed up to a QBER of $5.2 \%$ \cite{HminBell}.

 It is still an open problem whether any other Bell inequality can lead to better performance for DIQKD than the CHSH inequality. Recently, an extensive analysis of the performance of different Bell inequalities for the task of randomness expansion was presented in \cite{Brown18}.

\subsubsection{Key rates.}
 
The techniques presented in Sections \ref{sec:1roundreduction} and \ref{sec:Hevaluation} allows us to establish the length of a secure key that can be extracted as a function of the CHSH violation $\beta$ and QBER $Q$.

For coherent attacks, the entropy accumulation theorem (Theorem~\ref{thm.EAT}) and the tight lower bound on the conditional von Neumann entropy (Lemma~\ref{lem:Hchsh}) are the key tools to establish Theorem~\ref{thm:rateEAT}. The complete proof of Theorem~\ref{thm:rateEAT} includes several intermediate steps, and is presented in details in \ref{Appendix:rEAT}. 

For collective attacks, the key ingredients to derive Theorem~\ref{thm:rateIID} are the asymptotic equipartition property (Theorem~\ref{thm:AEP}) and Lemma~\ref{lem:Hchsh}. 
A detailed proof of Theorem~\ref{thm:rateIID} is presented in \ref{Appendix:rIID}. 
We have also presented a different technique of breaking the entropy of Alice's string into the entropy of single rounds in the IID scenario, namely by making use use of the additivity of 2-R\'eyni entropy, Lemma~\ref{lem:H2}. This technique, together with Theorem~\ref{thm:H2} leads to Theorem~\ref{thm:rateH2}. A detailed proof of Theorem~\ref{thm:rateH2} can be found in \ref{Appendix:rH2}.

\section*{Acknowledgments} 
We thank Victoria Lipinska, Mark Steudtner and Maximilian Ruf for helpful feedback on the manuscript, and Marissa Giustina for sharing with us her PhD thesis. 
This work was supported by the Dutch Technology Foundation
(TTW), the European Research Council through a Synergy Grant (RH) and a Starting Grant (SW), the Royal Netherlands Academy of Arts and Sciences and Ammodo through an Ammodo KNAW Award (RH), and the Netherlands Organisation for Scientific Research (NWO) through a VICI grant (RH), a VIDI grant (SW) and a Zwaartekracht QSC grant (RH and SW).

\section*{References}
\bibliographystyle{ieeetr}
\bibliography{biblio}

\appendix

\section{Definitions}\label{Appendix:definitions}

In this Appendix we present some properties of the conditional sandwiched $\alpha$-R\'eyni entropies \cite{SandwichedReyni}, Definition \ref{def:alphaH}, and the smoothed entropies  that are used for the security proof.

\begin{prop}\label{PropH} The conditional $\alpha$-R\'enyi  entropies satisfy:
	\begin{enumerate}
		\item\label{propDataProcess} \textbf{Data processing} (\cite{TomamichelBook} Corollary 5.1):  Let 
		$\tau_{AB'}=I_A\otimes \mathcal{E}_B(\rho_{AB})$, where $\mathcal{E}_B$ is a CPTP($B,B'$) channel, then
		\begin{eqnarray}
		H_{\alpha}(A|B)_{\rho}\leq H_{\alpha}(A|B')_{\tau} \;\;\mathrm{ and }\;\; H_{\alpha}^{\uparrow}(A|B)_{\rho}\leq H_{\alpha}^{\uparrow}(A|B')_{\tau}. 
		\end{eqnarray}
		\item \label{propAdd} \textbf{Additivity} (\cite{TomamichelBook} Corollary 5.2): For $\rho_{AB}\otimes \tau_{A'B'}$ it holds that
		\begin{eqnarray}
		H_{\alpha}^{\uparrow}(AA'|BB')_{\rho\otimes\tau}=H_{\alpha}^{\uparrow}(A|B)_{\rho}+ H_{\alpha}^{\uparrow}(A'|B')_{\tau}.
		\end{eqnarray}
		\item \textbf{Entropy of classical information}(\cite{TomamichelBook} Lemma 5.3): For $\rho_{ABX}$ classical in $X$ 
		\begin{eqnarray}
		H_{\alpha}(XA|B)_{\rho}\geq H_{\alpha}(A|B)_{\rho} \;\;\mathrm{ and }\;\; H_{\alpha}^{\uparrow}(XA|B)_{\rho}\geq H_{\alpha}^{\uparrow}(A|B)_{\rho}. 
		\end{eqnarray}
		\item\label{prop:Dimsystem} \textbf{Conditioning on classical information} (see \cite{TomamichelBook} Lemma 5.4): For $\rho_{ABX}$ classical in $X$, 
		\begin{eqnarray}
		H_{\alpha}^{\uparrow}(A|XB)&\geq H_{\alpha}^{\uparrow}(A|B)-\log \de{\rm{rank} (\rho_X)}\\
		&\geq H_{\alpha}^{\uparrow}(A|B)-\log |X|,
		\end{eqnarray}
		where $\rm{rank} (\rho_X)$ is the rank of matrix $\rho_X$ and $|X|$ is the dimension of system $X$.
		\item\label{prop:condClassic} \textbf{Conditioning on classical information} (see \cite{TomamichelBook} Proposition 5.1): Let $\rho_{ABX}=\sum_xp_x\rho_{AB}^x\otimes \ketbra{x}{x}$ then, 
		\begin{eqnarray}
		H_{\alpha}(A|BX)_{\rho}&=\frac{1}{1-\alpha} \log\de{\sum_xp(X=x)2^{\de{ (1-\alpha)H_{\alpha}(A|BX=x)_{\rho}}}},\\
		H_{\alpha}^{\uparrow}(A|BX)_{\rho}&=\frac{\alpha}{1-\alpha} \log\de{\sum_xp(X=x)2^{\de{ \frac{1-\alpha}{\alpha}H_{\alpha}^{\uparrow}(A|BX=x)_{\rho}}}}.
		\end{eqnarray}
			And for the conditional von Neumann it holds that
					\begin{eqnarray}
		H(A|BX)_{\rho}&=\sum_xp(X=x)H(A|BX=x)_{\rho}.
		\end{eqnarray}
		\gm{\item\label{prop:condState} \textbf{Entropy of the conditioned state} (see \cite{EAT} Lemma B.5): Let $\rho_{ABX}=\sum_xp_x\rho_{AB|x}$ then, 
		\begin{eqnarray}
		H_{\alpha}^{\uparrow}(A|B)_{\rho_{AB|x}}&\geq H_{\alpha}^{\uparrow}(A|B)_{\rho}-\frac{\alpha}{\alpha -1}\log\de{\frac{1}{p_x}}.
		\end{eqnarray}}
	\end{enumerate}
\end{prop}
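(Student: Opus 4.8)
The plan is to derive every item of Proposition~\ref{PropH} from the corresponding property of the \emph{sandwiched R\'enyi divergence} $D_\alpha(\rho\|\sigma)=\frac{1}{\alpha-1}\log\mathrm{Tr}\De{\de{\sigma^{\frac{1-\alpha}{2\alpha}}\rho\,\sigma^{\frac{1-\alpha}{2\alpha}}}^{\alpha}}$, using the identities $H_\alpha(A|B)_\rho=-D_\alpha\de{\rho_{AB}\|I_A\otimes\rho_B}$ and $H_\alpha^{\uparrow}(A|B)_\rho=-\inf_{\sigma_B}D_\alpha\de{\rho_{AB}\|I_A\otimes\sigma_B}$, which read off directly from Definition~\ref{def:alphaH}. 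This reduces the whole proposition to three facts about $D_\alpha$ on the range $\alpha\in[\frac12,1)\cup(1,\infty)$: the data-processing inequality (DPI), additivity on tensor products, and a scaling/monotonicity behaviour in the first argument.

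For data processing (item~\ref{propDataProcess}) I would apply the DPI for $D_\alpha$ to the channel $I_A\otimes\mathcal{E}_B$: since $\de{I_A\otimes\mathcal{E}_B}\de{I_A\otimes\rho_B}=I_A\otimes\tau_{B'}$, the DPI gives $D_\alpha\de{\rho_{AB}\|I_A\otimes\rho_B}\geq D_\alpha\de{\tau_{AB'}\|I_A\otimes\tau_{B'}}$, and negating yields $H_\alpha(A|B)_\rho\leq H_\alpha(A|B')_\tau$; the $\uparrow$ version follows by applying the DPI for each fixed $\sigma_B$, noting that $\mathcal{E}_B(\sigma_B)$ is an admissible conditioning state for $\tau$, and then taking infima. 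Additivity (item~\ref{propAdd}) follows from additivity of $D_\alpha$ on product states together with the factorisation of the minimiser $\sigma_{BB'}$ on a product input, which I would justify by a convexity/tensorisation argument. The classical-register items come from the block-diagonal form $\rho_{ABX}=\sum_xp_x\,\rho_{AB|x}\otimes\ketbra{x}{x}$: substituting this into $D_\alpha$ collapses the trace into a weighted sum over $x$, producing the quasi-arithmetic-mean formula of item~\ref{prop:condClassic}, from which both the monotonicity under adjoining a classical register and the dimension bound of item~\ref{prop:Dimsystem} follow by bounding the weighted mean.

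The genuinely new content is the final item~\ref{prop:condState}, the entropy of a single conditioned branch. The key observation is the operator inequality $p_x\,\rho_{AB|x}\leq\rho_{AB}$, i.e.\ $\rho_{AB|x}\leq\frac{1}{p_x}\rho_{AB}$, which holds because $\rho_{AB}=\sum_{x'}p_{x'}\rho_{AB|x'}$ is a sum of positive operators. I would fix the optimiser $\sigma_B^{*}$ achieving $H_\alpha^{\uparrow}(A|B)_\rho$ and write $H_\alpha^{\uparrow}(A|B)_{\rho_{AB|x}}\geq -D_\alpha\de{\rho_{AB|x}\|I_A\otimes\sigma_B^{*}}$. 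For $\alpha>1$ the trace functional inside $D_\alpha$ is monotone in its first argument---conjugation preserves the PSD order and, by Weyl monotonicity of eigenvalues, $t\mapsto t^{\alpha}$ taken under the trace is increasing---so $D_\alpha\de{\rho_{AB|x}\|\,\cdot\,}\leq D_\alpha\de{\frac{1}{p_x}\rho_{AB}\|\,\cdot\,}$. Combining this with the exact scaling identity $D_\alpha\de{c\rho\|\sigma}=D_\alpha\de{\rho\|\sigma}+\frac{\alpha}{\alpha-1}\log c$ at $c=1/p_x$, and recognising $-D_\alpha\de{\rho_{AB}\|I_A\otimes\sigma_B^{*}}=H_\alpha^{\uparrow}(A|B)_\rho$, produces exactly the claimed bound.

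The main obstacle I anticipate is sign-bookkeeping across the two regimes $\alpha>1$ and $\alpha\in[\frac12,1)$: the prefactor $\frac{1}{\alpha-1}$ and the coefficient $\frac{\alpha}{\alpha-1}$ change sign, so the direction of the monotonicity step and of the final inequality must be re-verified for $\alpha<1$, where $D_\alpha$ is \emph{decreasing} in its first argument under the PSD order. Since the security analysis only invokes these properties for $\alpha\geq1$ (in particular $\alpha=2$ for the collision entropy), I would present the $\alpha>1$ case in full and remark that the $\alpha<1$ case uses the same two ingredients with reversed inequalities. A secondary technical point is rigorously justifying the trace-monotonicity and the factorisation of the $\uparrow$-optimiser; both are standard consequences of the properties of $D_\alpha$ collected in \cite{TomamichelBook}, so I would cite them rather than reprove them.
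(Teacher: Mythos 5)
Your proposal is doing genuinely more work than the paper does: the paper does not prove Proposition~\ref{PropH} at all, but imports items 1--5 from \cite{TomamichelBook} and item~\ref{prop:condState} from \cite{EAT} (Lemma B.5), its only argumentative content being the post-proposition remark that the rank refinement in item~\ref{prop:Dimsystem} follows from invariance of $H_{\alpha}^{\uparrow}$ under local isometries (compress $X$ isometrically onto the support of $\rho_X$). Within your program, the two places where you are fully concrete are sound: the DPI argument for item~\ref{propDataProcess} (including taking infima over $\sigma_B$ and noting $\mathcal{E}_B(\sigma_B)$ is admissible) is correct, and your proof of item~\ref{prop:condState} for $\alpha>1$ --- the operator inequality $p_x\rho_{AB|x}\leq\rho_{AB}$, trace monotonicity of $A\mapsto\mathrm{Tr}\,A^{\alpha}$ under the PSD order via Weyl, the scaling identity $D_\alpha(c\rho\|\sigma)=D_\alpha(\rho\|\sigma)+\frac{\alpha}{\alpha-1}\log c$, all evaluated at the optimizer $\sigma_B^{*}$ for $\rho$ --- is complete and is essentially the argument behind the cited Lemma B.5.

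There are, however, three concrete soft spots. First, item~3 does not follow from the item-\ref{prop:condClassic} mean formula as you claim: there $X$ is adjoined to the \emph{measured} system, not the conditioning system, and the block-diagonal collapse produces weights $p_x^{\alpha}$, i.e.\ $\mathrm{Tr}\bigl(\sigma^{c}\rho_{XAB}\sigma^{c}\bigr)^{\alpha}=\sum_x p_x^{\alpha}\,\mathrm{Tr}\bigl(\sigma^{c}\rho^x_{AB}\sigma^{c}\bigr)^{\alpha}$; comparing this with the mixture requires a McCarthy-type trace inequality $\mathrm{Tr}\bigl(\sum_x A_x\bigr)^{\alpha}\geq\sum_x\mathrm{Tr}\,A_x^{\alpha}$ for $\alpha\geq1$ (reversed for $\alpha<1$), not a weighted-mean bound. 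Second, item~\ref{prop:Dimsystem} also does not follow by "bounding the weighted mean": feeding your item-\ref{prop:condState} bound into the item-\ref{prop:condClassic} formula yields only $H_{\alpha}^{\uparrow}(A|XB)\geq H_{\alpha}^{\uparrow}(A|B)-\frac{\alpha}{\alpha-1}\log|X|$, strictly weaker than the stated $\log|X|$ penalty; the book proves Lemma 5.4 via duality, and the upgrade from $\log|X|$ to $\log\bigl(\mathrm{rank}(\rho_X)\bigr)$ --- the one step the paper actually argues --- needs the isometric-compression remark, which your proposal omits. Third, your closing remark that the $\alpha<1$ case of item~\ref{prop:condState} "uses the same two ingredients with reversed inequalities" is wrong in its conclusion: with the sign flip of $\frac{1}{\alpha-1}$ the monotonicity step gives an \emph{upper} bound on $H_{\alpha}^{\uparrow}(A|B)_{\rho_{AB|x}}$, so the stated lower bound is simply not established (and the cited Lemma B.5 asserts it only for $\alpha>1$; for $\alpha<1$ the claimed inequality, whose correction term is then positive, is false in general, as a mixture containing a pure branch shows). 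Since the paper invokes the item only at $\alpha=2$, restricting to $\alpha>1$ would suffice, but you should state the restriction rather than suggest the argument extends. Finally, note that the hard direction of item~\ref{propAdd} (factorization of the optimal $\sigma_{BB'}$) is not a convexity/tensorisation triviality; in \cite{TomamichelBook} it is obtained through the duality relations for $H_{\alpha}^{\uparrow}$, so your plan to cite rather than reprove it is the right call.
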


In Property~\ref{PropH}.(\ref{prop:Dimsystem}), the relation $H_{\alpha}^{\uparrow}(A|XB)\geq H_{\alpha}^{\uparrow}(A|B)-\log |X|$ was stated in \cite{TomamichelBook}. We remark that the middle inequality follows from the fact that $H_{\alpha}^{\uparrow}(A|XB)$ is invariant under local isometries. Therefore if $X'=\mathcal{V}(X)$ is a full rank operator where $\mathcal{V}(\cdot)$ is an isometry, we have that
\begin{eqnarray}
H_{\alpha}^{\uparrow}(A|XB)=H_{\alpha}^{\uparrow}(A|X'B)\geq H_{\alpha}^{\uparrow}(A|B)-\log |X'|
\end{eqnarray}
and since $\mathcal{V}(\cdot)$ is an isometry $|X'|=\rm{rank}(\rho_X)$.

The min- and max- entropy are the particular extreme  cases of $H_{\alpha}^{\uparrow}$ for $\alpha=\infty$ and $\alpha=\frac{1}{2}$ respectively.  For $\alpha\rightarrow 1$ one recovers the standard conditional von-Neumann entropy.
The smoothed min- and max-entropies are defined as an optimization over operators that are $\epsilon$-close, in the purified distance, to the state of interest. This optimization takes into account also operators that are sub-normalized, \textit{i.e.} positive operators with trace smaller than 1.

\begin{definition}[Smoothed entropies \cite{TomamichelBook}]\label{def:Hsmooth}
	Let $\rho_{AB}$ be a quantum state and $\epsilon \geq 0$. The smooth min-entropy of system A conditioned on B is defined as
	\begin{eqnarray}
	H_{\min}^{\epsilon}(A|B)_{\rho}=\max_{\tilde{\rho}_{AB}\in \mathcal{B}^\epsilon(\rho_{AB})}H_{\min}(A|B)_{\tilde{\rho}}.
	\end{eqnarray}
	The smooth max-entropy is 	
	\begin{eqnarray}
	H_{\max}^{\epsilon}(A|B)_{\rho}=\min_{\tilde{\rho}_{AB}\in \mathcal{B}^\epsilon(\rho_{AB})}H_{\max}(A|B)_{\tilde{\rho}}.
	\end{eqnarray}
\end{definition}

In Definition~\ref{def:Hsmooth}, $\mathcal{B}^\epsilon(\rho_{AB})$ is an $\epsilon$-ball of sub-normalized operators around state $\rho_{AB}$ defined in terms of the purified distance.

\begin{definition}[Purified distance \cite{TomamichelBook}]
	For sub-normalized positive operators $X$ and $Y$, \textit{i.e.} $X,Y \geq 0$ and $\Tr(X)\leq 1, \Tr(Y)\leq 1$, the purified distance is given by
	\begin{equation}
	\mathcal{D}(X,Y)=\sqrt{1-F_*(X,Y)},
	\end{equation}
	where $F_*(\cdot,\cdot)$ is the generalized fidelity, defined  as
	\begin{equation}
	F_*(X,Y)=\de{\Tr|\sqrt{X}\sqrt{Y}|+\sqrt{(1-\Tr \rho)(1-\Tr(Y))}}^2.
	\end{equation}
\end{definition}

The smoothed entropies satisfy several chain rules. Some of them are stated below. A more complete list of chain rule relations can be found in \cite{TomamichelBook, Vit13}.

\begin{prop}[Chain rules for the smooth min-entropy]\label{prop:chain} 
The smooth min-entropy satisfy the following relations
\begin{enumerate}
\item\label{prop:chainHmin} For a quantum state $\rho_{ABC}$,
	\begin{eqnarray}\label{eq:chainHmin}
H_{\min}^{\epsilon}(A|BC)_{\rho}\geq  H_{\min}^{\frac{\epsilon}{4}}(AB|C)_\rho &-H_{\max}^{\frac{\epsilon}{4}}(B|C)_{\rho}\\
&\quad -2\log\de{1-\sqrt{1-\de{\frac{\epsilon}{4}}^2}}.\nonumber
	\end{eqnarray}
	\item\label{prop:chainClassicalregister} If $X$ is a classical register and $\rho_{ABX}$ a quantum-quantum-classical state, it holds that\footnote{In \cite{TomamichelBook} relation $H_{\min}^{\epsilon}(A|XB)_{\rho}\geq H_{\min}^{\epsilon}(A|B)_\rho-\log |X|$ was proved. Relation (\ref{eq:chainClassicalregister}) with the rank of $\rho_X$ follows as pointed out in Property~\ref{PropH}.(\ref{prop:Dimsystem}).}
	\begin{eqnarray}\label{eq:chainClassicalregister}
H_{\min}^{\epsilon}(A|XB)_{\rho}\geq H_{\min}^{\epsilon}(A|B)_\rho-\log \de{\rm{rank}(\rho_X)},
\end{eqnarray}
where $\rm{rank}(\rho_X)$ is the rank of state $\rho_X$.
\end{enumerate}
\end{prop}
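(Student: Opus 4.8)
The plan is to prove the two relations separately, since the second is an essentially immediate sharpening of a dimension-counting bound while the first requires genuine smooth-entropy calculus. Both statements are collected in \cite{TomamichelBook}, and I would reproduce the arguments in the precise forms needed here, keeping track of the smoothing radii and the sub-normalization corrections that produce the additive penalty.

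For the second relation I would start from the coarser dimension bound $H_{\min}^{\epsilon}(A|XB)_{\rho}\geq H_{\min}^{\epsilon}(A|B)_\rho-\log|X|$, which is proved in \cite{TomamichelBook}, and sharpen $\log|X|$ to $\log\de{\mathrm{rank}(\rho_X)}$ exactly as in Property~\ref{PropH}.(\ref{prop:Dimsystem}). Concretely, let $\mathcal{V}$ be the isometry compressing $X$ onto the support of $\rho_X$, producing a register $X'$ with $|X'|=\mathrm{rank}(\rho_X)$. Because $\mathcal{V}$ acts only on the conditioning system, preserves the purified distance, and the smooth min-entropy is invariant under local isometries, we have $H_{\min}^{\epsilon}(A|XB)_{\rho}=H_{\min}^{\epsilon}(A|X'B)_{\rho}$, and applying the dimension bound to $X'$ yields $H_{\min}^{\epsilon}(A|X'B)_{\rho}\geq H_{\min}^{\epsilon}(A|B)_\rho-\log\de{\mathrm{rank}(\rho_X)}$. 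The only point to verify is that the compression does not alter the state seen by $A$ and $B$; this holds because $\rho$ is block-diagonal in $X$ (it is classical in $X$) so that $\mathcal{V}$ merely discards the kernel of $\rho_X$.

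For the first relation I would proceed in two stages. First I establish the \emph{unsmoothed} chain rule $H_{\min}(A|BC)_{\rho}\geq H_{\min}(AB|C)_{\rho}-H_{\max}(B|C)_{\rho}$ from the dual (semidefinite) characterizations of the conditional entropies. Writing $\mu=H_{\min}(AB|C)_{\rho}$, there is a normalized $\tau_C$ with $\rho_{ABC}\leq 2^{-\mu}\,I_{AB}\otimes\tau_C$; the naive replacement $\sigma_{BC}=\frac{1}{d_B}I_B\otimes\tau_C$ already delivers the crude bound with $\log d_B$, and the substance of the chain rule is to replace this dimension factor by $H_{\max}(B|C)_{\rho}$. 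This is achieved by invoking the fidelity/operator characterization of the conditional max-entropy, which furnishes a normalized $\sigma_{BC}$ adapted to $\rho_{BC}$ with $I_B\otimes\tau_C\leq 2^{H_{\max}(B|C)_{\rho}}\sigma_{BC}$, whence $\rho_{ABC}\leq 2^{-\de{\mu-H_{\max}(B|C)}}\,I_A\otimes\sigma_{BC}$ and the bound follows. Equivalently one may pass to a purification $\rho_{ABCD}$ and use the duality $H_{\min}(A|BC)_{\rho}=-H_{\max}(A|D)_{\rho}$ to transport the statement into a max-entropy chain rule.

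The second stage lifts this to smooth entropies, and this is where I expect the main obstacle to lie. Given near-optimal smoothing operators $\bar\rho_{ABC}\in\mathcal{B}^{\epsilon/4}(\rho)$ for $H_{\min}^{\epsilon/4}(AB|C)$ and $\hat\rho_{BC}\in\mathcal{B}^{\epsilon/4}(\rho)$ for $H_{\max}^{\epsilon/4}(B|C)$, the task is to build a single operator in $\mathcal{B}^{\epsilon}(\rho)$ on which the unsmoothed chain rule can be applied so as to recover both optimized terms at once. The purified-distance triangle inequality governs how the two $\epsilon/4$-balls combine, and the additive term $-2\log\de{1-\sqrt{1-\de{\frac{\epsilon}{4}}^2}}$ is precisely the overhead from the standard purified-distance smoothing lemma for the min-entropy, incurred when a quantity smoothed against a sub-normalized operator is re-expressed relative to $\rho$ itself. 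The delicate bookkeeping of radii --- ensuring the combined ball has radius at most $\epsilon$ and that the sub-normalization does not degrade the bound beyond the stated constant --- is the technical crux, and at this step I would follow the smoothing arguments of \cite{TomamichelBook} essentially verbatim.
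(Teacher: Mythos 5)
Your treatment of the second relation is correct and is exactly the paper's route: the paper also obtains the rank-sharpening by compressing $X$ onto the support of $\rho_X$ with an isometry, invoking invariance of the smooth min-entropy under local isometries, and then applying the $\log|X|$ bound of \cite{TomamichelBook} to the compressed register (this is spelled out in the remark following Property~\ref{PropH}). Note only that classicality of $X$ is what the underlying $\log|X|$ bound requires; the support-containment needed for the compression holds for any state.

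The first relation is where there is a genuine gap, and it lies in your stage one. The operator inequality you invoke, $I_B\otimes\tau_C\leq 2^{H_{\max}(B|C)_{\rho}}\sigma_{BC}$ with $\sigma_{BC}$ normalized, cannot hold in general: taking traces gives $d_B\leq 2^{H_{\max}(B|C)_{\rho}}$, i.e.\ $H_{\max}(B|C)_{\rho}\geq\log d_B$, whereas $H_{\max}(B|C)_{\rho}\leq\log d_B$ always, with equality only in degenerate cases (any state with $\rho_B$ pure and $d_B\geq 2$ refutes it outright). So no ``fidelity/operator characterization'' of the conditional max-entropy furnishes such a $\sigma_{BC}$. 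Worse, the unsmoothed chain rule that stage one is meant to deliver, $H_{\min}(A|BC)_{\rho}\geq H_{\min}(AB|C)_{\rho}-H_{\max}(B|C)_{\rho}$, is itself false, so no corrected lemma can rescue the architecture. A classical counterexample with trivial $C$: let $A$ be a uniform $n$-bit string and let $B=A$ with probability $\tfrac12$ and $B=\perp$ otherwise. Then $H_{\min}(A|B)=-\log\de{\tfrac12+2^{-(n+1)}}\approx 1$, while $H_{\min}(AB)=n+1$ and $H_{\max}(B)=2\log\de{2^{-1/2}+2^{(n-1)/2}}\approx n-1$, so the right-hand side is $\approx 2$, exceeding the left-hand side by about one bit.

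This shows the additive correction in (\ref{eq:chainHmin}) is not, as you describe it, mere ``bookkeeping of smoothing radii'': it is intrinsic, must be read as a penalty $2\log\frac{1}{1-\sqrt{1-\de{\epsilon/4}^2}}$ (which diverges as $\epsilon\to 0$, consistent with the failure of the unsmoothed rule; as typeset in the statement the sign makes it look like a bonus, which cannot be right), and in the actual proofs of \cite{Vit13}, reproduced in \cite{TomamichelBook}, it arises from a duality-plus-truncation argument interwoven with the smoothing rather than from a two-stage ``exact rule, then smooth'' scheme. Your plan to combine two $\epsilon/4$-smoothed optimizers into a single $\epsilon$-ball and apply an exact chain rule therefore collapses at the first step, and deferring to the book ``essentially verbatim'' for stage two does not help, because the book contains no stage one of the kind you assume. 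For what it is worth, the paper itself does not reprove this relation at all: it imports the smooth chain rule wholesale from \cite{TomamichelBook,Vit13}, and only the rank-sharpening of the second relation is argued in-house.
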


A fully contained overview with properties and relations between different entropies can be found in \cite{TomamichelBook} (see also, \cite{EntropyZoo}).

\section{Security proof}

According to Definition~\ref{def:security}, a security proof of a DIQKD protocol consists in completeness and soundness. We start by proving completeness of Protocols \ref{prot:diqkd} and \ref{prot:diqkdIID}.

\begin{theorem}[Completeness]
The DIQKD protocols in consideration, Protocols \ref{prot:diqkd} and \ref{prot:diqkdIID} are $\epsilon_{DIQKD}^c$ complete, with
\begin{equation}
\epsilon_{DIQKD}^c\leq \epsilon_{EC}^c+\epsilon_{est}+\epsilon_{EC}.
\end{equation}
\end{theorem}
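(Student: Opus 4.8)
The plan is to bound the probability that a fixed honest implementation aborts. Since completeness is an existential statement, I am free to pick a convenient honest implementation; the natural choice is an IID one in which every round distributes the depolarized state of Eq.~(\ref{eq:depstate}) and the devices measure the honest CHSH observables, so that the expected CHSH winning probability equals $\omega_{exp}$ and the key-generation QBER equals $Q$. Both protocols can abort only at two points --- the error-correction step or the parameter-estimation step --- so a union bound gives $P(\text{abort})\leq P(EC\text{ aborts})+P(PE\text{ aborts})$, and it suffices to control the two terms by $\epsilon^c_{EC}$ and by $\epsilon_{est}+\epsilon_{EC}$ respectively.

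First I would dispatch the error-correction abort. The subroutine $EC$ is assumed to be complete: on the honest inputs it aborts only with probability at most its completeness error, so $P(EC\text{ aborts})\leq \epsilon^c_{EC}$. This is a stipulated property of the chosen $EC$ and needs no further argument.

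The more delicate term is the parameter-estimation abort, and it is here that the extra $\epsilon_{EC}$ enters. The point is that Bob computes the test statistic $\sum_j C_j$ from $B_1^n$ together with his \emph{corrected} string $\tilde B_1^n$, not from Alice's genuine bits. I would therefore condition on $EC$ not aborting and invoke the correctness guarantee of $EC$: with probability at least $1-\epsilon_{EC}$ one has $\tilde A_1^n=\tilde B_1^n$, so that the $C_i$ are evaluated on the true outcome pairs, and I absorb the complementary event into an additive $\epsilon_{EC}$. Conditioned on the statistic being computed correctly, in the chosen honest implementation the per-block (resp.\ per-round) contributions to $\sum_j C_j$ are IID, with mean $m\de{1-(1-\gamma)^{s_{\max}}}\omega_{exp}$ for Protocol~\ref{prot:diqkd} and $\gamma n\,\omega_{exp}$ for Protocol~\ref{prot:diqkdIID}. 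The implementation fails the test precisely when $\sum_j C_j$ deviates downward past the threshold, \ie\ by more than the slack $\delta_{est}$ times the same prefactor; a Hoeffding tail bound then gives $P(PE\text{ aborts}\mid \tilde A_1^n=\tilde B_1^n)\leq \epsilon_{est}$, provided $\delta_{est}$ is fixed as the standard function of $\epsilon_{est}$, of the number of blocks $m$ (resp.\ rounds $n$), and of $\gamma$.

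Assembling the three contributions through the union bound yields $\epsilon^c_{DIQKD}\leq \epsilon^c_{EC}+\epsilon_{est}+\epsilon_{EC}$, and the argument is identical for the two protocols, which differ only in whether the number of test rounds is fixed at $\gamma n$ or generated block-wise. The hard part is the concentration step: one must correctly identify the honest mean of $\sum_j C_j$ --- in particular, for Protocol~\ref{prot:diqkd} this requires accounting for the block structure through the factor $1-(1-\gamma)^{s_{\max}}$, the probability that a block terminates on a test round --- and then verify that the honestly-expected statistic sits far enough above the abort threshold that the downward-deviation probability is genuinely at most $\epsilon_{est}$. Everything else reduces to the union bound together with the stipulated completeness and correctness properties of the $EC$ subroutine.
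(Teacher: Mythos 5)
Your proposal is correct and follows essentially the same route as the paper's proof: a union bound over the two abort points, the stipulated completeness of $EC$ for the first term, and a decomposition of the parameter-estimation abort into the event that error correction succeeded (handled by Hoeffding, contributing $\epsilon_{est}$) and its complement (contributing $\epsilon_{EC}$ via the correctness guarantee of $EC$). Your explicit accounting of the block prefactor $1-(1-\gamma)^{s_{\max}}$ in the honest mean for Protocol \ref{prot:diqkd} is a detail the paper's proof leaves implicit (it writes the Hoeffding bound only in the form appropriate to Protocol \ref{prot:diqkdIID}), but this is a refinement, not a different argument.
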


\begin{proof}
The protocols in consideration can abort in two steps. Either because the error correction fail, or because the estimated Bell violation is not high enough. Let us consider an honest implementation consisting of IID rounds where the expected winning CHSH probability is $\omega_{exp}$.
\begin{eqnarray*}
p(\rm{abort})&=p(\rm{(EC\, abort)\, {or}\, (EC\, does\, not\, abort \,and\, Bell\, test\, fail)} )\\
&\leq p({\rm{EC\, abort}})+p(\rm{EC\, does\, not\, abort \,and\, Bell\, test\, fail} )
\end{eqnarray*}
Now, the probability that the error correction protocol abort for an honest implementation is $p({\rm{EC\, abort}})\leq \epsilon_{EC}^c$. And for the other term we have
\begin{eqnarray*}
p({\rm{EC\, does\,}} & {\rm{  not\,abort  \,and\, Bell\, test\, fail}})\\
&=p(K_A=K_B)p(\sum_i C_i<\sum_i T_i \times \de{\omega_{exp}-\delta_{est}}|K_A=K_B)\\
&\quad+p(K_A\neq K_B)p(\sum_i C_i<\sum_i T_i \times \de{\omega_{exp}-\delta_{est}}|K_A\neq K_B)\\
&\leq \epsilon_{est}+\epsilon_{EC},
\end{eqnarray*}
where $\epsilon_{est}=\e^{-2\gamma n(\delta_{est}) ^2}$ follows from Hoeffding's inequality. 
\end{proof}

For the soundness proof we have to evaluate correctness and secrecy, Definitions \ref{def:correct} and \ref{def:secret}. For an error correction protocol with error parameter $\epsilon_{EC}$ we have that given that the error correction protocol does not abort, the probability that the string $\tilde{B}$ after error correction is equal to $A_1^n$ with probability higher than $1-\epsilon_{EC}$ and consequently
\begin{eqnarray}
P(K_A\neq K_B)\leq \epsilon_{EC}.
\end{eqnarray}

For the secrecy let us recall that, for each considered Protocol, $\Omega$ is defined as the event that the respective protocols do not abort. That happens when the error correction protocol does not abort and they achieved the required violation of CHSH according to Bob's estimation of Alice's string. Now, let us the define the event $\hat{\Omega}$ as the event $\Omega$ of the Protocol not aborting \textbf{and} the error correction being successful, \textit{i.e.} $\tilde{B}_1^n=A_1^n$. 
Now the quantity we need to estimate for the secrecy, relates to the event $\hat{\Omega}$ by
\begin{eqnarray}
{\|{\rho_{K_AE}}_{|\Omega}-{\tau_{K_A}\otimes \rho_E}\|}_1&\leq 
{\|{\rho_{K_AE}}_{|\Omega}-{\rho_{K_AE}}_{|\hat{\Omega}}\|}_1+{\|{\rho_{K_AE}}_{|\hat{\Omega}}-{\tau_{K_A}\otimes \rho_E}\|}_1 \nonumber\\
&\leq \epsilon_{EC}+{\|{\rho_{K_AE}}_{|\hat{\Omega}}-{\tau_{K_A}\otimes \rho_E}\|}_1
\end{eqnarray}
which follows from the fact that, since when error correction succeeds, the probability of $\tilde{B}_ 1^n=A_1^n$ is higher than $(1-\epsilon_{EC})$ then the following operator inequality holds: ${\rho_{K_AE}}_{|\Omega}\geq (1-\epsilon_{EC}){\rho_{K_AE}}_{|\hat{\Omega}}$.

In the following, we proceed to evaluate ${\|{\rho_{K_AE}}_{|\hat{\Omega}}-{\tau_{K_A}\otimes \rho_E}\|}_1$ in order to prove Theorems \ref{thm:rateEAT}, \ref{thm:rateIID} and \ref{thm:rateH2}.

\subsection{Proof of Theorem~\ref{thm:rateIID}}\label{Appendix:rIID}

In this Appendix we present the proof of Theorem~\ref{thm:rateIID}, that determines the size of a secret  key one can extract from Protocol~\ref{prot:diqkdIID} under the assumption that the eavesdropper is restricted to collective attacks. Importantly, Theorem~\ref{thm:rateIID} is based on the asymptotic equipartition  property, Theorem~\ref{thm:AEP}, in order to break the entropy of the $n$ rounds into the one-round entropy.

 The collective attacks assumption implies that in each round of the protocol the state distributed to Alice and Bob is the same, as well as their devices function in the same way, \textit{i.e.} the rounds are independent and identically distributed (IID). Therefore the state shared between Alice, Bob and Eve after Alice and Bob measure their raw keys is described by a tensor product form $\rho_{ABE}^{\otimes n}$.

The  asymptotic equipartition property (AEP) \cite{TCR09}, Theorem \ref{thm:AEP}, states that
the smooth min-entropy of a tensor product of states is almost equivalent (up to terms of order of $\sqrt{n}$) to $n$ times the von-Neumann entropy of an individual system. 
We now make use of the quantum AEP to derive the length of a secure key that one can achieve for Protocol~\ref{prot:diqkdIID}.

As established by the leftover hashing lemma, Theorem~\ref{thm:leftoverHmin}, the maximal length of a secure key is determined by the smooth min-entropy of Alice's raw key conditioned on all information available to the eavesdropper, given that the protocol did not abort. In the case of Protocol~\ref{prot:diqkdIID}, it is given by 
\begin{equation}\label{eq:Hforkey}
H_{\min}^{\frac{\epsilon_s}{p(\Omega)}}({A}_1^n|{X}_1^n{Y}_1^n{T}_1^nE O_{EC})_{\rho_{|\hat{\Omega}}}.
\end{equation}
Here we recall that $O_{EC}$ is the information exchanged by Alice and Bob during the error correction protocol. ${T}_1^n,{X}_1^n,{Y}_1^n$ are, respectively, the variable that determines whether the round is a test or a key generation round, and Alice and Bob's inputs, which are communicated publicly. $\hat{\Omega}$ is the event that error correction protocol succeeds, \textit{i.e.} $K_A=K_B$ and the CHSH probability estimated by Bob is $\omega\geq \omega_{exp}-\delta_{est}$.
In the following we describe the steps to estimate (\ref{eq:Hforkey}).

\gm{In order to avoid the conditioned state we can give one step back and note that in Definition~\ref{def:secret} we want to bound
\begin{eqnarray}
p(\hat{\Omega}){\|{\rho_{K_AHE|\hat{\Omega}}}-{\tau_{K_A}\otimes \rho_{HE|\hat{\Omega}}}\|}_1={\|{\rho_{K_AHE\land\hat{\Omega}}}-{\tau_{K_A}\otimes \rho_{HE\land\hat{\Omega}}}\|}_1
\end{eqnarray}
where $\rho_{K_AHE\land\hat{\Omega}}=p(\hat{\Omega})\rho_{K_AHE|\hat{\Omega}}$.
Now using the Leftover Hashing Lemma, Theorem~\ref{thm:leftoverHmin}, we can express an $(\epsilon_{PA}+\epsilon_s)$-secret key by
\begin{eqnarray}
l&=H_{\min}^{\epsilon_s}(A_1^n|E)_{\rho_{\land\Omega}}-2\log\de{\frac{1}{2\epsilon_{PA}}}.
\end{eqnarray}
Now we make use of the fact that the smooth-min-entropy of the conditioned state is lower bounded by the smooth-min-entropy of the state without conditioning, as proved in Ref.~\cite[Lemma 10]{RigorousQKD}
\begin{eqnarray}
H_{\min}^{\epsilon_s}(A_1^n|E)_{\rho_{\land\Omega}}\geq H_{\min}^{\epsilon_s}(A_1^n|E)_{\rho}.
\end{eqnarray}
In the following we proceed to estimate the quantity
\begin{equation}
H_{\min}^{\epsilon_s}({A}_1^n|{X}_1^n{Y}_1^n{T}_1^nE O_{EC})_{\rho}.
\end{equation}
}

\subsection*{\textbf{Step 1:} Accounting for the leakage in the error correction.}
 \setcounter{footnote}{0}
Using the chain rule relation for the smooth min-entropy conditioned on classical information, Property \ref{prop:chain}\textit{(\ref{prop:chainClassicalregister})}, we have
\begin{eqnarray}
\gm{H_{\min}^{\epsilon_s}({A}_1^n|{X}_1^n{Y}_1^n{T}_1^nE O_{EC})_{\rho}\geq H_{\min}^{\epsilon_s}({A}_1^n|{X}_1^n{Y}_1^n{T}_1^nE )_{\rho}-\mbox{leak}_{EC},}
\end{eqnarray}
where $\mbox{leak}_{EC}=\rm{rank}(\rho_{O_{EC}})$ represents the minimum amount of classical information that needs to be communicated from Alice to Bob in order to perform error correction\footnote{Note that in a realistic implementation Alice might send the error correction information using an encoding in order to overcome errors in the transmission due to channel losses. Therefore, in general $\rho_{O_{EC}}$ may not be full rank.}.
We consider that Alice and Bob use a protocol based on universal hashing which has minimum leakage \cite{BS93}. In \cite{RW05} it was proved that the minimum leakage is given by
\begin{eqnarray}
{\rm leak}_{EC}\leq H_0^{\epsilon'_{EC}}({A}_1^n|{B}_1^n{X}_1^n{Y}_1^n{T}_1^n)+\log \de{\frac{1}{\epsilon_{EC}}},
\end{eqnarray}
where, if Alice and Bob do not abort, then $K_A=K_B$ with probability at least $1-\epsilon_{EC}$. And for an honest implementation, the error correction protocol aborts with probability at most $\epsilon_{EC}^c=\epsilon'_{EC}+\epsilon_{EC}$. Here $H_0$ is a R\'enyi entropy first introduced in Ref.~\cite{RennerThesis} (in Ref.~\cite{TomamichelBook}, it is denoted $\bar{H}_0^{\uparrow}$).
The entropy $H_0^{\epsilon}$, relates to the smooth max-entropy in the following way \cite[Lemma 18]{TSSR11},
\begin{eqnarray}\label{eq:H0iid}
H_0^{\epsilon'_{EC}}({A}_1^n|{B}_1^n{X}_1^n{Y}_1^n{T}_1^n)\leq & H_{\max}^{\frac{\epsilon'_{EC}}{2}}({A}_1^n|{B}_1^n{X}_1^n{Y}_1^n{T}_1^n)\\
&+\log\de{\frac{8}{{\epsilon'}_{EC}^2}+\frac{2}{2-\epsilon'_{EC}}}.\nonumber
\end{eqnarray}  

We now can use of the asymptotic equipartition property, Theorem \ref{thm:AEP}, to decompose 
(\ref{eq:H0iid}) into the sum of the entropy of single rounds. Moreover,  for an honest implementation with winning CHSH probability $\omega_{exp}$ and QBER $Q$ we have that for  the 
test rounds  $H({A}|{B}{X}{Y}{T=1})=h(\omega_{exp})$ and for the key generation rounds 
$H({A}|{B}{X}{Y}{T=0})=h(Q)$. Therefore the one round entropy is given by
\begin{eqnarray}
H({A}|{B}{X}{Y}{T})&=p(T=0)H({A}|{B}{X}{Y}{T=0})+p(T=1)H({A}|{B}{X}{Y}{T=1})\nonumber \\
&=(1-\gamma)h(Q)+\gamma h(\omega_{exp}),
\end{eqnarray}
where in the first equality we have use Property~\ref{PropH}\textit{(\ref{prop:condClassic})}.

Therefore, the leakage due to error correction is given by
\begin{eqnarray}\label{eq:leakECiid}
{\rm leak}_{EC}\leq & n ((1-\gamma)h(Q)+\gamma h(\omega_{exp}))+\sqrt{n}\de{4\log\de{2\sqrt{2}+1}  \sqrt{\log\frac{8}{{\epsilon'}_{EC}^2}}}\nonumber\\
&+\log\de{\frac{8}{{\epsilon'}_{EC}^2}+\frac{2}{2-\epsilon'_{EC}}}+\log \de{\frac{1}{\epsilon_{EC}}}.
\end{eqnarray}

\gm{It is not known if an efficient error correction protocol can achieve the minimum leakage estimated in Eq.~(\ref{eq:leakECiid}), and practical implementations may use protocols with higher leakage. Ref.~\cite{TMPE17} analyses the leakage in error correction for concrete protocols, based on state-of-the-art error correcting codes, with efficient implementation. A more realistic analysis of the error correction leakage should take into account an specific code.}

\subsection*{\textbf{Step 2:} Breaking the entropy into single rounds. }

We now can use the asymptotic equipartition property in order to bound \gm{$H_{\min}^{\epsilon_s}({A}_1^n|{X}_1^n{Y}_1^n{T}_1^nE)_{\rho}$}. The assumption of collective attacks implies that the state under consideration has the tensor product form and therefore
\begin{eqnarray}
\gm{H_{\min}^{\epsilon_s}({A}_1^n|{X}_1^n{Y}_1^n{T}_1^nE)_{\rho}\geq n\, H(A|XYTE)_{\rho}-\sqrt{n}\delta(\epsilon_s,\eta),}
\end{eqnarray}
where $\delta(\epsilon_s,\eta)$ and $\eta$ are specified in Theorem \ref{thm:AEP}.

For the scenario under consideration we have
\begin{eqnarray}
\eta \leq 2\sqrt{2^{H_{\max}(A|XYTE)_{\rho}}}  +1\leq 2\sqrt{2}+1.
\end{eqnarray}
The first inequality follows from the fact that $A$ is a classical register and therefore has positive conditional min-entropy, which implies $-H_{\min}(A|XYTE)_{\rho}\leq H_{\min}(A|XYTE)_{\rho} \leq H_{\max}(A|XYTE)_{\rho}$. The second inequality follows from the fact that since $A$ is a binary variable $H_{\max}(A|XYTE)_{\rho}\leq 1$. Therefore,
\begin{eqnarray}
\delta(\epsilon_s, \eta)\leq 4\log\de{2\sqrt{2}+1}  \sqrt{\log\de{\frac{2}{\epsilon_s^2}}}.
\end{eqnarray}

\subsection*{\textbf{Step 3:} Estimating the one-round entropy.}

Now it only remains to lower bound \gm{$H(A|XYTE)_{\rho}$}. Lemma~\ref{lem:Hchsh} states the tight lower bound for the conditional von-Neumann entropy as a function of the winning probability  $\omega$ for the CHSH game derived in \cite{PAB09,DIEAT}. Using this bound we have that if $\rho $ is a state that achieves winning probability $\omega$ then
\begin{eqnarray}
\gm{H(A|XYTE)_{\rho}}&\geq 1-h\de{\frac{1}{2}+\frac{1}{2}\sqrt{16 \omega(\omega-1)+3}}.
\end{eqnarray}

Now, Protocol \ref{prot:diqkdIID} aborts if the observed frequency of winning events is smaller than $\omega_{exp}-\delta_{est}$. Therefore, given the event $\hat{\Omega}$ that Protocol \ref{prot:diqkdIID} does not abort and $K_A=K_B$, we have that Alice and Bob observe a violation higher than $\omega_{exp}-\delta_{est}$.  
Now we need to take into account that the CHSH violation is estimated with a finite number of rounds. So in order 
to infer the real winning probability $\omega^*$ of the IID implementation, we can make use of the Hoeffding's inequality in order to define a confidence interval: If $\omega^*< \omega_{exp}-\delta_{est}-\delta_{con}$ then
\begin{eqnarray}
Prob\de{\omega_{observed}\geq \omega_{exp}-\delta_{est}}\leq \e^{-2 \gamma n (\delta_{con})^2}:=\epsilon_{con}.
\end{eqnarray}
Therefore, given that Alice and Bob do not abort the protocol, we infer that the expected winning probability of the system under consideration is higher than  $\omega_{exp}-\delta_{est}-\delta_{con}$, and therefore
\begin{eqnarray}
\gm{H(A|XYTE)_{\rho}}\geq \\
\quad \quad \mbox{$1-h\Big(\frac{1}{2}+\frac{1}{2}\sqrt{16 (\omega_{exp}-\delta_{est}-\delta_{con})((\omega_{exp}-\delta_{est}-\delta_{con})-1)+3}\Big).$}\nonumber
\end{eqnarray}
\vspace{1em}

Putting the results of these steps together we have that 
either Protocol~\ref{prot:diqkdIID} aborts with probability higher than $1-(\epsilon_{con}+\epsilon_{EC})$, or the probability of aborting is smaller than $(\epsilon_{con}+\epsilon_{EC})$ and a $(2\epsilon_{EC}+\epsilon_s+\epsilon_{PA})$-correct-and-secret  key can be generated of size
\begin{eqnarray}
l\geq n& \Big[ 1-h\de{\frac{1}{2}+\frac{1}{2}\sqrt{16 (\omega_{exp}-\delta_{est}-\delta_{con})((\omega_{exp}-\delta_{est}-\delta_{con})-1)+3}}\nonumber\\
	& -(1-\gamma)h(Q)-\gamma h(\omega_{exp})\Big]\\
&\;\;\; -\sqrt{n}\de{4\log\de{2\sqrt{2}+1} \de{\sqrt{\log\frac{2}{\epsilon_s^2}}+ \sqrt{\log \frac{8}{{\epsilon'}_{EC}^2}}}}\nonumber\\
&\;\;\; -\log\de{\frac{8}{{\epsilon'}_{EC}^2}+\frac{2}{2-\epsilon'_{EC}} }-\log \de{\frac{1}{\epsilon_{EC}}}- 2\log\de{\frac{1}{2\epsilon_{PA}}}.\nonumber
\end{eqnarray}
This establishes Theorem~\ref{thm:rateIID}.

\subsection{Proof of Theorem~\ref{thm:rateH2}}\label{Appendix:rH2}

We now present the proof of Theorem~\ref{thm:rateH2}, that determines the size of a secret  key one can extract from Protocol~\ref{prot:diqkdIID} for collective attacks, but differently from Theorem~\ref{thm:rateIID}, we now use the additivity property of the 2-R\'enyi entropy, Lemma~\ref{lem:H2}, in order to break the entropy of the string into the one-round entropy.

We are now interested in estimate the length of a secure key as established in Theorem~\ref{thm:leftoverH2}, which is given by 
\begin{equation}\label{eq:H2forkey}
H_{2}^{\uparrow}({A}_1^n|{X}_1^n{Y}_1^n{T}_1^nE O_{EC})_{\rho_{| \hat{\Omega}}}.
\end{equation}

As in \ref{Appendix:rIID} we now present the steps that lead to the proof of Theorem~\ref{thm:rateH2}.

\subsection*{\textbf{Step 1:} Accounting for the leakage in the Error Correction.}

Using Property \ref{PropH}\textit{(\ref{prop:condClassic})}, we have
\begin{eqnarray}
H_{2}^{\uparrow}({A}_1^n|{X}_1^n{Y}_1^n{T}_1^nE O_{EC})_{\rho_{| \hat{\Omega}}}\geq H_{2}^{\uparrow}({A}_1^n|{X}_1^n{Y}_1^n{T}_1^nE )_{\rho_{| \hat{\Omega}}}-\mbox{leak}_{EC},
\end{eqnarray}
where $\mbox{leak}_{EC}=\rm{rank}(\rho_{O_{EC}})$ represents the minimum amount of classical information that needs to be communicated from Alice to Bob in order to perform error correction.

Now the error correction leakage $\mbox{leak}_{EC}$  is the same as derived in Equation (\ref{eq:leakECiid}).

\subsection*{\textbf{Step 2:} Breaking the entropy into single rounds. }

\gm{We first use Property~\ref{PropH}\textit{(\ref{prop:condClassic})} in order to express the entropy of the state conditioned on the event $\hat{\Omega}$ in terms of the entropy of the unconditioned state
\begin{eqnarray}
H_{2}^{\uparrow}({A}_1^n|{X}_1^n{Y}_1^n{T}_1^nE )_{\rho_{| \hat{\Omega}}}\geq H_{2}^{\uparrow}({A}_1^n|{X}_1^n{Y}_1^n{T}_1^nE )_{\rho}-2\log\de{\frac{1}{p_{\hat{\Omega}}}}.
\end{eqnarray}
}

We can now make use the additivity property of $2$-R\'eyni entropy, Lemma~\ref{lem:H2}, in order to bound \gm{$H_{2}^{\uparrow}({A}_1^n|{X}_1^n{Y}_1^n{T}_1^nE )_{\rho}$}. The assumption of collective attacks implies that the state under consideration has the tensor product form and therefore
\begin{eqnarray}
\gm{H_{2}^{\uparrow}({A}_1^n|{X}_1^n{Y}_1^n{T}_1^nE)_{\rho}\geq n\, H_2(A|XYTE)_{\rho}},
\end{eqnarray}
where now the single round entropy in consideration is the conditional collision entropy.

\subsection*{\textbf{Step 3:} Estimating the one-round entropy.}

Now it only remains to lower bound \gm{$H_2(A|XYTE)_{\rho}$}. 
Theorem~\ref{thm:H2} shows that a tight lower bound for the conditional collision entropy as a function of the violation $\beta$ coincides with the previously derived conditional min-entropy\cite{HminBell}, eq.(\ref{eq:Hminbeta}).
\gm{Therefore, for a state $\rho$ that wins the CHSH game with probability $\omega$}
\begin{eqnarray}
\gm{H_2(A|XYTE)_{\rho}}&\geq-\log\de{\frac{1}{2}+\frac{1}{2}\sqrt{16\omega(1-\omega)-2}}.
\end{eqnarray}

\gm{Now, either the expected winning probability of the system under consideration is smaller than  $\omega_{exp}-\delta_{est}-\delta_{con}$, in which case the protocol aborts with probability higher than $1-(\epsilon_{con}+\epsilon_{EC})$, or $p_{\hat{\Omega}}>\epsilon_{con}+\epsilon_{EC}$ which implies that the system has winning probability larger than $\omega_{exp}-\delta_{est}-\delta_{con}$, and  }
\begin{eqnarray}
\gm{H_2(A|XYTE)_{\rho}}\geq \\
\quad \quad -\log\de{\frac{1}{2}+\frac{1}{2}\sqrt{16(\omega_{exp}-\delta_{est}-\delta_{con})(1-(\omega_{exp}-\delta_{est}-\delta_{con}))-2}}.\nonumber
\end{eqnarray}

In conclusion we have that,  
either Protocol~\ref{prot:diqkdIID} aborts with probability higher than $1-(\epsilon_{con}+\epsilon_{EC})$, or the probability of not aborting is greater than $(\epsilon_{con}+\epsilon_{EC})$ and a $(2\epsilon_{EC}+\epsilon_{PA})$-correct-and-secret  key is generated of size:
\begin{eqnarray}
l\geq n& \Big[ -\log\de{\frac{1}{2}+\frac{1}{2}\sqrt{16(\omega_{exp}-\delta_{est}-\delta_{con})(1-(\omega_{exp}-\delta_{est}-\delta_{con}))-2}}\nonumber\\
	& -(1-\gamma)h(Q)-\gamma h(\omega_{exp})\Big]\\
&\;\;\; -\sqrt{n}\de{4\log\de{2\sqrt{2}+1}  \sqrt{\log \frac{8}{{\epsilon'}_{EC}^2}}}\nonumber\\
&\;\;\; -\log\de{\frac{8}{{\epsilon'}_{EC}^2}+\frac{2}{2-\epsilon'_{EC}} }\\ \nonumber
&-\log \de{\frac{1}{\epsilon_{EC}}}- 2\log\de{\frac{1}{2\epsilon_{PA}}} \gm{-2\log\de{\frac{1}{\epsilon_{con}+\epsilon_{EC}}}}.\nonumber
\end{eqnarray}
This establishes Theorem~\ref{thm:rateH2}.

\subsection{Proof of Theorem~\ref{thm:rateEAT}}\label{Appendix:rEAT}

In this Appendix we present the proof of Theorem~\ref{thm:rateEAT}, which establishes the size of a secure key that can be extracted from Protocol~\ref{prot:diqkd} for general coherent attacks. We follow closely the proof developed in \cite{EATpublish,DIEAT}.

In  Protocol~\ref{prot:diqkd}, the number of rounds is not fixed. Instead, 
Protocol \ref{prot:diqkd} has a fixed number of blocks $m$, such that the maximum number of rounds inside a block is set to $s_{\max}=\left\lceil \frac{1}{\gamma} \right\rceil$. 
This is a technicality introduced in \cite{DIEAT,EATpublish} in order to get a better pre-factor for the overhead terms that scale with $\sqrt{n}$. 
For each block $j$ Alice and Bob run the protocol until they have a test round or they reach the maximum number of rounds $s_{\max}$. At each round $j_i$ Alice and Bob choose  a random bit $T_{j_i}$, such that $P(T_{j_i}=1)=\gamma$, which determines whether they are going to test the CHSH inequality or make a key generation round. They repeat the process until they obtain $T_{j_i}=1$ or $i=s_{\max}$.
With these constraints the expected number of rounds in a block is given by
\begin{eqnarray}\label{eq:s_exp}
\bar{s}=\frac{1-(1-\gamma)^{\left\lceil \frac{1}{\gamma} \right\rceil}}{\gamma},
\end{eqnarray}
and the expected number of rounds is
\begin{eqnarray}\label{eq:n_exp}
{n}=m\bar{s}.
\end{eqnarray}
For details on the derivation of equations (\ref{eq:s_exp}) and (\ref{eq:n_exp}) see Ref.~\cite[Appendix B]{DIEAT}

We now proceed to derive the key rates against a general coherent attack. In order to calculate the size of the key we need to estimate
\begin{eqnarray}
H_{\min}^{\frac{\epsilon_s}{p({\Omega})}}(\vec{A}_1^m|\vec{X}_1^m\vec{Y}_1^m\vec{T}_1^mE O)_{\rho_{|\hat{\Omega}}}.
\end{eqnarray}
Now $\vec{A}_1^m$ denotes the total string of bits, expected to be of size $n$, and $\vec{A}_i$ denotes the string of outputs generated in the block $i$, and similarly for the other variables.  
In the following, we proceed step by step in  order to lower bound $H_{\min}^{\frac{\epsilon_s}{p({\Omega})}}(\vec{A}_1^m|\vec{X}_1^m\vec{Y}_1^m\vec{T}_1^mE O)_{\rho_{|\hat{\Omega}}}$ and we detail the changes introduced to the original analysis \cite{DIEAT,EATpublish}.

\subsection*{\textbf{Step 1:} Accounting for the leakage in the error correction.}

Similar to the proof of Protocol~\ref{prot:diqkdIID}, we have that
\begin{eqnarray}
H_{\min}^{\frac{\epsilon_s}{p({\Omega})}}(\vec{A}_1^m|\vec{X}_1^m\vec{Y}_1^m\vec{T}_1^mE O)_{\rho_{| \hat{\Omega}}}\geq H_{\min}^{\frac{\epsilon_s}{p({\Omega})}}(\vec{A}_1^m|\vec{X}_1^m\vec{Y}_1^m\vec{T}_1^mE )_{\rho_{| \hat{\Omega}}}-\mbox{leak}_{EC},
\end{eqnarray}
and
\begin{eqnarray}
\mbox{leak}_{EC}&\leq H_0^{\epsilon'_{EC}}(\vec{A}_1^m|\vec{B}_1^m\vec{X}_1^m\vec{Y}_1^m\vec{T}_1^m)+\log\de{\frac{1}{\epsilon_{EC}}}\\
&\leq H_{\max}^{\frac{\epsilon'_{EC}}{2}}(\vec{A}_1^m|\vec{B}_1^m\vec{X}_1^m\vec{Y}_1^m\vec{T}_1^m)\\
&\quad\quad +\log\de{\frac{8}{{\epsilon'_{EC}}^2}+\frac{2}{(2-\epsilon'_{EC})}}+\log\de{\frac{1}{\epsilon_{EC}}}.\nonumber
\end{eqnarray}
However, now we need to take into account for the fact that the number of rounds in the protocol is not fixed.
Following the steps of Ref. \cite{DIEAT}, we first note that the number of rounds $N$ obtained in an implementation of the Protocol~\ref{prot:diqkd}  satisfies:
\begin{eqnarray}
P[N\geq {n}+t]\leq \exp\de{-\frac{2t^2\gamma^2}{m(1-\gamma)^2}}:=\epsilon_t,
\end{eqnarray}
where $n=m\bar{s}$ is the expected number of rounds and  $t=\sqrt{-\frac{m(1-\gamma)^2 \log \epsilon_t}{2 \gamma^2}}$.
Moreover, by the definition of smooth max-entropy one have that 
\begin{eqnarray}
 H_{\max}^{\epsilon}(\vec{A}_1^m|\vec{B}_1^m\vec{X}_1^m\vec{Y}_1^m\vec{T}_1^mN)\leq H_{\max}^{\epsilon-\sqrt{\epsilon_t}}(\vec{A}_1^m|\vec{B}_1^m\vec{X}_1^m\vec{Y}_1^m\vec{T}_1^mN\leq n+t).
\end{eqnarray}
Note that $N$ can be included in the entropy since it is completely determined by $\vec{T}_1^m$.

Now applying the asymptotic equipartition property, Theorem~\ref{thm:AEP}, to the maximal length $N=n+t$ we have
\begin{eqnarray}
\mathrm{leak}_{EC}&\leq (\bar{n}+t)\cdot\De{(1-\gamma)h(Q)+\gamma h(\omega_{exp})} \nonumber\\
&\quad+\sqrt{\bar{n}+t}\, \nu_2 +\log\de{\frac{8}{{\epsilon'_{EC}}^2}+\frac{2}{(2-\epsilon'_{EC})}}+\log\de{\frac{1}{\epsilon_{EC}}}, \nonumber
\end{eqnarray}
where $\nu_2=4\log\de{2\sqrt{2}+1}\sqrt{2\log\de{\frac{8}{\de{\epsilon'_{EC}-2\sqrt{\epsilon_t}}^2}}}$ and $\epsilon_t$ is a free parameter to be optimised.

If the error correction protocol does not abort, then
\begin{eqnarray}
P(K_A\neq K_B)\leq \epsilon_{EC}.
\end{eqnarray}
And the completeness of the error correction protocol (\ie , the probability of not aborting in an honest IID implementation) is given by $\epsilon_{EC}^c=\epsilon'_{EC}+\epsilon_{EC}$.

\subsection*{\textbf{Step 2:} Chain rule.}

In Protocol~\ref{prot:diqkd}, a statistical test is performed on the variable $C_i$ which accounts for the condition of winning the CHSH game being satisfied or not. In order to use the entropy accumulation theorem, we need to be able to infer the value of this variable $C_i$ from the variables that appear in the smooth min-entropy we are calculating. 

Here we choose to use a chain rule, relation (\ref{eq:chainHmin}), with the variable $C_i$ itself, as opposed to using the variable $B_i$ as is done in \cite{DIEAT}. The reason is that the dimension of the variable $C_i$ is smaller than $B_i$, as for each block the variable $C_i$ assumes one out of three values. This leads to a slight improvement in rates achieved in the finite regime:  
\begin{eqnarray}
H_{\min}^{\frac{\epsilon_s}{p({\Omega})}}(\vec{A}_1^m|\vec{X}_1^m\vec{Y}_1^m\vec{T}_1^mE)_{\rho_{| \hat{\Omega}}}&\geq H_{\min}^{\frac{\epsilon_s}{4p({\Omega})}}(\vec{A}_1^m{C}_1^m|\vec{X}_1^m\vec{Y}_1^m\vec{T}_1^mE )_{\rho_{| \hat{\Omega}}}\nonumber\\
&\quad- H_{\max}^{\frac{\epsilon_s}{4p({\Omega})}}({C}_1^m|\vec{A}_1^m\vec{X}_1^m\vec{Y}_1^m\vec{T}_1^mE)_{\rho_{| \hat{\Omega}}}\\
&\quad \quad-3 \log\de{1-\sqrt{1-\de{\frac{\epsilon_s}{4p({\Omega})}}^2}} \nonumber\\
&\geq H_{\min}^{\frac{\epsilon_s}{4p({\Omega})}}(\vec{A}_1^m{C}_1^m|\vec{X}_1^m\vec{Y}_1^m\vec{T}_1^mE )_{\rho_{| \hat{\Omega}}}\nonumber\\
&\quad- H_{\max}^{\frac{\epsilon_s}{4p({\Omega})}}({C}_1^m|\vec{T}_1^mE)_{\rho_{| \hat{\Omega}}} \label{eq.rand1}\\
&\quad\quad-3 \log\de{1-\sqrt{1-\de{\frac{\epsilon_s}{4(\epsilon_{EA}+\epsilon_{EC})}}^2}}.\nonumber
\end{eqnarray}
In inequality (\ref{eq.rand1}) we use the fact that $p({\Omega})\geq (\epsilon_{EA}+\epsilon_{EC})$ and that removing the conditioning on classical variables can only increase the entropy, which can be seen as a particular case of data processing, Property \ref{PropH}\textit{(\ref{propDataProcess})}.

\subsection*{\textbf{Step 3:} Upper bound on $H_{\max}^{\frac{\epsilon_s}{4p({\Omega})}}({C}_1^m|\vec{T}_1^mE)_{\rho_{|\hat{\Omega}}}$.}

We can use the entropy accumulation theorem to upper bound $H_{\max}^{\frac{\epsilon_s}{4p({\Omega})}}({C}_1^m|\vec{T}_1^mE )_{\rho_{|\hat{\Omega}}}$.
In order to do that we only have to find a max-tradeoff function for a protocol with $m$ rounds.
We have that for any distribution $\vec{{p}}=({p}(1),{p}(0),{p}(\perp))$ of the variable $C$:
\begin{eqnarray}
H(C_i|\vec{T}_iE)_{\rho_{|\hat{\Omega}}}&= p(\vec{T}_i=\vec{0})H(C_i|\vec{T}_i=\vec{0}E)_{\rho_{|{\hat{\Omega}}}}\\
&\quad \quad \quad + p(\vec{T}_i\neq \vec{0})H(C_i|\vec{T}_i\neq \vec{0} E)_{\rho_{|{\hat{\Omega}}}} \nonumber\\
&= p(\vec{T}_i\neq \vec{0})H(C_i|\vec{T}_i\neq \vec{0} E)_{\rho_{|{\hat{\Omega}}}} \label{eq.rand2}\\
&\leq h\de{\frac{{p}(1)}{1-{p}(\perp)}}=h\de{\frac{{p}(1)}{1-(1-\gamma)^{s_{\max}}}}=h(\omega),\label{eq.rand3}
\end{eqnarray}
where in (\ref{eq.rand2}) we use the fact that $H(C_i|\vec{T}_i=\vec{0}E)=0$, and in (\ref{eq.rand3}) we use that $p(\vec{T}_i\neq \vec{0})\leq 1$ and that $\frac{{p}(1)}{1-(1-\gamma)^{s_{\max}}}\equiv \omega$.
Note that $h(\cdot)$ is a concave function.

Now we can take $f_{\max}= h(\omega_{exp}-\delta_{est})$ and $\|\nabla f_{\max}\|_{\infty}= \frac{1}{1-(1-\gamma)^{s_{\max}}}\times \frac{\partial h}{\partial \omega}\Bigr|_{\omega_{exp}-\delta_{est}}$, where $\omega_{exp}$ is the expected winning probability of the CHSH game in an honest implementation and $\delta_{est}$ accounts for the statistical confidence interval of the experiment. Using the entropy accumulation theorem, Theorem \ref{thm.EAT}, we have
\begin{eqnarray}
H_{\max}^{\frac{\epsilon_s}{4p({\Omega})}}(C_1^m|\vec{T}_1^mE)_{\rho_{| \hat{\Omega}}}&\leq m\, h(\omega_{exp}-\delta_{est})+\sqrt{m}\, \nu_1
\end{eqnarray}
where 
\begin{eqnarray}
\nu_1=2 \de{\log 7 +\left\lceil\frac{|h'(\omega_{exp}+\delta_{est})|}{1-(1-\gamma)^{s_{\max}}}\right\rceil}\sqrt{1-2\log\epsilon_s},
\end{eqnarray}
and $h'$ represents the derivative of the binary entropy function, $\frac{\partial h(\omega)}{\partial \omega}$.

\subsection*{\textbf{Step 4:} Lower bound on $H_{min}^{\frac{\epsilon_s}{p({ \Omega})}}(\vec{A}_1^mC_1^m|\vec{X}_1^m\vec{Y}_1^m\vec{T}_1^mE )_{\rho_{|\hat{\Omega}}}$.}
Finally, we apply the entropy accumulation theorem to lower bound the term $H_{min}^{\frac{\epsilon_s}{p({ \Omega})}}(\vec{A}_1^mC_1^m|\vec{X}_1^m\vec{Y}_1^m\vec{T}_1^mE )_{\rho_{| \hat{\Omega}}}$.
Therefore we need to find a min-tradeoff function such that
\begin{eqnarray}
f_{\min}(\vec{q})\leq \inf_{\sigma_{R_{j-1}E}:\mathcal{M}_j(\sigma)_{C_j}=\vec{q}}H(\vec{A}_j{C}_j|\vec{X}_j\vec{Y}_j\vec{T}_jE)_{\mathcal{M}_j(\sigma)}
\end{eqnarray}

Note that the length of each block is variable. However, we can consider that all the blocks have size $s_{\max}$ and set all the variables to $\bot$ for the rounds which are not performed.

First note that 
\begin{eqnarray}
H(\vec{A}_j{C}_j|\vec{X}_j\vec{Y}_j\vec{T}_jE)\geq H(\vec{A}_j|\vec{X}_j\vec{Y}_j\vec{T}_jE).
\end{eqnarray} 
And from now on, we follow the same steps as Ref. \cite{DIEAT}.

Using the chain-rule for Von Neuman, Property~\ref{PropH}\textit{(\ref{prop:condClassic})}, entropy we have
\begin{eqnarray}
H(\vec{A}_j|\vec{X}_j\vec{Y}_j\vec{T}_jE)=\sum_{i=1}^{s_{\max}}H(A_{j,i}|\vec{X}_j\vec{Y}_j\vec{T}_jE{A_j}_{1}^{i-1}).
\end{eqnarray} 
and for every $i \in [s_{\max}]$, 
\begin{eqnarray}
H&(A_{j,i}|\vec{X}_j\vec{Y}_j\vec{T}_jE{A_j}_{1}^{i-1})= \nonumber\\
&\quad\quad \quad = p({T_j}_{1}^{i-1}=\vec{0})H(A_{j,i}|\vec{X}_j\vec{Y}_jE{A_j}_{1}^{i-1}{T_j}_{i}^{s_{\max}},{T_j}_{1}^{i-1}=\vec{0})\\
& \quad\quad \quad \quad+p({T_j}_{1}^{i-1}\neq\vec{0})H(A_{j,i}|\vec{X}_j\vec{Y}_jE{A_j}_{1}^{i-1}{T_j}_{i}^{s_{\max}},{T_j}_{1}^{i-1}\neq\vec{0}) \nonumber\\
&\quad \quad \quad=(1-\gamma)^{(i-1)}H(A_{j,i}|\vec{X}_j\vec{Y}_jE{A_j}_{1}^{i-1}{T_j}_{i}^{s_{\max}},{T_j}_{1}^{i-1}=\vec{0}),
\end{eqnarray}
where we used the fact that $H(A_{j,i}|\vec{X}_j\vec{Y}_jE{A_j}_{1}^{i-1}{T_j}_{i}^{s_{\max}},{T_j}_{1}^{i-1}\neq\vec{0})=0$.
Therefore
\begin{eqnarray}
H&(\vec{A}_j|\vec{X}_j\vec{Y}_j\vec{T}_jE)=\\
&\quad \quad \quad \quad \sum_{i=1}^{s_{max}}(1-\gamma)^{(i-1)}H(A_{j,i}|\vec{X}_j\vec{Y}_jE{A_j}_{1}^{i-1}{T_j}_{i}^{s_{\max}},{T_j}_{1}^{i-1}=\vec{0}).\nonumber
\end{eqnarray}
Each term $H(A_{j,i}|\vec{X}_j\vec{Y}_jE{A_j}_{1}^{i-1}{T_j}_{i}^{s_{\max}},{T_j}_{1}^{i-1}=\vec{0})$ can be seen as the entropy of a single round. An expression for the entropy of a single round was derived for collective attacks in \cite{PAB09}. This gives us:
\begin{eqnarray}\label{eq.Hblock}
H&(\vec{A}_j{C}_j|\vec{X}_j\vec{Y}_j\vec{T}_jE)=\\
&\quad \quad \quad \quad \sum_{i=1}^{s_{max}}(1-\gamma)^{(i-1)}\De{1-h\de{\frac{1}{2}+\frac{1}{2}\sqrt{16\omega_i(\omega_i-1)+3}}}\nonumber
\end{eqnarray}
such that
\begin{eqnarray}
{p}(1)=\sum_{i=1}^{s_{max}}\gamma(1-\gamma)^{(i-1)}\omega_i.
\end{eqnarray}
Now, in \cite{DIEAT} it is proved that the minimum of (\ref{eq.Hblock}) is achieved for 
\begin{eqnarray}
\omega_i^*=\frac{{p}(1)}{1-(1-\gamma)^{s_{\max}}}\quad \forall i,
\end{eqnarray}
and therefore we have a min-tradeoff function:
\begin{eqnarray}\label{eq.mintradeoff}
     \mbox{$ g({\vec{p}}) = 
 {s}\De{1-h\de{\frac{1}{2}+\frac{1}{2}\sqrt{16\frac{{p}(1)}{1-(1-\gamma)^{s_{max}}}\de{\frac{{p}(1)}{1-(1-\gamma)^{s_{max}}} -1}+3 } }}$}; 
\end{eqnarray}
for $\frac{{p}(1)}{1-(1-\gamma)^{s_{max}}} \in \De{\frac{3}{4},\frac{2+\sqrt{2}}{4}}$. 

 \setcounter{footnote}{0}
Note that as ${p}(1)\to  ((1-(1-\gamma)^{s_{\max}})\frac{2+\sqrt{2}}{4}$, the gradient of $g(\vec{p})$ tends to infinity, which compromises the $\sqrt{n}$ term that depends on the norm of the gradient of $f$. Since $g(\vec{p})$ is a convex function,  the tangent line in any point $\vec{p}_t$ is a lower bound to $g(\vec{p})$. Therefore, as in \cite{DIEAT,EATpublish}, we take the min-tradeoff function to be a tangent $g$ in a point $\vec{p}_t$ to be optimized\footnote{In \cite{DIEAT,EATpublish} the authors consider the following min-tradeoff function 
\begin{equation*}
f_{\min}(\vec{p})= \left\{\begin{array}{@{}l@{\quad}l}
      g(\vec{p}) & \mbox{if $p_t(1)>p(1)$}\\[\jot]
      F_{\min}(\vec{p},\vec{p}_t)= & \mbox{if $p_t(1)\leq p(1)$}
    \end{array}.\right.
\end{equation*} 
We remark that, since the gradient of $ g(\vec{p})$ is an increasing function of $p(1)$, the optimum value for $\eta_{opt}$ is always achieved for $p_t(1)\leq p(1)$. }:
\begin{eqnarray}\label{eq.fmin}
F_{\min}({p},{p}_t)=
\frac{d}{d {p}(1)}g({p}) \Big|_{\tilde{p}_t}\cdot {p}(1)+\de{ g({p}_t)- \frac{d}{d{p}(1)}g({p})\Big|_{{p}_t}\cdot {p}_t(1) }.   
\end{eqnarray}
Then we have
\begin{eqnarray}
H_{min}^{\frac{\epsilon_s}{4p({\Omega})}}(\vec{A}_1^mC_1^m|\vec{X}_1^m\vec{Y}_1^m\vec{T}_1^mE )_{\rho_{|\hat{\Omega}}}&>m\cdot \eta_{opt}=\frac{\bar{n}}{\bar{s}}\cdot \eta_{opt},
\end{eqnarray}
where
\begin{eqnarray}
\eta_{opt}=\max_{\frac{3}{4}<\frac{\tilde{p}_t(1)}{1-(1-\gamma)^{s_{max}}}<\frac{2+\sqrt{2}}{4}} \De{F_{\min}(\tilde{p},\tilde{p}_t)-\frac{1}{\sqrt{m}}\nu_3},
\end{eqnarray}
such that
\begin{eqnarray}
\nu_3 =2 \de{\log\de{1+2\cdot 2^{s_{\max}}3}+\left\lceil \frac{d}{d{p}(1)}g(\tilde{p})\big|_{{p}_t}\right\rceil}\sqrt{1-2\log \epsilon_s}.
\end{eqnarray}

Finally, the length of a secure key that can be extracted is given by
\begin{eqnarray}
l\geq& \frac{\bar{n}}{\bar{s}}\eta_{opt} -\frac{\bar{n}}{\bar{s}}h(\omega_{exp}-\delta_{est}) -\sqrt{\frac{\bar{n}}{\bar{s}}}\nu_1 \nonumber \\
& -(\bar{n}+t)\cdot\De{(1-\gamma)h(Q)+\gamma h(\omega_{exp})}\\
&-\sqrt{\bar{n}+t}\, \nu_2 -\log\de{\frac{8}{{\epsilon'_{EC}}^2}+\frac{2}{(2-\epsilon'_{EC})}}-\log\de{\frac{1}{\epsilon_{EC}}} \nonumber\\
&-3\log\de{1-\sqrt{1-\de{\frac{\epsilon_s}{4}}^2}}-2\log\de{\frac{1}{2\epsilon_{PA}}}.\nonumber
\end{eqnarray}

\section{Proof of Theorem~\ref{thm:H2}}\label{Appendix:H2}


\begin{thm6}
There exist a state $\rho^*_{AB}$ and measurements for Alice and Bob such that, $\rho^*_{AB}$ achieves violation $\beta$ and the collision entropy of Alice's output $A$ conditioned on Eve's quantum information $E$ is 
\begin{equation}
H_2(A|E)_{\rho^*}= -\log\de{\frac{1}{2} +\frac{1}{2}\sqrt{2-\frac{\beta^2}{4}}}.
\end{equation}
\end{thm6}

\begin{proof}
The proof consists in exhibiting a state $\rho^*_{AB}$ and measurements for Alice and Bob such that the lower bound given by eq.(\ref{eq:Hminbeta}) is saturated. Our derivation is based on the techniques presented in Ref.~\cite{PAB09}, which led to a tight lower bound for the conditional von-Neumann entropy.

 Let us consider that Alice and Bob share a Bell diagonal state $\rho_{AB}$
 \begin{equation}
 \rho_{AB}=\lambda_{00} \Phi_{00}+\lambda_{01}\Phi_{01}+\lambda_{10}\Phi_{10}+\lambda_{11}\Phi_{11}
 \end{equation}
 where $\Phi_{ij}=\ketbra{\Phi_{ij}}{\Phi_{ij}}$ and $\ket{\Phi_{ij}}=I\otimes X^iZ^j \de{\frac{1}{\sqrt{2}}(\ket{00}+\ket{11})}$.
We first prove the following result:
  \begin{lemma}
 For a Bell-diagonal state where Alice performs a measurement in the $Z$-basis we have that
 \begin{eqnarray}
 H_2(A|XYE)_{\rho}\geq-\log \de{\frac{1}{2} + \sqrt{\lambda_{00}\lambda_{01}}+\sqrt{\lambda_{11}\lambda_{10}}}.
 \end{eqnarray}
 \end{lemma}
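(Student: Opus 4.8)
The plan is to compute the conditional collision entropy in closed form by handing Eve an explicit purification and substituting into the definition of $H_2$. Because the four Bell states $\ket{\Phi_{ij}}$ are mutually orthogonal, a purification of $\rho_{AB}$ is
\[
\ket{\psi}_{ABE}=\sum_{ij}\sqrt{\lambda_{ij}}\,\ket{\Phi_{ij}}_{AB}\ket{ij}_E ,
\]
with $\DE{\ket{ij}_E}$ orthonormal. Since every extension of $\rho_{AB}$ is obtained from this purification by a channel acting on the purifying system, the data-processing inequality (Property~\ref{PropH}(\ref{propDataProcess})) shows that the purification \emph{minimizes} $H_2(A|E)$; it therefore suffices to evaluate the entropy for $\ket{\psi}_{ABE}$, and the value obtained lower bounds the entropy of any extension, which is exactly the inequality claimed in the lemma (with equality for the purification itself).

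Next I would let Alice measure $A$ in the $Z$-basis and write the resulting classical--quantum state as $\rho_{AE}=\sum_{a}\pr{a}_A\otimes\sigma^a_E$, where $\sigma^a_E$ is Eve's sub-normalized conditional state obtained after tracing out Bob's system. The inputs $X,Y$ are carried along only as classical labels: because Alice's measurement is fixed to the $Z$-basis independently of them, Property~\ref{PropH}(\ref{prop:condClassic}) shows that conditioning on $X,Y$ leaves the value unchanged, so I may work with $H_2(A|E)$. The key structural fact is that the marginal
\[
\rho_E=\sum_{ij}\lambda_{ij}\,\pr{ij}_E
\]
is \emph{diagonal} in the $\DE{\ket{ij}_E}$ basis --- immediate from orthogonality of the Bell states and unaffected by Alice's measurement --- so that $\rho_E^{-1/4}$ merely rescales basis vectors. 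After tracing out Bob's qubit, each $\sigma^a_E$ decomposes into two mutually orthogonal rank-one terms indexed by the two values of Bob's qubit.

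I would then substitute into $H_2(A|E)=-\log\mathrm{Tr}\De{\de{\rho_E^{-1/4}\rho_{AE}\rho_E^{-1/4}}^2}$ (Definition~\ref{def:alphaH} with $\alpha=2$). Applying $\rho_E^{-1/4}$ to the vector $\sqrt{\lambda_{00}}\ket{00}_E+\sqrt{\lambda_{01}}\ket{01}_E$ returns $\lambda_{00}^{1/4}\ket{00}_E+\lambda_{01}^{1/4}\ket{01}_E$, whose squared norm is $\sqrt{\lambda_{00}}+\sqrt{\lambda_{01}}$, and similarly for the $\lambda_{10},\lambda_{11}$ block. Squaring the orthogonal rank-one sums and tracing gives, for each outcome $a$,
\[
\mathrm{Tr}\De{\de{\rho_E^{-1/4}\sigma^a_E\rho_E^{-1/4}}^2}=\frac{1}{4}\De{\de{\sqrt{\lambda_{00}}+\sqrt{\lambda_{01}}}^2+\de{\sqrt{\lambda_{10}}+\sqrt{\lambda_{11}}}^2}.
\]
Summing over $a\in\DE{0,1}$ and using $\sum_{ij}\lambda_{ij}=1$ collapses the bracket to $\frac{1}{2}+\sqrt{\lambda_{00}\lambda_{01}}+\sqrt{\lambda_{10}\lambda_{11}}$, and taking $-\log$ yields precisely the stated expression.

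The main obstacle I expect is careful bookkeeping rather than any conceptual difficulty: one must track the relative signs of the Bell-state amplitudes in Alice's $Z$-basis to confirm that the off-diagonal contributions of $\sigma^0_E$ and $\sigma^1_E$ cancel in the sum $\rho_E$ (so that $\rho_E$ is diagonal and $\rho_E^{-1/4}$ is elementary), and to verify that both measurement outcomes contribute equally so the global factor $\frac{1}{2}$ emerges correctly. I would also be careful to use the $\rho_E$-sandwiched variant $H_2$ of Definition~\ref{def:alphaH} rather than $H_2^{\uparrow}$, since it is the former that appears in the lemma.
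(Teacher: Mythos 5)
Your proposal is correct and follows essentially the same route as the paper's proof: the same purification $\sum_{ij}\sqrt{\lambda_{ij}}\ket{\Phi_{ij}}_{AB}\ket{ij}_E$, the same decomposition of $\rho_{AE}$ after the $Z$-measurement into block-orthogonal rank-one vectors, and the same direct evaluation of the sandwiched $\alpha=2$ entropy exploiting the diagonality of $\rho_E$, arriving at $\frac{1}{2}+\sqrt{\lambda_{00}\lambda_{01}}+\sqrt{\lambda_{10}\lambda_{11}}$ inside the logarithm (the paper's $\rho_E^{-1/2}\rho_{AE}\rho_E^{-1/2}\rho_{AE}$ form is your squared $\rho_E^{-1/4}$-sandwich by cyclicity of the trace). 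Your only additions are to make explicit two steps the paper leaves implicit, namely the data-processing argument showing the purification is the worst case for Eve, which yields the inequality direction, and the removal of the classical $X,Y$ conditioning.
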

 
 \begin{proof}
 Given a Bell diagonal state $\rho_{AB}(\lambda_{00},\lambda_{01},\lambda_{10},\lambda_{11})$, a purification $\ket{\Psi}_{ABE}$ of this state is given by
\begin{eqnarray}
\ket{\Psi}_{ABE}=&\sqrt{\lambda_{00}}\ket{\Phi_{00}}_{AB}\ket{e_1}_E+\sqrt{\lambda_{01}}\ket{\Phi_{01}}_{AB}\ket{e_2}_E\\
&+\sqrt{\lambda_{10}}\ket{\Phi_{10}}_{AB}\ket{e_3}_E+\sqrt{\lambda_{11}}\ket{\Phi_{11}}_{AB}\ket{e_4}_E.\nonumber
\end{eqnarray}

After Alice measures in the Z basis we have
\begin{equation}\label{eq:rhoAE}
\rho_{AE}=\frac{1}{2}\ketbra{0}{0}\otimes \rho_{E|0}+\frac{1}{2}\ketbra{1}{1}\otimes \rho_{E|1}
\end{equation}
where
\begin{eqnarray}
\rho_{E|0}=\ketbra{\psi_1}{\psi_1}+\ketbra{\psi_2}{\psi_2}\; {\rm \,and\,} \;\rho_{E|1}=\ketbra{\psi_3}{\psi_3}+\ketbra{\psi_4}{\psi_4},
\end{eqnarray}
with non-normalized states
\begin{eqnarray*}
\ket{\psi_1}&=\de{\sqrt{\lambda_{00}}\ket{e_1}+\sqrt{\lambda_{01}}\ket{e_2}},\\
\ket{\psi_2}&=\de{\sqrt{\lambda_{10}}\ket{e_3}+\sqrt{\lambda_{11}}\ket{e_4}},\\
\ket{\psi_3}&=\de{\sqrt{\lambda_{10}}\ket{e_3}-\sqrt{\lambda_{11}}\ket{e_4}},\\
\ket{\psi_4}&=\de{\sqrt{\lambda_{00}}\ket{e_1}-\sqrt{\lambda_{01}}\ket{e_2}}.
\end{eqnarray*}

The collision entropy of a cq-state $\rho_{AE}$ is given by
\begin{equation}
H_2(A|E)_{\rho}=-\log \Tr \de{\rho_E^{-1/2}\rho_{AE}\rho_E^{-1/2}\rho_{AE}},
\end{equation}
which, evaluated for the state (\ref{eq:rhoAE}) gives
\begin{eqnarray}
H_2(A|E)_{\rho}=&-\log \left(\frac{1}{2} +\de{\sqrt{\lambda_{00}}\sqrt{\lambda_{01}}+\sqrt{\lambda_{10}}\sqrt{\lambda_{11}}}\right).\nonumber
\end{eqnarray}
\end{proof}

Now let us consider a Bell diagonal state $\rho^*_{AB}$ such that
\begin{eqnarray}\label{eq:rhostar}
\lambda_{00}=&R\cos \theta, \quad \lambda_{01}=R\sin \theta,\quad \lambda_{10}=\lambda_{11}=0, \\
&\quad\quad \mbox{s.t.}\quad \cos \theta +\sin \theta =\frac{1}{R}\nonumber
\end{eqnarray}
which can hold for $R>\frac{1}{\sqrt{2}}$. This choice is inspired by the optimal strategy that maximizes the conditional von Neumann entropy as shown in \cite{PAB09}.

For these parameters we have that 
 \begin{eqnarray}
 H_2(A|XYE)_{\rho^*}\geq-\log \de{\frac{1}{2} +R\sqrt{\frac{1}{2}\de{\frac{1}{R^2}-1}}}
  \end{eqnarray}

Finally, we know from \cite{HHH95} that for a state $\rho_{AB}(\lambda_{00},\lambda_{01},\lambda_{10},\lambda_{11})$, the maximal violation $\beta_{\max}$ of the CHSH inequality is given by
\begin{eqnarray}
\beta_{\max}=\max \Big\{2\sqrt{2}&\sqrt{(\lambda_{00}-\lambda_{11})^2+(\lambda_{01}-\lambda_{10})^2},\\
& 2\sqrt{2}\sqrt{(\lambda_{00}-\lambda_{10})^2+(\lambda_{01}-\lambda_{11})^2}\Big\}\nonumber
\end{eqnarray}
and that this violation can be achieved with one of Alice's measurement being in the Z basis.

Therefore, for the state $\rho^*_{AB}$, specified by (\ref{eq:rhostar}), and Alice and Bob performing the measurements that gives the maximum violation achievable for the CHSH inequality, we have that $\beta=2\sqrt{2}R$. This implies 
 \begin{eqnarray}\label{eq:H2beta}
 H_2(A|XYE)_{\rho^*}=-\log \de{\frac{1}{2} + \frac{1}{2}\sqrt{2-\frac{\beta}{4}}}.
 \end{eqnarray}
\end{proof}

 \end{document}